\documentclass[times,authoryear,nopreprintline]{elsarticle}

\usepackage[
  a4paper,
  left=22mm,
  right=22mm,
  top=22mm,
  bottom=24mm
]{geometry}

\usepackage{amssymb}
\usepackage{latexsym}
\usepackage{amsmath}
\usepackage{amsthm}
\usepackage{graphicx}
\usepackage{subcaption}
\usepackage{float}
\usepackage{booktabs}
\usepackage[hidelinks]{hyperref}
\usepackage{comment}
\newtheorem{theorem}{Theorem}
\newtheorem{lemma}{Lemma}
\newtheorem{corollary}{Corollary}

\usepackage{enumitem}
\graphicspath{{./figures/}}

\begin{document}

\begin{frontmatter}

\title{Estimated-State Adaptive Sliding Mode Control and Disturbance Observation Using Second-Order Surfaces for Spacecraft Formation Reconfiguration}

\author[1]{Jaein Lee}
\ead{jaein828@yonsei.ac.kr}
\author[1]{Hancheol Cho\corref{cor1}}
\cortext[cor1]{Corresponding author:
  Tel.: +82-2-2123-2685;
  Fax: +82-2-392-7680;}
\ead{hancho37@yonsei.ac.kr}
\author[2]{Tiago Roux Oliveira}
\ead{tiagoroux@uerj.br}

\affiliation[1]{organization={Yonsei University},
                addressline={50 Yonsei-Ro, Seodaemun-Gu},
                city={Seoul},
                postcode={03722},
                country={Republic of Korea}}

\affiliation[2]{organization={State University of Rio de Janeiro},
                addressline={Rua São Francisco Xavier 524 - sala 5018E - Bloco E},
                city={Rio de Janeiro},
                postcode={20550-900},
                country={Brazil}}


\begin{abstract}

This paper presents a two-phase relative orbit control framework for
spacecraft formation flying in which an analytic energy-optimal transfer is combined with a robust adaptive-gain sliding mode tracking law. In the first phase, the chaser is transferred from an arbitrary initial relative state to a projected circular orbit (PCO) in the Local-Vertical, Local-Horizontal frame by using the Clohessy--Wiltshire dynamics and the analytic minimum-energy formulation of the nominal trajectory. The phase at which the PCO is entered is not selected by numerical sweeping; instead, the transfer cost is parameterized by the PCO phase angle and the stationarity condition is reduced to a quartic polynomial whose real roots provide all candidate transition instants. In the second phase, the chaser maintains the PCO against external disturbances. Although a nominal PCO requires no sustained thrust, the presence of perturbations necessitates active tracking control. An adaptive sliding mode controller (ASMC) augmented by a sliding mode disturbance observer (SMDO) is therefore employed across both phases: during the reaching phase to compensate for disturbances acting on
the energy-optimal transfer, and during the PCO maintenance phase to reject perturbations that would otherwise cause the chaser to drift from the nominal orbit. The observer reduces the lumped disturbance to a bounded residual, while the tracking layer uses an adaptive gain updated from an estimated-state second-order sliding variable.
This sliding surface is introduced to tighten the ultimate tracking error bound, and a practical derivative estimation method is proposed that reuses reference velocity and acceleration signals already generated in the baseline reference surface computation, thereby avoiding the noise amplification of finite differencing and the tuning burden of an external differentiator. Simulation results demonstrate the effectiveness of the integrated architecture in achieving accurate tracking, disturbance rejection, and smooth phase transition.

\end{abstract}

\begin{keyword}
Spacecraft formation flying \sep
Projected circular orbit \sep
Estimated-state feedback \sep
Second-order-surface-based adaptive sliding mode control \sep
Second-order-surface-based sliding mode disturbance observer \sep
Energy-optimal reconfiguration
\end{keyword}

\end{frontmatter}
\section{Introduction}
Satellite formation flying is an enabling technology for future distributed space missions, including Earth observation, space-based interferometry, distributed space telescopes, inspection, rendezvous, and on-orbit servicing (\cite{schaub2003analytical, alfriend2009spacecraft, d2010autonomous, kang2020nanosat}). In such missions, a chaser spacecraft is often required to reconfigure its relative motion with respect to a target spacecraft and settle onto a prescribed bounded relative orbit. Among various relative orbital geometries, the projected circular orbit (PCO) is particularly attractive because it provides a periodic bounded solution under the Clohessy-Wiltshire (CW) dynamics and can therefore serve as a natural reference for formation keeping (\cite{clohessy1960terminal, vallado2001fundamentals, bonin2015canx}). 

Although a PCO is passively maintained in the nominal CW model once the chaser is placed exactly on it (\cite{schaub2001j2}), the problem of reaching the PCO is not unique. A PCO is parameterized by its phase angle, and different entry phases generally lead to different transfer trajectories and different control effort from the same initial relative state. Therefore, the PCO entry phase is not merely a geometric parameter, but a mission-level design variable that directly determines the energy efficiency of the reconfiguration maneuver. Previous studies have investigated spacecraft formation reconfiguration using analytical optimal control formulations, generating functions, differential orbital elements, impulsive maneuver planning, and low-thrust or operational-constraint-aware methods (\cite{cho2009analytic, wang2011new, scala2021design, pippia2022reconfiguration}). However, in many formulations, the terminal relative state or target relative orbit is prescribed in advance, and the phase at which the chaser enters the PCO is not explicitly optimized as part of the minimum-energy design. This motivates an analytic PCO insertion strategy in which the entry phase is selected directly from the stationarity condition of the transfer cost.

In realistic space environments, the chaser dynamics are not governed exactly by the ideal CW model. Environmental perturbations, higher-order gravitational effects, actuator imperfections, navigation errors, and unmodeled dynamics can drive the chaser away from both the nominal transfer trajectory and the desired PCO. A practical PCO-based reconfiguration strategy should therefore address two coupled tasks. First, it should exploit the analytic structure of the CW dynamics to determine an energy-efficient PCO insertion point. Second, it should maintain robust closed-loop tracking performance in the presence of disturbances, uncertainties, and imperfect state information. The first task concerns nominal mission efficiency, whereas the second concerns closed-loop reliability and implementability. 

Sliding mode control (SMC) has been extensively studied as a robust control framework for uncertain nonlinear systems because of its strong disturbance rejection capability and insensitivity to matched uncertainties (\cite{utkin2013sliding,edwards1998sliding}). Nevertheless, conventional SMC often requires conservative switching gains based on prior uncertainty bounds and may suffer from chattering in practical implementation. To address these issues, adaptive, output-feedback, monitoring-based, and higher-order sliding mode methods have been developed. \cite{cho2020adaptive} proposed a smooth adaptive SMC scheme for disturbances with unknown bounds, and \cite{hsu2019adaptive} developed a monitoring-function-based adaptive unit-vector control method for multivariable systems. \cite{rodrigues2022adaptive} introduced an adaptive SMC framework with guaranteed performance using monitoring and barrier functions, while \cite{guo2022performance} developed a performance-guaranteed adaptive asymptotic tracking method for nonlinear systems with unknown sign-switching control direction. In output-feedback settings, \cite{oliveira2013peaking} proposed a peaking-free exact tracking approach using dwell-time and norm observers, and \cite{oliveira2014overcoming} applied SMC with a switching monitoring scheme to uncertain uncalibrated visual servoing. High-order sliding mode (HOSM) tools have also been used for differentiator-based implementation, as in the global HOSM differentiator-based generalized model-reference adaptive control of \cite{oliveira2018generalized}. 

Sliding mode control and disturbance-observer-based control have also been investigated specifically for spacecraft formation flying and relative-motion control. \cite{cho2016satellite} proposed a smooth adaptive sliding mode controller for satellite formation control, reducing the dependence on a priori uncertainty bounds while maintaining robustness against external disturbances. \cite{lin2020adaptive} developed an adaptive tracking controller for spacecraft formation flying using a modified fast integral terminal sliding mode surface and a reduced-order disturbance observer to handle disturbances and parameter uncertainties. \cite{lee2018nonlinear} introduced a nonlinear disturbance-observer-based robust control method for spacecraft formation flying, showing how disturbance estimation can improve tracking performance under nonvanishing perturbations. \cite{wu2020disturbance} proposed a disturbance-observer-based fixed-time sliding mode control method for spacecraft proximity operations with coupled dynamics. More recently, \cite{jeon2025adaptive} applied adaptive smooth control with nonsingular fast terminal sliding modes to a CubeSat formation flying mission for distributed space telescope demonstration. These studies show the effectiveness of adaptive sliding mode and disturbance-observer-based control in formation-flying applications. Nevertheless, most existing approaches employ the adaptive controller and disturbance observer as separate design elements, and the observer-controller interconnection is often treated implicitly. In addition, although second-order and higher-order sliding mode methods have been widely used in spacecraft attitude control, their use in estimated-state observer-controller architectures for PCO-based relative orbit reconfiguration remains comparatively limited.

Several practical issues also remain in applying sliding mode and observer-based methods to PCO reconfiguration. First, many sliding mode control laws require a priori knowledge of disturbance upper bounds to select sufficiently large switching or adaptive gains. Overestimating these bounds can increase control effort and chattering, whereas underestimating them may degrade robustness. Second, the resulting tracking-error bound is not always explicitly related to mission-level accuracy requirements. Third, high-order sliding mode and observer-based methods often require derivatives of sliding variables. Direct finite differencing amplifies measurement noise, while external dynamic differentiators introduce additional tuning and may suffer from phase lag or noise sensitivity under navigation-error-contaminated measurements. These issues motivate a control architecture that combines assignable practical error bounds, adaptive disturbance rejection, estimated-state feedback, and an implementation method that avoids algebraic-loop and noisy-differentiation problems.

This paper proposes a two-phase relative orbit control framework that combines analytic minimum-energy PCO insertion with an auxiliary-state second-order sliding mode disturbance observer and adaptive sliding mode controller. In the reaching phase, the chaser is transferred from an arbitrary initial relative state to a prescribed PCO within a fixed transfer time. The PCO entry phase is treated as an optimization variable. By parameterizing the minimum-energy transfer cost with respect to the PCO phase angle, the stationarity condition is reduced to a quartic polynomial. The real roots of this polynomial provide all candidate PCO entry phases, and the minimum-energy insertion point is selected by evaluating the original transfer cost over this finite candidate set. After insertion, the reference switches to the analytical PCO trajectory, the nominal feedforward input is removed, and the same robust observer-controller compensation structure is retained for PCO maintenance.

The feedback architecture consists of two robust layers integrated through an auxiliary estimated state. The sliding mode disturbance observer (SMDO) uses a second-order observer surface to reduce the effective lumped disturbance to a bounded residual. The adaptive sliding mode controller (ASMC) uses a second-order estimated-state tracking surface to regulate the auxiliary-state tracking error. Both layers share an adaptive boundary layer structure, so the gains increase when the corresponding second-order surface lies outside its prescribed boundary layer and relax after the surface enters the layer. The auxiliary-state formulation induces a triangular observer-controller structure; the SMDO bounds the observer error, the ASMC bounds the estimated tracking error, and the actual tracking error is recovered algebraically from their sum. This provides a practical boundedness result without requiring a full composite Lyapunov proof for a coupled observer-controller system.

For implementation, this paper also introduces a surrogate derivative generator for evaluating the second-order sliding surfaces. Direct computation of the required surface derivatives can create an implicit algebraic loop because the acceleration-level quantities depend on the control commands being evaluated. Numerical differentiation is also undesirable because it amplifies navigation noise. To avoid these issues, a parallel first-order surrogate branch is used to generate loop-free acceleration-level surrogate signals, while velocity-level terms are evaluated from available measured or estimated velocities and analytic reference signals. The branch is used only for derivative generation; its control commands are not applied to the actual closed-loop plant.

The main contributions of this paper are summarized as follows.
\begin{itemize}
\item An analytic PCO entry phase selection method is developed for minimum-energy formation reconfiguration. By parameterizing the transfer cost with respect to the PCO phase angle, the stationarity condition is reduced to a quartic polynomial whose real roots provide all candidate entry phases. The minimum-energy PCO insertion point is then selected by evaluating the original cost over this finite candidate set, avoiding numerical phase sweeping.
\item An auxiliary-state second-order SMDO-ASMC architecture is proposed for robust two-phase PCO reconfiguration. The proposed structure integrates a second-order sliding mode disturbance observer and a second-order adaptive sliding mode tracking controller through an auxiliary estimated state. The same compensation architecture is retained in both the reaching and PCO-maintenance phases, while the nominal feedforward input is used only during the reaching phase and removed during the PCO phase.
\item A parameter-explicit triangular boundedness analysis is established for the observer-controller interconnection. Owing to the auxiliary-state formulation, the observer-error dynamics and the estimated tracking-error dynamics admit a triangular structure: the SMDO bounds the observer error, the ASMC bounds the estimated tracking error, and the actual tracking error is obtained algebraically from their sum. This analysis provides theoretical and practical upper bounds for the observer error, estimated tracking error, and actual tracking error, which can be adjusted in advance through the boundary layer widths and surface parameters.
\item A surrogate derivative generator is introduced for practical implementation of the second-order sliding surfaces. A parallel first-order surrogate branch generates loop-free acceleration-level signals for surface evaluation, while avoiding finite-difference noise amplification, external differentiator tuning, and algebraic-loop issues caused by acceleration-level quantities that depend on the control commands being evaluated.
\end{itemize}

The remainder of this paper is organized as follows. Section~\ref{sec:system_modeling_nominal_design} presents the relative motion model, the analytic minimum-energy transfer formulation, and the PCO entry phase selection method. Section~\ref{sec:adaptive_asmc_smdo} develops the auxiliary-state SMDO-ASMC control architecture. Section~\ref{sec:boundedness_stability} provides the boundedness analysis and derives the ultimate tracking-error bounds. Section~\ref{sec:implementation_summary} describes the surrogate derivative generator used to implement the second-order surfaces. Section~\ref{sec:numerical_results} presents numerical simulations, including baseline closed-loop performance and navigation-error cases. Section~\ref{sec:conclusion} concludes the paper.

\section{System Modeling and Nominal Trajectory Design}
\label{sec:system_modeling_nominal_design}

\subsection{Relative Motion Model}
\label{subsec:relative_motion_model}

The relative motion between the target spacecraft (target) and the chaser spacecraft (chaser) is described in the Local Vertical Local Horizontal (LVLH) frame attached to the target. The origin of this frame coincides with the target center of mass; the $x$-axis points radially outward from the Earth center through the target, the $z$-axis points along the orbital angular momentum vector, and the $y$-axis completes the right-handed triad. The chaser position and velocity relative to the target are denoted by
\begin{equation}
\boldsymbol{r}(t)=[x(t)\ y(t)\ z(t)]^{\top}, \qquad
\boldsymbol{v}(t)=\dot{\boldsymbol{r}}(t)
=[\dot{x}(t)\ \dot{y}(t)\ \dot{z}(t)]^{\top}.
\end{equation}

When the target follows a circular reference orbit with constant mean motion $n$, and the relative separation is small enough for linearization, the Hill equations reduce to the Clohessy--Wiltshire (CW)
model~\cite{clohessy1960terminal,schaub2003analytical}:
\begin{equation}
\ddot{x}-2n\dot{y}-3n^2x=u_x, \qquad
\ddot{y}+2n\dot{x}=u_y, \qquad
\ddot{z}+n^2z=u_z .
\label{eq:cw_component}
\end{equation}
Here $\boldsymbol{u}(t)=[u_x\ u_y\ u_z]^{\top}$ is the commanded specific acceleration. The in-plane dynamics ($x$--$y$) are coupled through the Coriolis terms $\pm 2n\dot{(\cdot)}$, while the out-of-plane component ($z$) is decoupled and behaves as a simple harmonic oscillator at frequency $n$.
Nominal control effectiveness is taken as unity and absorbed into
$\boldsymbol{u}(t)$; actuator imperfections and other effectiveness errors enter later as part of the lumped disturbance.

Defining the state vector
\begin{equation}
\boldsymbol{\xi}(t)=
\begin{bmatrix} \boldsymbol{r}(t) \\ \boldsymbol{v}(t) \end{bmatrix}
=[x\ y\ z\ \dot{x}\ \dot{y}\ \dot{z}]^{\top},
\end{equation}
the CW equations take the linear time-invariant state-space form
\begin{equation}
\dot{\boldsymbol{\xi}}(t)=A\boldsymbol{\xi}(t)+B\boldsymbol{u}(t),
\label{eq:ss_cw}
\end{equation}
with
\begin{equation}
A = \begin{bmatrix} 0_{3\times3} & I_{3\times3} \\ A_1 & A_2 \end{bmatrix},
\qquad
B = \begin{bmatrix} 0_{3\times3} \\ I_{3\times3} \end{bmatrix},
\end{equation}
\begin{equation}
A_1 = \begin{bmatrix} 3n^2 & 0 & 0 \\ 0 & 0 & 0 \\ 0 & 0 & -n^2 \end{bmatrix},
\qquad
A_2 = \begin{bmatrix} 0 & 2n & 0 \\ -2n & 0 & 0 \\ 0 & 0 & 0 \end{bmatrix}.
\label{eq:a_matrices}
\end{equation}
Since $A$ is constant, the homogeneous motion is generated by the matrix
exponential $\Phi(t)=\exp(At) \in \mathbb{R}^{6 \times 6}$, which can be partitioned as
\begin{equation}
\Phi(t)=\begin{bmatrix}\Phi_A(t)\\ \dot{\Phi}_A(t)\end{bmatrix},
\end{equation}
where $\Phi_A(t) \in \mathbb{R}^{3 \times 6}$ maps an initial state to the relative position and
$\dot{\Phi}_A(t) \in \mathbb{R}^{3 \times 6}$ maps it to the relative velocity:
\begin{equation}
\boldsymbol{r}(t)=\Phi_A(t)\boldsymbol{\xi}(0), \qquad
\boldsymbol{v}(t)=\dot{\Phi}_A(t)\boldsymbol{\xi}(0).
\end{equation}
The explicit form of $\Phi(t)$ is given in ~\ref{sec:appendix_nominal_trajectory}.

\subsection{Problem Statement and Two-Phase Mission Architecture}
\label{subsec:problem_two_phase_architecture}

In practice the chaser dynamics are not governed by the ideal CW model.
Environmental perturbations, higher-order gravitational terms, actuator imperfections, and other unmodeled effects all contribute to an unknown acceleration-level disturbance. Letting $\boldsymbol{q}(t)=\boldsymbol{r}(t)$ denote the measured relative position vector, the actual dynamics are
\begin{equation}
\ddot{\boldsymbol{q}}(t)
=
\boldsymbol{a}(t)
+
\boldsymbol{u}(t)
+
\boldsymbol{d}(t),
\label{eq:real_dyn}
\end{equation}
where $\boldsymbol{a}(t)$ denotes the nominal CW acceleration predicted by the model and $\boldsymbol{u}(t)$ is the control input designed to ensure tracking of the desired reference in the presence of $\boldsymbol{d}(t)$,
a lumped disturbance that accounts for all unmodeled effects.

The control objective is to achieve an energy-optimal transfer of the chaser from an arbitrary initial relative state $\boldsymbol{\xi}(0)=\boldsymbol{\xi}_0$
to a Projected Circular Orbit (PCO) of radius $\rho$ within a prescribed reaching time $t_f$, and subsequently to maintain that PCO despite $\boldsymbol{d}(t)$. A PCO is an elliptical relative trajectory in the LVLH
frame, characterized by a 2:1 aspect ratio in the $x$--$y$ plane and a circular projection of radius $\rho$ in the $y$--$z$ plane~\cite{sabol2001satellite}; in the absence of disturbances it is sustained passively by the CW dynamics, making it a natural formation-flying
reference.

The mission is organized into two phases. During the \textit{Reaching Phase}, $0\leq t\leq t_f$, an energy-optimal feedforward input $\boldsymbol{u}_r(t)$
is computed analytically under the disturbance-free CW model and drives the chaser toward a selected point on the PCO. During the \textit{PCO Phase}, $t>t_f$, the reference switches to the analytical PCO trajectory and $\boldsymbol{u}_r$ is removed, since the PCO is a homogeneous solution of the nominal CW dynamics that requires no continuous forcing in the absence of disturbances. This two-phase architecture decouples the nominal geometric problem---identifying the PCO entry point that minimizes transfer cost---from the robust tracking problem of following the reference when $\boldsymbol{d}(t)\neq\boldsymbol{0}$. Since $\boldsymbol{u}_r(t)$ alone is insufficient to compensate for the disturbance in Eq.~\eqref{eq:real_dyn} in both phases, a feedback layer is essential; its design is detailed in Section~\ref{sec:adaptive_asmc_smdo}.

The commanded input $\boldsymbol{u}(t)$ in Eq.~\eqref{eq:real_dyn} is decomposed as
\begin{equation}
\boldsymbol{u}(t)
=
\boldsymbol{u}_r(t)
+\hat{\boldsymbol{u}}_{\mathrm{asmc}}(t)
+\hat{\boldsymbol{u}}_{\mathrm{smdo}}(t),
\label{eq:total_control_input}
\end{equation}
where the hat notation $(\,\hat{\cdot}\,)$ indicates that the corresponding signal is evaluated using the estimated state $(\hat{\boldsymbol{q}}(t),\dot{\hat{\boldsymbol{q}}}(t))$ generated by an auxiliary observer system designed in Section~\ref{sec:adaptive_asmc_smdo}, rather than the directly measured state $(\boldsymbol{q}(t),\dot{\boldsymbol{q}}(t))$. The feedforward input $\boldsymbol{u}_r(t)$ is pre-computed from the nominal disturbance-free CW model and carries no hat, as it depends only on the analytically designed reference trajectory and not on the current state estimate. The ASMC tracking command $\hat{\boldsymbol{u}}_{\mathrm{asmc}}$ and the SMDO compensation command $\hat{\boldsymbol{u}}_{\mathrm{smdo}}$ are both evaluated using $(\hat{\boldsymbol{q}}(t),\dot{\hat{\boldsymbol{q}}}(t))$.
The phase-dependent feedforward is
\begin{equation}
\boldsymbol{u}_r(t)
=
\begin{cases}
\boldsymbol{u}_r(t), & \text{Reaching Phase}\ (0\leq t\leq t_f),\\[2pt]
\boldsymbol{0},      & \text{PCO Phase}\ (t>t_f).
\end{cases}
\label{eq:phase_dependent_feedforward}
\end{equation}

Throughout this paper, two norms are used for different purposes.
The Euclidean norm $\|\boldsymbol{x}\|
:=\bigl(\sum_i x_i^2\bigr)^{1/2}$ is used for vector-valued sliding
variables and Lyapunov function arguments, as it is the natural norm for
the boundary-layer analysis in the stability proofs. The infinity norm
$\|\boldsymbol{x}\|_\infty:=\max_i|x_i|$ is used when componentwise
ultimate bounds are obtained from a scalar Lyapunov argument applied to
each component independently. The two norms are equivalent in finite dimensions and
satisfy $\|\boldsymbol{x}\|_\infty\leq\|\boldsymbol{x}\|\leq
\sqrt{n}\|\boldsymbol{x}\|_\infty$ for $\boldsymbol{x}\in\mathbb{R}^n$.

The resulting control allocation in each phase is summarized in
Table~\ref{tab:phase_control_allocation}, and the corresponding mission
sequence is illustrated in Fig.~\ref{fig:mission_phase_distribution}.

\begin{table}[h]
\centering
\caption{Control allocation in the two mission phases.}
\label{tab:phase_control_allocation}
\begin{tabular}{lll}
\toprule
Mission phase & Time interval & Applied control input \\
\midrule
Reaching Phase & $0\leq t\leq t_f$ &
  $\boldsymbol{u}_r
  +\hat{\boldsymbol{u}}_{\mathrm{asmc}}
  +\hat{\boldsymbol{u}}_{\mathrm{smdo}}$ \\[2pt]
PCO Phase      & $t>t_f$           &
  $\hat{\boldsymbol{u}}_{\mathrm{asmc}}
  +\hat{\boldsymbol{u}}_{\mathrm{smdo}}$ \\
\bottomrule
\end{tabular}
\end{table}

\begin{figure}[ht]
\centering
\includegraphics[width=0.5\textwidth]{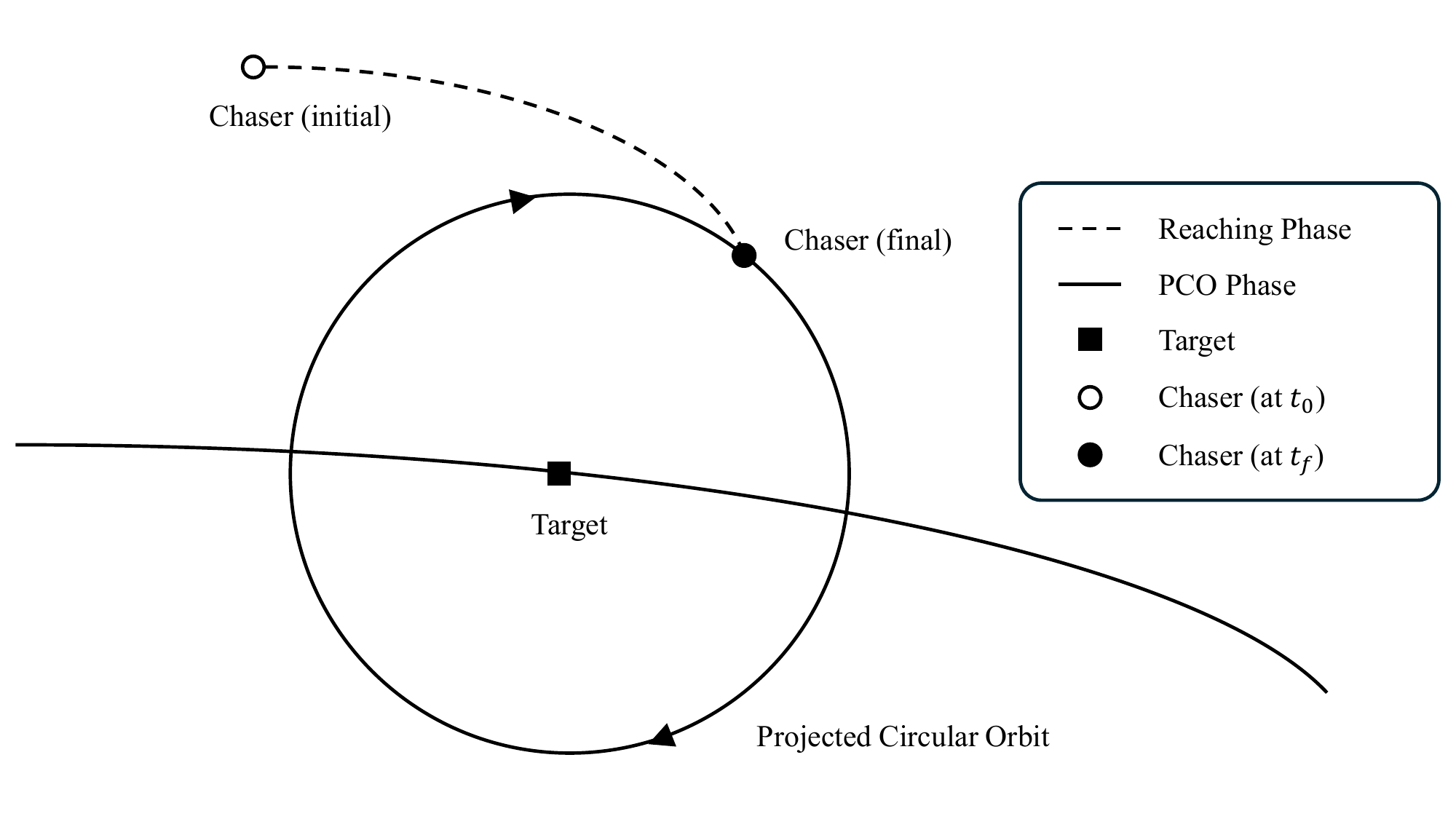}
\caption{Two-phase mission architecture. The chaser transfers from its
initial state at $t_0$ to the selected PCO insertion point at $t_f$
(Reaching Phase) and subsequently maintains the PCO around the target
spacecraft (PCO Phase).}
\label{fig:mission_phase_distribution}
\end{figure}

\subsection{Nominal Reaching Trajectory Design}
\label{subsec:nominal_reaching_design}

For the nominal reaching problem, the disturbance term in Eq.~\eqref{eq:real_dyn}
is set to zero, and the transfer effort is measured by
\begin{equation}
J = \frac{1}{2}\int_{0}^{t_f}
\boldsymbol{u}_r(\tau)^{\top}\boldsymbol{u}_r(\tau)\,d\tau .
\label{eq:cost_J}
\end{equation}
For a prescribed terminal state $\boldsymbol{\xi}_f$ and transfer time $t_f$,
the analytic energy-optimal solution of (\cite{cho2009analytic}) is constructed from the upper block $\Phi_A(t)$ of the CW state transition matrix. The key quantities are
\begin{align}
S(t) &= \int_{0}^{t} \Phi_A(\tau)^{\top} \Phi_A(\tau) \, d\tau,
\label{eq:S_def} \\
G &= \Phi_f^{-1}\boldsymbol{\xi}_f - \Phi_0^{-1}\boldsymbol{\xi}_0,
\label{eq:G_def} \\
C_{\Phi} &= \Phi_A^{\top}\dot{\Phi}_A
- \left(\Phi_A^{\top}\dot{\Phi}_A\right)^{\top}
- \Phi_A^{\top}A_2\Phi_A,
\label{eq:Cphi_def}
\end{align}
where $\Phi_0=\Phi(0)$ and $\Phi_f=\Phi(t_f)$. Here $S(t)$ is a positive definite Gramian-like matrix that accumulates the reachable input directions up to time $t$; $G$ encodes the boundary conditions by comparing the transformed terminal and initial states; and $C_{\Phi}$, which is constant and skew-symmetric for the CW model, mediates the coupling between position and velocity in the optimality conditions. The nominal reaching input minimizing the transfer cost Eq.~\eqref{eq:cost_J}, the associated transfer cost, and the reference state trajectory are then (\cite{cho2009analytic})
\begin{align}
\boldsymbol{u}_r(t) &= \Phi_A(t) S_f^{-1} C_{\Phi} G,
\label{eq:ur} \\
J &= \frac{1}{2} G^{\top} C_{\Phi}^{\top} S_f^{-1} C_{\Phi} G,
\label{eq:J_opt} \\
\boldsymbol{\xi}_n(t) &= \Phi(t)
\left(\Phi_0^{-1}\boldsymbol{\xi}_0
+ C_{\Phi}^{-1} S(t) S_f^{-1} C_{\Phi} G\right),
\label{eq:xi_n}
\end{align}
where $S_f=S(t_f)$. The position component of $\boldsymbol{\xi}_n(t)$, denoted $\boldsymbol{q}_n(t)$, serves as the reference tracked by the feedback controller during the Reaching Phase. A derivation of Eqs.~\eqref{eq:ur}--\eqref{eq:xi_n} from the optimality conditions is given in
~\ref{sec:appendix_nominal_trajectory}.

\subsection{Analytical Selection of the PCO Entry Phase}
\label{subsec:pco_entry_phase_selection}

The terminal point of the Reaching Phase on the PCO is not fixed a priori but is parameterized by the entry phase angle $\phi_0$. A point on the PCO has the following state:
\begin{equation}
\boldsymbol{\xi}_f(\phi_0) =
\begin{bmatrix}
\dfrac{\rho}{2}\sin(n t_f + \phi_0) \\[6pt]
\rho \cos(n t_f + \phi_0) \\[4pt]
\rho \sin(n t_f + \phi_0) \\[4pt]
\dfrac{n\rho}{2}\cos(n t_f + \phi_0) \\[6pt]
-n\rho \sin(n t_f + \phi_0) \\[4pt]
 n\rho \cos(n t_f + \phi_0)
\end{bmatrix},
\label{eq:xi_f_phi}
\end{equation}
so that substituting Eq.~\eqref{eq:xi_f_phi} into $G$ makes the transfer cost in Eq.~\eqref{eq:J_opt} a scalar $2\pi$-periodic function $J(\phi_0)$. Choosing $\phi_0$ to minimize $J(\phi_0)$ therefore determines the energy-optimal PCO entry point without altering the transfer time $t_f$.

Because $J(\phi_0)$ is smooth and periodic, its global minimum over
$[0,2\pi)$ is attained at a stationary point satisfying $dJ/d\phi_0=0$.
To solve this condition in closed form, the positive-definite matrix
$C_{\Phi}^{\top}S_f^{-1}C_{\Phi}$ is factored via Cholesky decomposition as
\begin{equation}
C_{\Phi}^{\top} S_f^{-1} C_{\Phi} = R^{\top} R,
\label{eq:cholesky}
\end{equation}
so that $J(\phi_0)=\frac{1}{2}\|RG(\phi_0)\|^2$ where $R$ is an upper triangular matrix. Each component of $RG$ is then an affine sinusoidal function of the form
\begin{equation}
f_k(\phi_0)=a_k\sin\phi_0+b_k\cos\phi_0+c_k,
\qquad k=1,\ldots,6.
\label{eq:fk}
\end{equation}
Differentiating $J=\frac{1}{2}\sum_k f_k^2$ and setting the result to zero yields
\begin{equation}
\frac{dJ}{d\phi_0}
= \sum_{k=1}^{6} f_k(\phi_0)\,f'_k(\phi_0) = 0, \qquad
f'_k = a_k\cos\phi_0 - b_k\sin\phi_0,
\label{eq:dJ_dphi0}
\end{equation}
where $f'_k$ denotes differentiation of $f_k$ with respect to $\phi_0$. After expanding and collecting double-angle and single-angle terms, Eq.~\eqref{eq:dJ_dphi0} becomes
\begin{equation}
A_q \sin 2\phi_0 + 2B_q \cos 2\phi_0
+ 2C_q \cos \phi_0 - 2D_q \sin \phi_0 = 0,
\label{eq:stationarity}
\end{equation}
with
\begin{equation}
A_q = \sum_{k=1}^{6}(a_k^2-b_k^2), \quad
B_q = \sum_{k=1}^{6} a_k b_k, \quad
C_q = \sum_{k=1}^{6} a_k c_k, \quad
D_q = \sum_{k=1}^{6} b_k c_k .
\label{eq:ABCD_q}
\end{equation}
Applying the Weierstrass substitution $\tau=\tan(\phi_0/2)$ converts
Eq.~\eqref{eq:stationarity} into the degree-four polynomial
\begin{equation}
(B_q-C_q)\tau^4
+(-2A_q-2D_q)\tau^3
-6B_q\tau^2
+(2A_q-2D_q)\tau
+(B_q+C_q)=0,
\label{eq:phase_quartic_polynomial}
\end{equation}
whose finite real roots are mapped back to phase candidates by
$\phi_{0,i}=\operatorname{mod}(2\arctan\tau_i,\,2\pi), i=1,2,3,4$. Since $\tau=\infty$
corresponds to $\phi_0=\pi$, the point $\phi_0=\pi$ is a stationary
candidate if and only if the leading coefficient vanishes, i.e.\
$B_q-C_q=0$. If all coefficients vanish simultaneously, the cost is
constant and any entry phase is equally energy-optimal. The handling of these
special cases is summarized in~\ref{sec:appendix_nominal_trajectory},
Table~\ref{tab:phi0_case_summary}.
Because the quartic roots include both minima and maxima, the selected entry
phase is finally determined by directly evaluating the original cost over the
admissible candidate set $\mathcal{C}$:
\begin{equation}
\phi_0^{\star}=\arg\min_{\phi_0\in\mathcal{C}}J(\phi_0).
\label{eq:phi0_star}
\end{equation}

\subsection{Projected Circular Orbit Reference}
\label{subsec:pco_reference}

With the optimal $\phi_0^{\star}$ determined by Eq.~\eqref{eq:phi0_star}, the desired PCO trajectory in the LVLH frame is
\begin{equation}
\boldsymbol{q}_d(t)=
\begin{bmatrix}
\dfrac{\rho}{2}\sin(nt+\phi_0^{\star}) \\[4pt]
\rho\cos(nt+\phi_0^{\star}) \\[4pt]
\rho\sin(nt+\phi_0^{\star})
\end{bmatrix},
\label{eq:q_d}
\end{equation}
with velocity and acceleration obtained by differentiation. In the
undisturbed CW model a chaser placed exactly on Eq.~\eqref{eq:q_d} remains on
it without additional control effort, which is precisely why the PCO is
chosen as the formation-keeping reference.

The phase-dependent nominal reference commanded to the feedback controller
is
\begin{equation}
\boldsymbol{q}_n(t)=
\begin{cases}
\begin{bmatrix}I_{3\times3} & 0_{3\times3}\end{bmatrix}
\boldsymbol{\xi}_n(t), & 0\leq t\leq t_f, \\[4pt]
\boldsymbol{q}_d(t), & t>t_f .
\end{cases}
\label{eq:qn_ref}
\end{equation}
During the Reaching Phase, $\boldsymbol{q}_n(t)$ is the position component
of the energy-optimal nominal trajectory $\boldsymbol{\xi}_n(t)$ from
Eq.~\eqref{eq:xi_n}. During the PCO Phase, $\boldsymbol{q}_n(t)$ switches
continuously to the analytical PCO trajectory Eq.~\eqref{eq:q_d}.

The nominal design provides $\boldsymbol{q}_n(t)$ and $\boldsymbol{u}_r(t)$
under the idealized disturbance-free assumption. When $\boldsymbol{d}(t)$
is present, however, the chaser will deviate from this reference and
$\boldsymbol{u}_r(t)$ alone cannot restore tracking. The following section
therefore introduces a feedback architecture that augments the feedforward
input with a Adaptive Sliding Mode Controller (ASMC) for
robust tracking and a Sliding Mode Disturbance Observer (SMDO) for
online estimation and compensation of $\boldsymbol{d}(t)$.

\section{SMDO--ASMC Control Design Based on Second-Order Surfaces Using the Estimated State}
\label{sec:adaptive_asmc_smdo}

The robust feedback law developed in this section augments the nominal
feedforward $\boldsymbol{u}_r(t)$ of Section~\ref{sec:system_modeling_nominal_design}
with two complementary layers: a Sliding Mode Disturbance Observer
(SMDO) that estimates and compensates for the lumped disturbance, and an Adaptive Sliding Mode Controller (ASMC) that tracks the nominal reference using the
estimated state. Both layers share the same adaptive boundary-layer architecture based on second-order surfaces, which is analyzed in Section~\ref{sec:boundedness_stability}.

\subsection{SMDO for Disturbance Compensation via Second-Order Sliding Surfaces}
\label{subsec:second_order_smdo}

The actual acceleration-level dynamics are
\begin{equation}
\ddot{\boldsymbol{q}}
=
\boldsymbol{a}
+
\boldsymbol{u}_r
+
\hat{\boldsymbol{u}}_{\mathrm{asmc}}
+
\hat{\boldsymbol{u}}_{\mathrm{smdo}}
+
\boldsymbol{d},
\label{eq:actual_normalized_dynamics}
\end{equation}
where $\boldsymbol{a}$ is the nominal CW acceleration evaluated along the actual
state, $\boldsymbol{d}$ is the lumped disturbance, and hat notation indicates that
the corresponding control signal is computed using the estimated state
$\hat{\boldsymbol{q}}$ rather than the measured state $\boldsymbol{q}$. The estimated state is generated by the auxiliary system
\begin{equation}
\ddot{\hat{\boldsymbol{q}}}
=
\hat{\boldsymbol{a}}
+
\boldsymbol{u}_r
+
\hat{\boldsymbol{u}}_{\mathrm{asmc}},
\label{eq:auxiliary_normalized_dynamics}
\end{equation}
where $\hat{\boldsymbol{a}}$ is the nominal acceleration evaluated along
$\hat{\boldsymbol{q}}$. The observer error and model-acceleration mismatch are defined as
\begin{equation}
\tilde{\boldsymbol{q}}=\boldsymbol{q}-\hat{\boldsymbol{q}},
\qquad
\tilde{\boldsymbol{a}}=\boldsymbol{a}-\hat{\boldsymbol{a}}.
\label{eq:model_acceleration_error}
\end{equation}
Subtracting Eq.~\eqref{eq:auxiliary_normalized_dynamics} from
Eq.~\eqref{eq:actual_normalized_dynamics}, we obtain
\begin{equation}
\ddot{\tilde{\boldsymbol{q}}}
=
\tilde{\boldsymbol{a}}
+
\boldsymbol{d}
+
\hat{\boldsymbol{u}}_{\mathrm{smdo}}.
\label{eq:observer_error_dynamics}
\end{equation}

In this paper the SMDO is built on the baseline observer surface
\begin{equation}
\hat{\boldsymbol{s}}_1
=
\dot{\tilde{\boldsymbol{q}}}
+
\lambda_1\tilde{\boldsymbol{q}},
\qquad \lambda_1>0,
\label{eq:smdo_first_surface}
\end{equation}
and the second-order observer sliding surface
\begin{equation}
\hat{\boldsymbol{s}}_2
=
\dot{\hat{\boldsymbol{s}}}_1
+
\lambda_2
\operatorname{sgn}\!\left(\hat{\boldsymbol{s}}_1\right)
\left|\hat{\boldsymbol{s}}_1\right|^{\nu_{\mathrm{smdo}}},
\qquad
\lambda_2>0,
\quad
0<\nu_{\mathrm{smdo}}<1,
\label{eq:smdo_second_surface}
\end{equation}
where $\operatorname{sgn}(\cdot)$ and $|\cdot|^{\nu_{\mathrm{smdo}}}$ are applied componentwise. The selection of surfaces, such as those in Eqs.~\eqref{eq:smdo_first_surface} and~\eqref{eq:smdo_second_surface}, is a design choice that significantly affects observer/control performance, including the error convergence rate and the upper bound on the error magnitude.

To identify the cancellation component, differentiate
Eq.~\eqref{eq:smdo_first_surface} and use
Eq.~\eqref{eq:observer_error_dynamics} to obtain
\begin{equation}
\dot{\hat{\boldsymbol{s}}}_1
=
\ddot{\tilde{\boldsymbol{q}}}
+
\lambda_1\dot{\tilde{\boldsymbol{q}}}
=
\tilde{\boldsymbol{a}}
+
\lambda_1\dot{\tilde{\boldsymbol{q}}}
+
\boldsymbol{d}
+
\hat{\boldsymbol{u}}_{\mathrm{smdo}}.
\label{eq:smdo_sdot_expanded}
\end{equation}
Substituting Eq.~\eqref{eq:smdo_sdot_expanded} into
Eq.~\eqref{eq:smdo_second_surface} gives
\begin{align}
\hat{\boldsymbol{s}}_2
={}&
\tilde{\boldsymbol{a}}
+
\lambda_1\dot{\tilde{\boldsymbol{q}}}
+
\lambda_2
\operatorname{sgn}(\hat{\boldsymbol{s}}_1)
|\hat{\boldsymbol{s}}_1|^{\nu_{\mathrm{smdo}}}
+
\boldsymbol{d}
+
\hat{\boldsymbol{u}}_{\mathrm{smdo}}.
\label{eq:smdo_s2_expanded}
\end{align}

The first three terms in Eq.~\eqref{eq:smdo_s2_expanded} are computable
from the observer states and known model quantities. The SMDO compensation
is therefore decomposed as
\begin{equation}
\hat{\boldsymbol{u}}_{\mathrm{smdo}}
=
\hat{\boldsymbol{u}}_{\mathrm{smdo}}^{\mathrm{can}}
+
\hat{\boldsymbol{u}}_{\mathrm{smdo}}^{\mathrm{ad}},
\label{eq:smdo_input_decomposition}
\end{equation}
where the cancellation component removes the nominally computable 
model- and surface-dependent terms, whereas the adaptive component 
dynamically regulates $\hat{\boldsymbol{s}}_2$ against residual 
uncertainties. The cancellation component is selected as
\begin{equation}
\hat{\boldsymbol{u}}_{\mathrm{smdo}}^{\mathrm{can}}
=
-\tilde{\boldsymbol{a}}
-
\lambda_1\dot{\tilde{\boldsymbol{q}}}
-
\lambda_2
\operatorname{sgn}(\hat{\boldsymbol{s}}_1)
|\hat{\boldsymbol{s}}_1|^{\nu_{\mathrm{smdo}}}.
\label{eq:smdo_cancellation_component}
\end{equation}
Substituting Eq.~\eqref{eq:smdo_cancellation_component} into
Eq.~\eqref{eq:smdo_s2_expanded} eliminates the nominally computable
terms. In practice, however, the cancellation relies on estimated
states and nominal model information; modeling errors and estimation
imperfections introduce an implementation-level bounded residual $\boldsymbol{\Delta}_{can}$,
so that
\begin{equation}
    \hat{\boldsymbol{s}}_2
    =
    \boldsymbol{d} + \boldsymbol{\Delta}_{can}
    +
    \hat{\boldsymbol{u}}_{\mathrm{smdo}}^{\mathrm{ad}}
    =:
    \boldsymbol{d}_{\mathrm{eff}}
    +
    \hat{\boldsymbol{u}}_{\mathrm{smdo}}^{\mathrm{ad}},
    \label{eq:smdo_s2_residual_relation}
\end{equation}
where
\begin{equation}
    \boldsymbol{d}_{\mathrm{eff}} := \boldsymbol{d} + \boldsymbol{\Delta}_{can}
    \label{eq:deff_definition}
\end{equation}
is the effective lumped disturbance. The residual
\begin{equation}
    \tilde{\boldsymbol{d}}_{\mathrm{smdo}}
    :=
    \boldsymbol{d}_{\mathrm{eff}}
    +
    \hat{\boldsymbol{u}}_{\mathrm{smdo}}^{\mathrm{ad}}
    \label{eq:smdo_residual_definition}
\end{equation}
is ultimately bounded, as established in 
Theorem~\ref{thm:smdo_residual_boundedness}.

The adaptive-component rate is selected as
\begin{equation}
\dot{\hat{\boldsymbol{u}}}_{\mathrm{smdo}}^{\mathrm{ad}}
=
-\frac{L(t)+L^*}{\epsilon}
\hat{\boldsymbol{s}}_2,
\qquad
L^*>0,
\quad
\epsilon>0,
\label{eq:smdo_adaptive_injection}
\end{equation}
and the adaptive gain is updated according to
\begin{equation}
\dot{L}(t)
=
\eta
\left[
L(t)
\left(
\frac{\|\hat{\boldsymbol{s}}_2\|}{\epsilon}
-1
\right)
+
\frac{L^*}{\epsilon}
\|\hat{\boldsymbol{s}}_2\|
\right],
\qquad
\eta>0,
\quad
L(0)>0.
\label{eq:smdo_adaptive_gain_update}
\end{equation}
The gain-update rule follows the adaptive boundary-layer mechanism of~\cite{cho2020autonomous}, while the adaptive input is
implemented here in an integral form through
$\dot{\hat{\boldsymbol{u}}}_{\mathrm{smdo}}^{\mathrm{ad}}$ so that it acts
on $\hat{\boldsymbol{s}}_2$. Thus, the
adaptive component is generated dynamically by integrating the second-order surface $\hat{\boldsymbol{s}}_2$,
rather than being prescribed as an algebraic
function of $\hat{\boldsymbol{s}}_1$. The gain remains positive under the
conditions of Lemma~\ref{lem:generic_gain_upper_bound} in
Section~\ref{sec:boundedness_stability}. It increases while
$\hat{\boldsymbol{s}}_2$ lies outside the $\epsilon$-boundary layer and
relaxes after $\hat{\boldsymbol{s}}_2$ enters the prescribed layer.

Integrating Eq.~\eqref{eq:smdo_adaptive_injection} and applying the decomposition
$\hat{\boldsymbol{u}}_{\mathrm{smdo}}
=\hat{\boldsymbol{u}}_{\mathrm{smdo}}^{\mathrm{can}}+\hat{\boldsymbol{u}}_{\mathrm{smdo}}^{\mathrm{ad}}$, we have
\begin{equation}
\hat{\boldsymbol{u}}_{\mathrm{smdo}}(t)
=
\hat{\boldsymbol{u}}_{\mathrm{smdo}}^{\mathrm{can}}(t)
+
\hat{\boldsymbol{u}}_{\mathrm{smdo}}^{\mathrm{ad}}(0)
-
\int_0^t
\frac{L(\tau)+L^*}{\epsilon}
\hat{\boldsymbol{s}}_2(\tau)\,d\tau.
\label{eq:smdo_total}
\end{equation}
The residual disturbance
$\tilde{\boldsymbol{d}}_{\mathrm{smdo}}:=\boldsymbol{d}_{\mathrm{eff}}+\hat{\boldsymbol{u}}_{\mathrm{smdo}}^{\mathrm{ad}}$
that remains after observer compensation is ultimately bounded as will be shown in Section~\ref{sec:boundedness_stability}.

\subsection{ASMC-Based Tracking Control via Second-Order Sliding Surfaces Using the Estimated State}
\label{subsec:second_order_asmc}

The ASMC layer tracks the phase-dependent nominal reference $\boldsymbol{q}_n(t)$
of Eq.~\eqref{eq:qn_ref} using the auxiliary or estimated state $\hat{\boldsymbol{q}}(t)$ in
place of direct measurement $\boldsymbol{q}(t)$. The actual, estimated, and observer tracking errors are defined as
\begin{equation}
\boldsymbol{e}(t)=\boldsymbol{q}(t)-\boldsymbol{q}_n(t),
\qquad
\hat{\boldsymbol{e}}(t)=\hat{\boldsymbol{q}}(t)-\boldsymbol{q}_n(t),
\qquad
\tilde{\boldsymbol{q}}(t)=\boldsymbol{q}(t)-\hat{\boldsymbol{q}}(t).
\label{eq:tracking_error_definitions}
\end{equation}
They satisfy $\boldsymbol{e}=\tilde{\boldsymbol{q}}+\hat{\boldsymbol{e}}$, and so the following inequality is satisfied:
\begin{equation}
\|\boldsymbol{e}(t)\|
\leq
\|\tilde{\boldsymbol{q}}(t)\|+
\|\hat{\boldsymbol{e}}(t)\|.
\label{eq:tracking_error_decomposition}
\end{equation}
This decomposition reflects the two-layer structure of the proposed
architecture: $\tilde{\boldsymbol{q}}$ is the observer error driven by the
SMDO, and $\hat{\boldsymbol{e}}$ is the estimated tracking error driven by
the ASMC. If each layer keeps its respective error bounded, the actual
tracking error $\boldsymbol{e}$ is bounded as well. The SMDO analysis in
Theorem~\ref{thm:smdo_residual_boundedness} of Section~\ref{sec:boundedness_stability} establishes ultimate bounds on
$\tilde{\boldsymbol{q}}$ and $\dot{\tilde{\boldsymbol{q}}}$; the ASMC analysis in Theorem~\ref{thm:asmc_tracking_boundedness} of Section~\ref{sec:boundedness_stability} then uses those bounds to establish ultimate boundedness of $\hat{\boldsymbol{e}}$ and $\dot{\hat{\boldsymbol{e}}}$, from which a bound on $\boldsymbol{e}$ follows directly via Eq.~\eqref{eq:tracking_error_decomposition}.
The ASMC is built on the baseline estimated-state surface
\begin{equation}
\hat{\boldsymbol{\sigma}}_1(t)
=
\dot{\hat{\boldsymbol{e}}}(t)
+
C_1\hat{\boldsymbol{e}}(t),
\qquad
C_1>0,
\label{eq:estimated_first_order_surface}
\end{equation}
and the second-order estimated tracking sliding surface
\begin{equation}
\hat{\boldsymbol{\sigma}}_2
=
\dot{\hat{\boldsymbol{\sigma}}}_1
+
C_2
\operatorname{sgn}\!\left(\hat{\boldsymbol{\sigma}}_1\right)
\left|\hat{\boldsymbol{\sigma}}_1\right|^{\nu_{\mathrm{asmc}}},
\qquad
C_2>0,
\quad
0<\nu_{\mathrm{asmc}}<1,
\label{eq:second_order_tracking_surface}
\end{equation}
where the sign and fractional-power operations are applied componentwise.

To expand $\hat{\boldsymbol{\sigma}}_2$ explicitly,
the auxiliary system dynamics and the phase-dependent feedforward input give
\begin{equation}
  \ddot{\hat{\boldsymbol{q}}}
  =
  \hat{\boldsymbol{a}}
  +
  \boldsymbol{u}_r
  +
  \hat{\boldsymbol{u}}_{\mathrm{asmc}},
  \qquad
  \boldsymbol{u}_r
  =
  -\boldsymbol{a}_n
  +
  \ddot{\boldsymbol{q}}_n,
  \label{eq:auxiliary_tracking_dynamics}
\end{equation}
where $\boldsymbol{a}_n = A_1 \boldsymbol{q}_n + A_2 \dot{\boldsymbol{q}}_n$
with $A_1$ and $A_2$ defined in Eq.~\eqref{eq:a_matrices}.
Subtracting $\ddot{\boldsymbol{q}}_n$ from
Eq.~\eqref{eq:auxiliary_tracking_dynamics} yields
\begin{equation}
  \ddot{\hat{\boldsymbol{e}}}
  =
  \hat{\boldsymbol{a}}-\boldsymbol{a}_n
  +
  \hat{\boldsymbol{u}}_{\mathrm{asmc}}.
  \label{eq:estimated_error_acceleration}
\end{equation}
Using
$\dot{\hat{\boldsymbol{\sigma}}}_1
=\ddot{\hat{\boldsymbol{e}}}+C_1\dot{\hat{\boldsymbol{e}}}$
and substituting Eq.~\eqref{eq:estimated_error_acceleration}
into Eq.~\eqref{eq:second_order_tracking_surface},
$\hat{\boldsymbol{\sigma}}_2$ can be written as
\begin{equation}
\hat{\boldsymbol{\sigma}}_2
=
\hat{\boldsymbol{a}}
-
\boldsymbol{a}_n
+
C_1\dot{\hat{\boldsymbol{e}}}
+
C_2
\operatorname{sgn}\!\left(\hat{\boldsymbol{\sigma}}_1\right)
\left|\hat{\boldsymbol{\sigma}}_1\right|^{\nu_{\mathrm{asmc}}}
+
\hat{\boldsymbol{u}}_{\mathrm{asmc}}.
\label{eq:estimated_sigma2_expanded}
\end{equation}
The estimated-state ASMC command is decomposed as
\begin{equation}
\hat{\boldsymbol{u}}_{\mathrm{asmc}}
=
\hat{\boldsymbol{u}}_{\mathrm{asmc}}^{\mathrm{can}}
+
\hat{\boldsymbol{u}}_{\mathrm{asmc}}^{\mathrm{ad}},
\label{eq:asmc_input_decomposition}
\end{equation}
where the cancellation component algebraically removes the known nonlinear surface term, whereas the adaptive component is generated dynamically using an adaptive gain to regulate $\hat{\boldsymbol{\sigma}}_2$. The cancellation component is selected as
\begin{equation}
\hat{\boldsymbol{u}}_{\mathrm{asmc}}^{\mathrm{can}}
=
-\hat{\boldsymbol{a}}
-C_1\dot{\hat{\boldsymbol{e}}}
-
C_2
\operatorname{sgn}\!\left(\hat{\boldsymbol{\sigma}}_1\right)
\left|\hat{\boldsymbol{\sigma}}_1\right|^{\nu_{\mathrm{asmc}}}.
\label{eq:second_order_asmc_cancellation_component}
\end{equation}

Substituting Eqs.~\eqref{eq:asmc_input_decomposition} and
\eqref{eq:second_order_asmc_cancellation_component} into
Eq.~\eqref{eq:estimated_sigma2_expanded} gives
\begin{equation}
\hat{\boldsymbol{\sigma}}_2
=
-\boldsymbol{a}_n
+
\hat{\boldsymbol{u}}_{\mathrm{asmc}}^{\mathrm{ad}}.
\label{eq:estimated_sigma2_reduced_dynamics}
\end{equation}
Unlike the SMDO cancellation, which relies on the mismatch between
the actual and auxiliary dynamics and therefore admits a residual
$\boldsymbol{\Delta}_{can}$, the ASMC cancellation component is
constructed entirely from quantities internal to the auxiliary system
and is therefore exact. The remaining term $-\boldsymbol{a}_n$ is
retained as the bounded perturbation addressed in
Theorem~\ref{thm:asmc_tracking_boundedness}.

The adaptive gain $K(t)$ and the fixed margin $K^*$ determine the
adaptive-component rate through $\hat{\boldsymbol{\sigma}}_2$:
\begin{equation}
\dot{\hat{\boldsymbol{u}}}_{\mathrm{asmc}}^{\mathrm{ad}}
=
-
\frac{K(t)+K^*}{\delta}
\hat{\boldsymbol{\sigma}}_2,
\qquad
K^*>0,
\quad
\delta>0,
\label{eq:second_order_asmc_control}
\end{equation}
\begin{equation}
\dot{K}(t)
=
\gamma
\left[
K(t)
\left(
\frac{\|\hat{\boldsymbol{\sigma}}_2(t)\|}{\delta}
-1
\right)
+
\frac{K^*}{\delta}
\|\hat{\boldsymbol{\sigma}}_2(t)\|
\right],
\qquad
\gamma>0,
\quad
K(0)>0.
\label{eq:second_order_asmc_gain_update}
\end{equation}
The gain update law~\eqref{eq:second_order_asmc_gain_update} follows
\cite{cho2020autonomous}, and the adaptive-component
rate~\eqref{eq:second_order_asmc_control} is an integral extension of
the corresponding form therein. The positivity of $K(t)$ follows from
Lemma~\ref{lem:generic_gain_upper_bound} in
Section~\ref{sec:boundedness_stability}. The gain increases while
$\hat{\boldsymbol{\sigma}}_2$ lies outside the $\delta$-boundary layer
and relaxes after $\hat{\boldsymbol{\sigma}}_2$ enters the prescribed layer.

Integrating Eq.~\eqref{eq:second_order_asmc_control} gives
\begin{equation}
\hat{\boldsymbol{u}}_{\mathrm{asmc}}^{\mathrm{ad}}(t)
=
\hat{\boldsymbol{u}}_{\mathrm{asmc}}^{\mathrm{ad}}(0)
-
\int_0^t
\frac{K(\tau)+K^*}{\delta}
\hat{\boldsymbol{\sigma}}_2(\tau)\,d\tau.
\label{eq:asmc_adaptive_component}
\end{equation}
Therefore, the total estimated-state ASMC command is
\begin{align}
\hat{\boldsymbol{u}}_{\mathrm{asmc}}(t)
={}&
-\hat{\boldsymbol{a}}(t)
-C_1\dot{\hat{\boldsymbol{e}}}(t)
-
C_2
\operatorname{sgn}\!\left(\hat{\boldsymbol{\sigma}}_1(t)\right)
\left|\hat{\boldsymbol{\sigma}}_1(t)\right|^{\nu_{\mathrm{asmc}}}
\notag\\
&+
\hat{\boldsymbol{u}}_{\mathrm{asmc}}^{\mathrm{ad}}(0)
-
\int_0^t
\frac{K(\tau)+K^*}{\delta}
\hat{\boldsymbol{\sigma}}_2(\tau)\,d\tau.
\label{eq:asmc_total}
\end{align}

The SMDO and ASMC adaptive components share the same adaptive boundary-layer
architecture. In particular,
Eqs.~\eqref{eq:smdo_adaptive_injection}--\eqref{eq:smdo_adaptive_gain_update}
and
Eqs.~\eqref{eq:second_order_asmc_control}--\eqref{eq:second_order_asmc_gain_update}
are structurally identical under the correspondence
\[
(\hat{\boldsymbol{s}}_2,\,L,\,L^*,\,\epsilon,\,\eta)
\leftrightarrow
(\hat{\boldsymbol{\sigma}}_2,\,K,\,K^*,\,\delta,\,\gamma).
\]
This common structure is formalized in
Section~\ref{sec:boundedness_stability}, where a single canonical lemma is
applied to both layers.

In summary, the two compensation signals entering
Eq.~\eqref{eq:total_control_input} are
\begin{align}
\hat{\boldsymbol{u}}_{\mathrm{smdo}}(t)
={}&
\underbrace{
-\tilde{\boldsymbol{a}}(t)
-
\lambda_1\dot{\tilde{\boldsymbol{q}}}(t)
-
\lambda_2
\operatorname{sgn}\!\left(\hat{\boldsymbol{s}}_1(t)\right)
\left|\hat{\boldsymbol{s}}_1(t)\right|^{\nu_{\mathrm{smdo}}}
}_{\hat{\boldsymbol{u}}_{\mathrm{smdo}}^{\mathrm{can}}(t)}
\notag\\
&+
\underbrace{
\left[
\hat{\boldsymbol{u}}_{\mathrm{smdo}}^{\mathrm{ad}}(0)
-
\int_0^t
\frac{L(\tau)+L^*}{\epsilon}
\hat{\boldsymbol{s}}_2(\tau)\,d\tau
\right]
}_{\hat{\boldsymbol{u}}_{\mathrm{smdo}}^{\mathrm{ad}}(t)},
\label{eq:smdo_explicit}\\[6pt]
\hat{\boldsymbol{u}}_{\mathrm{asmc}}(t)
={}&
\underbrace{
-\hat{\boldsymbol{a}}(t)
-C_1\dot{\hat{\boldsymbol{e}}}(t)
-
C_2
\operatorname{sgn}\!\left(\hat{\boldsymbol{\sigma}}_1(t)\right)
\left|\hat{\boldsymbol{\sigma}}_1(t)\right|^{\nu_{\mathrm{asmc}}}
}_{\hat{\boldsymbol{u}}_{\mathrm{asmc}}^{\mathrm{can}}(t)}
\notag\\
&+
\underbrace{
\left[
\hat{\boldsymbol{u}}_{\mathrm{asmc}}^{\mathrm{ad}}(0)
-
\int_0^t
\frac{K(\tau)+K^*}{\delta}
\hat{\boldsymbol{\sigma}}_2(\tau)\,d\tau
\right]
}_{\hat{\boldsymbol{u}}_{\mathrm{asmc}}^{\mathrm{ad}}(t)}.
\label{eq:asmc_explicit}
\end{align}

\subsection{Integrated Control Architecture}
\label{subsec:integrated_control_architecture}

In brief, during the Reaching Phase, the total command in acceleration level is
\begin{equation}
\boldsymbol{u}_{\mathrm{total}}
=
\boldsymbol{u}_r
+
\hat{\boldsymbol{u}}_{\mathrm{asmc}}
+
\hat{\boldsymbol{u}}_{\mathrm{smdo}}.
\label{eq:total_control_reaching}
\end{equation}
During the PCO Phase, $\hat{\boldsymbol{u}}_r$ is set to zero so that
\begin{equation}
\boldsymbol{u}_{\mathrm{total}}
=
\hat{\boldsymbol{u}}_{\mathrm{asmc}}
+
\hat{\boldsymbol{u}}_{\mathrm{smdo}}.
\label{eq:total_control_pco}
\end{equation}
The SMDO operates on $\hat{\boldsymbol{s}}_2$ to reduce $\boldsymbol{d}$ to the
residual $\tilde{\boldsymbol{d}}_{\mathrm{smdo}}$; the ASMC then operates on
$\hat{\boldsymbol{\sigma}}_2$ to track $\boldsymbol{q}_n(t)$ against this residual
and the observer-induced mismatch $\tilde{\boldsymbol{q}}$. The surrogate
construction of Section~\ref{subsec:surrogate_surface_derivatives} is purely an
implementation device for evaluating $\hat{\boldsymbol{\sigma}}_2$; it does not
constitute a separate first-order feedback loop. The stability and boundedness
properties of the closed-loop system are established in
Section~\ref{sec:boundedness_stability}.

\section{Boundedness and Stability Analysis}
\label{sec:boundedness_stability}

\subsection*{Canonical second-order sliding variable form}

The SMDO observer layer and the ASMC tracking layer share the same 
adaptive boundary-layer structure. To avoid repeating the
cascade argument for each layer, a generic canonical form is introduced
here and applied to both. Let $\boldsymbol{x}\in\mathbb{R}^{3}$ be a
generic error vector and define
\begin{equation}
  \boldsymbol{z}_1
  =
  \dot{\boldsymbol{x}}
  +
  \alpha\boldsymbol{x},
  \qquad
  \alpha>0,
  \label{eq:canonical_z1}
\end{equation}
and
\begin{equation}
  \boldsymbol{z}_2
  =
  \dot{\boldsymbol{z}}_1
  +
  \beta
  \operatorname{sgn}(\boldsymbol{z}_1)
  |\boldsymbol{z}_1|^\nu,
  \qquad
  \beta>0,
  \quad
  0<\nu<1,
  \label{eq:canonical_z2}
\end{equation}
where the sign and fractional-power operations are applied componentwise.
The coefficient $\alpha$ determines the first-order error dynamics, and
$\beta$ sets the strength of the terminal nonlinear term.

The correspondence between the canonical variables and the two layers is
given in Table~\ref{tab:canonical_mapping}. An important distinction
concerns the ASMC layer: the canonical result is applied to the
\emph{estimated} tracking variables $\hat{\boldsymbol{e}}$,
$\hat{\boldsymbol{\sigma}}_1$, and $\hat{\boldsymbol{\sigma}}_2$,
not to the actual tracking error. The actual error is recovered
subsequently via $\boldsymbol{e}=\hat{\boldsymbol{e}}+\tilde{\boldsymbol{q}}$,
using the observer-error bounds from
Theorem~\ref{thm:smdo_residual_boundedness}.

\begin{table}[h]
  \centering
  \caption{Correspondence between the canonical form and the two control layers.}
  \label{tab:canonical_mapping}
  \begin{tabular}{c|c|c}
    \hline
    Canonical & SMDO layer & ASMC layer \\
    \hline
    $\boldsymbol{x}$  & $\tilde{\boldsymbol{q}}$       & $\hat{\boldsymbol{e}}$ \\
    $\boldsymbol{z}_1$ & $\hat{\boldsymbol{s}}_1$      & $\hat{\boldsymbol{\sigma}}_1$ \\
    $\boldsymbol{z}_2$ & $\hat{\boldsymbol{s}}_2$      & $\hat{\boldsymbol{\sigma}}_2$ \\
    $\alpha$          & $\lambda_1$                    & $C_1$ \\
    $\beta$           & $\lambda_2$                    & $C_2$ \\
    $\nu$             & $\nu_{\mathrm{smdo}}$           & $\nu_{\mathrm{asmc}}$ \\
    $\Delta$          & $\epsilon$                     & $\delta$ \\
    $P$               & $L$                            & $K$ \\
    $P^*$             & $L^*$                          & $K^*$ \\
    $\kappa$          & $\eta$                         & $\gamma$ \\
    \hline
  \end{tabular}
\end{table}

\subsection*{Canonical adaptive gain and input-rate laws}

The adaptive gain update law is written in the generic form
\begin{equation}
  \dot{P}(t)
  =
  \kappa
  \!\left[
    P(t)
    \!\left(
      \frac{\|\boldsymbol{z}_2(t)\|}{\Delta}
      -1
    \right)
    +
    \frac{P^*}{\Delta}
    \|\boldsymbol{z}_2(t)\|
  \right],
  \qquad
  P(0)>0,
  \label{eq:canonical_gain_update}
\end{equation}
where $P^*>0$ is a fixed gain margin, $\kappa>0$ is the adaptation rate,
and $\Delta>0$ is the prescribed boundary-layer width. The corresponding
input-rate law is
\begin{equation}
  \dot{\boldsymbol{u}}(t)
  =
  -
  \frac{P(t)+P^*}{\Delta}
  \boldsymbol{z}_2(t).
  \label{eq:canonical_adaptive_input}
\end{equation}
The dynamics of the sliding variable $\boldsymbol{z}_2$ are represented as
\begin{equation}
  \dot{\boldsymbol{z}}_2
  =
  \boldsymbol{\xi}(t)
  +
  \dot{\boldsymbol{u}}(t),
  \label{eq:z2_dynamics_generic}
\end{equation}
where $\boldsymbol{\xi}(t)$ collects all terms not directly assigned to
$\dot{\boldsymbol{u}}$.

\begin{lemma}[Gain upper bound under admissible input-rate authority]
\label{lem:generic_gain_upper_bound}
Consider Eqs.~\eqref{eq:canonical_gain_update} and
\eqref{eq:canonical_adaptive_input}. Suppose that $P(0)\in(0,U^*)$ for
some finite constant $U^*>0$, and that $\|\dot{\boldsymbol{u}}(t)\|\leq U^*$
for all $t\geq0$. This admissibility condition is not derived from the
adaptive gain law itself but is imposed externally, reflecting actuator-rate
limitations, propulsion-system bandwidth, and command-rate saturation
enforced at the implementation level of digital spacecraft control systems;
all subsequent boundedness results are conditional on the closed-loop
trajectory satisfying this bound.
Then $P(t)>0$ for all $t\geq0$, and
\begin{equation}
  0<P(t)<U^*
  \label{eq:generic_gain_upper_result}
\end{equation}
whenever $\|\boldsymbol{z}_2(t)\|>\Delta$.
\end{lemma}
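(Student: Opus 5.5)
The plan is to treat the gain law \eqref{eq:canonical_gain_update} as a scalar linear time-varying ODE in $P$, driven by the nonnegative signal $\|\boldsymbol{z}_2(t)\|$. Positivity then follows from a variation-of-constants (comparison) argument. The upper bound is obtained directly from the admissibility condition $\|\dot{\boldsymbol{u}}\|\le U^*$ through the input-rate law \eqref{eq:canonical_adaptive_input}. Throughout, $\boldsymbol{z}_2(\cdot)$ is taken to be a locally integrable (e.g.\ piecewise continuous) signal along the closed-loop trajectory. Under that assumption the right-hand side of \eqref{eq:canonical_gain_update} is affine in $P$ with locally integrable coefficients, so $P$ exists and is unique on the whole interval on which the trajectory is defined; in particular it has no finite escape.

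\emph{Step 1 (positivity).} Rewrite \eqref{eq:canonical_gain_update} as
\begin{equation*}
\dot P = a(t)P + b(t),\qquad a(t):=\kappa\Bigl(\tfrac{\|\boldsymbol{z}_2(t)\|}{\Delta}-1\Bigr),\qquad b(t):=\tfrac{\kappa P^*}{\Delta}\|\boldsymbol{z}_2(t)\|\ge 0 .
\end{equation*}
With $\Psi(t,s):=\exp\bigl(\int_s^t a(\tau)\,d\tau\bigr)>0$, the explicit solution is
\begin{equation*}
P(t)=\Psi(t,0)P(0)+\int_0^t \Psi(t,s)\,b(s)\,ds .
\end{equation*}
The first term is strictly positive because $P(0)>0$, and the integral is nonnegative, so $P(t)>0$ for all $t\ge0$. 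Since $a(t)\ge-\kappa$, the same formula gives the sharper lower bound $P(t)\ge P(0)e^{-\kappa t}$, though this is not needed.

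\emph{Step 2 (upper bound in the reaching region).} Fix any $t$ with $\|\boldsymbol{z}_2(t)\|>\Delta$. Taking norms in \eqref{eq:canonical_adaptive_input} and using $P(t)+P^*>0$ from Step 1 gives
\begin{equation*}
\|\dot{\boldsymbol{u}}(t)\| = \bigl(P(t)+P^*\bigr)\frac{\|\boldsymbol{z}_2(t)\|}{\Delta} > P(t)+P^* .
\end{equation*}
Combining this with the admissibility hypothesis $\|\dot{\boldsymbol{u}}(t)\|\le U^*$ yields $P(t)<U^*-P^*<U^*$, since $P^*>0$. Together with Step 1 this gives \eqref{eq:generic_gain_upper_result}; in fact it gives the slightly stronger bound $P(t)<U^*-P^*$.

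The hypothesis $P(0)<U^*$ is not needed for this pointwise bound. It only guarantees consistency at $t=0$ if $\|\boldsymbol{z}_2(0)\|>\Delta$, and the argument above already covers that case.

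The only point requiring care is therefore Step 1's well-posedness. One must make sure that $\|\boldsymbol{z}_2\|$ is regular enough along the closed loop for the linear ODE, and hence the integrating-factor representation, to be valid. I would handle this by stating it as a standing regularity assumption consistent with the conditional nature of the lemma. The rest is elementary once positivity of $P+P^*$ is in hand, because the factor $\|\boldsymbol{z}_2\|/\Delta>1$ converts the rate bound on $\dot{\boldsymbol{u}}$ directly into a bound on $P$.
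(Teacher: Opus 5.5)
Your proof is correct and follows essentially the same route as the paper: positivity from the linear comparison $\dot P\ge\kappa(r-1)P$ via Gr\"{o}nwall (your variation-of-constants formula with $b\ge 0$ makes that same fact explicit), and the upper bound from $\|\dot{\boldsymbol{u}}\|=r(P+P^*)>P+P^*$ combined with the admissibility bound. You argue the upper bound directly rather than by contradiction, which makes visible the slightly sharper $P(t)<U^*-P^*$ and that $P(0)<U^*$ is not needed.
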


\begin{proof}
Let $r(t):=\|\boldsymbol{z}_2(t)\|/\Delta\geq0$.
Equation~\eqref{eq:canonical_gain_update} becomes
\begin{equation}
  \dot{P}(t)
  =
  \kappa(r(t)-1)P(t)
  +
  \kappa r(t)P^*.
  \label{eq:Pdot_rewritten}
\end{equation}
Since $\kappa,P^*>0$ and $r(t)\geq0$, one has
$\dot{P}(t)\geq\kappa(r(t)-1)P(t)$. Gr\"{o}nwall's inequality
(\cite{gronwall1919note}) therefore gives
\[
P(t)\geq P(0)\exp\!\left(\int_0^t\kappa(r(\tau)-1)\,d\tau\right)>0
\]
for all $t\geq0$.

When $r(t)>1$, all terms in Eq.~\eqref{eq:Pdot_rewritten} are positive,
so $P$ is strictly increasing. Suppose $P(t_1)\geq U^*$ at some $t_1$
with $\|\boldsymbol{z}_2(t_1)\|>\Delta$. Then
\[
  \|\dot{\boldsymbol{u}}(t_1)\|
  =
  r(t_1)\bigl[P(t_1)+P^*\bigr]
  >
  P(t_1)+P^*
  \geq
  U^*+P^*
  >
  U^*,
\]
which contradicts the admissibility bound $\|\dot{\boldsymbol{u}}\|\leq U^*$.
Hence $P(t)<U^*$ whenever $\|\boldsymbol{z}_2(t)\|>\Delta$.
\end{proof}

\begin{lemma}[Cascade ultimate boundedness under the adaptive second-order sliding law]
\label{lem:canonical_adaptive_boundary_layer}
Consider the canonical surfaces in
Eqs.~\eqref{eq:canonical_z1}--\eqref{eq:canonical_z2} and the
second-order dynamics in Eq.~\eqref{eq:z2_dynamics_generic}. Assume that
there exists a finite constant $\Gamma>0$ such that
\begin{equation}
  \|\boldsymbol{\xi}(t)\|
  \leq
  \Gamma,
  \qquad
  \forall\,t\geq0.
  \label{eq:canonical_xi_bound}
\end{equation}
Let the input-rate and gain-update laws be given by
Eqs.~\eqref{eq:canonical_adaptive_input} and
\eqref{eq:canonical_gain_update}, respectively. Suppose that
\begin{equation}
  \|\dot{\boldsymbol{u}}(t)\|
  \leq
  U^*,
  \qquad
  \forall\,t\geq0,
  \label{eq:canonical_input_rate_bound}
\end{equation}
where $U^*>\Gamma$. The bound in~\eqref{eq:canonical_input_rate_bound}
constitutes an admissibility condition of the type introduced in
Lemma~\ref{lem:generic_gain_upper_bound}: it is imposed by
actuator-rate authority and command-rate saturation at the implementation
level, not derived from the adaptive law itself. The conclusions of this
lemma are therefore conditional on the closed-loop trajectory
satisfying~\eqref{eq:canonical_input_rate_bound} throughout the operating
interval. Suppose further that the hypotheses of
Lemma~\ref{lem:generic_gain_upper_bound} are satisfied. For finite initial
conditions $\boldsymbol{x}(0)$ and $\dot{\boldsymbol{x}}(0)$, the following are satisfied:
\begin{align}
  \limsup_{t\to\infty}
  \|\boldsymbol{z}_2(t)\|
  &\leq
  \Delta,
  \label{eq:canonical_z2_limsup}\\
  \limsup_{t\to\infty}
  \|\boldsymbol{z}_1(t)\|_\infty
  &\leq
  \left(
    \frac{\Delta}{\beta}
  \right)^{1/\nu},
  \label{eq:canonical_z1_limsup}\\
  \limsup_{t\to\infty}
  \|\boldsymbol{x}(t)\|_\infty
  &\leq
  \frac{1}{\alpha}
  \left(
    \frac{\Delta}{\beta}
  \right)^{1/\nu},
  \label{eq:canonical_x_limsup}\\
  \limsup_{t\to\infty}
  \|\dot{\boldsymbol{x}}(t)\|_\infty
  &\leq
  2
  \left(
    \frac{\Delta}{\beta}
  \right)^{1/\nu}.
  \label{eq:canonical_xdot_limsup}
\end{align}
\end{lemma}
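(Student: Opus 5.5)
The proof proceeds as a cascade of three ultimate-boundedness steps: first $\boldsymbol{z}_2$, then $\boldsymbol{z}_1$, then $\boldsymbol{x}$ and $\dot{\boldsymbol{x}}$. Each step feeds the limsup bound of the previous one into a scalar comparison argument.

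\textbf{Step 1: the outer variable $\boldsymbol{z}_2$.} I would use the Lyapunov candidate $V_2=\tfrac12\|\boldsymbol{z}_2\|^2$. Combining \eqref{eq:z2_dynamics_generic} with \eqref{eq:canonical_adaptive_input} gives
\[
\dot V_2=\boldsymbol{z}_2^\top\boldsymbol{\xi}-\frac{P+P^*}{\Delta}\|\boldsymbol{z}_2\|^2
\leq \|\boldsymbol{z}_2\|\Bigl(\Gamma-\frac{P+P^*}{\Delta}\|\boldsymbol{z}_2\|\Bigr).
\]
On the set $\|\boldsymbol{z}_2\|>\Delta$, this yields $\dot V_2\le\|\boldsymbol{z}_2\|(\Gamma-P-P^*)$. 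I must then show $P+P^*>\Gamma$ after some finite time.

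\begin{itemize}
\item By Lemma~\ref{lem:generic_gain_upper_bound}, $P>0$ throughout, and $P$ is strictly increasing with $\dot P\ge\kappa P^*r>\kappa P^*$ while $r>1$.
\item Hence $P$ cannot stay below $\Gamma$ for longer than a time of order $\Gamma/(\kappa P^*)$ of continued excursion.
\item The condition $U^*>\Gamma$ keeps this consistent with the cap $P<U^*$.
\end{itemize}

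Once $P+P^*>\Gamma$ holds outside the layer, $V_2$ strictly decreases there. So $\|\boldsymbol{z}_2\|$ enters the $\Delta$-ball in finite time, and cannot leave it by more than an arbitrarily small margin. This gives \eqref{eq:canonical_z2_limsup}.

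The delicate point is that $P$ relaxes inside the layer, so on re-exit $P$ may be small. I would handle this with an $\varepsilon$-argument. Any excursion beyond $\Delta+\varepsilon$ drives $P$ upward at a rate bounded below, and $\|\boldsymbol{z}_2\|$ grows at most at rate $\Gamma$. Such excursions are therefore short and shrink in amplitude, which yields the limsup statement rather than a strict invariance. This is the main obstacle, and I expect the clean version to rely on the admissibility hypothesis \eqref{eq:canonical_input_rate_bound} and Lemma~\ref{lem:generic_gain_upper_bound}.

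\textbf{Step 2: the middle variable $\boldsymbol{z}_1$, componentwise.} Fix $\varepsilon>0$ and a time after which $\|\boldsymbol{z}_2\|\le\Delta+\varepsilon$, so each component satisfies $|z_{2,i}|\le\Delta+\varepsilon$. Each component of $\boldsymbol{z}_1$ obeys the scalar equation $\dot z_{1,i}=-\beta\,\mathrm{sgn}(z_{1,i})|z_{1,i}|^\nu+z_{2,i}$. With $W=\tfrac12 z_{1,i}^2$,
\[
\dot W\le -|z_{1,i}|\bigl(\beta|z_{1,i}|^\nu-\Delta-\varepsilon\bigr),
\]
which is negative whenever $|z_{1,i}|>((\Delta+\varepsilon)/\beta)^{1/\nu}$. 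Because $\nu<1$, the convergence to this level happens in finite time. Letting $\varepsilon\to0$ gives \eqref{eq:canonical_z1_limsup}.

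\textbf{Step 3: the inner variables $\boldsymbol{x}$ and $\dot{\boldsymbol{x}}$.} Each component satisfies the stable linear filter $\dot x_i=-\alpha x_i+z_{1,i}$. Writing $\bar z$ for the limsup bound from Step 2, the variation-of-constants formula gives $\limsup|x_i|\le\bar z/\alpha$, which is \eqref{eq:canonical_x_limsup}. Then $\dot x_i=z_{1,i}-\alpha x_i$ gives $\limsup|\dot x_i|\le\bar z+\alpha\cdot\bar z/\alpha=2\bar z$, which is \eqref{eq:canonical_xdot_limsup}.

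Finite initial conditions ensure that all signals remain bounded on compact intervals before the asymptotic estimates take over, so no finite-escape issue arises.
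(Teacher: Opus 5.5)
Your Step~1 has a genuine gap, exactly at the point you flag, and the proposed resolution does not work. Nothing in the dynamics couples successive excursions. While $\|\boldsymbol{z}_2\|$ is small, $\dot P\approx-\kappa P$, so during a long stay inside the layer $P$ relaxes essentially to zero, and every re-exit is a fresh copy of the first excursion. Your claim that excursions ``are short and shrink in amplitude'' therefore has no mechanism behind it. In fact, for $P^*<\Gamma$ and small $\kappa$ the conclusion is false.

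To see this, let $\boldsymbol{\xi}$ alternate, on sufficiently long intervals (long compared with $1/\kappa$), between a constant vector of norm $\Gamma$ and $\boldsymbol{0}$.
\begin{itemize}
\item At the end of each zero interval, $\boldsymbol{z}_2\approx\boldsymbol{0}$ and $P\approx0$.
\item In the next active interval, $\|\boldsymbol{z}_2\|$ rises toward $\Gamma\Delta/(P+P^*)\approx\Gamma\Delta/P^*>\Delta$ on the time scale $\Delta/P^*$. Meanwhile $P$, growing at rate $O(\kappa)$, has barely moved.
\item Throughout, $\|\dot{\boldsymbol{u}}\|=(P+P^*)\|\boldsymbol{z}_2\|/\Delta$ exceeds $\Gamma$ by at most $O(\kappa)$. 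So the admissibility bound and the hypotheses of Lemma~\ref{lem:generic_gain_upper_bound} hold for a suitable $U^*>\Gamma$.
\end{itemize}
Yet $\limsup_{t\to\infty}\|\boldsymbol{z}_2(t)\|>\Delta$. Your Steps~2--3 are otherwise fine and coincide with the paper's componentwise $W_i=\tfrac12 z_{1,i}^2$ argument and variation-of-constants bound. Your $\Delta+\varepsilon$ inflation is in fact more careful than the paper, which uses $|z_{2,i}|\le\Delta$ as if the limsup were attained. But those steps have nothing to start from.

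The paper's Step~1 uses a different device: the composite function $V=\tfrac12\|\boldsymbol{z}_2\|^2+\tfrac{1}{2\chi}(P-U^*)^2$ with $\chi=\kappa P^*/\Delta$. With $P<U^*$ from Lemma~\ref{lem:generic_gain_upper_bound} and $\dot P>\chi\|\boldsymbol{z}_2\|$ outside the layer, it gets $\dot V\le-(U^*-\Gamma+P^*)\|\boldsymbol{z}_2\|$ there. That is where $U^*>\Gamma$ enters, and it replaces your ``wait until $P+P^*>\Gamma$'' argument for the first entry. It, too, only establishes $\dot V<0$ outside $\Omega_\Delta$. Inside the layer $P$ decreases and $(P-U^*)^2/(2\chi)$ grows, so re-exit is not excluded, and by the example above it cannot be.

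What your $V_2$ computation does prove, using only $P>0$, is $\tfrac{d}{dt}\|\boldsymbol{z}_2\|\le\Gamma-\tfrac{P^*}{\Delta}\|\boldsymbol{z}_2\|$, hence $\limsup_{t\to\infty}\|\boldsymbol{z}_2(t)\|\le\Gamma\Delta/P^*$. So the stated bounds follow under the additional hypothesis $P^*\ge\Gamma$. In general, your Steps~2--3 yield them with $\Delta$ replaced by $\Gamma\Delta/P^*$.
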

\begin{proof}
Substituting Eq.~\eqref{eq:canonical_adaptive_input} into
Eq.~\eqref{eq:z2_dynamics_generic} gives
\begin{equation}
  \dot{\boldsymbol{z}}_2
  =
  \boldsymbol{\xi}(t)
  -
  \frac{P(t)+P^*}{\Delta}
  \boldsymbol{z}_2(t).
  \label{eq:canonical_z2_closed}
\end{equation}
By Lemma~\ref{lem:generic_gain_upper_bound}, $0<P(t)<U^*$ whenever
$\|\boldsymbol{z}_2(t)\|>\Delta$. Define
\begin{equation}
  V
  =
  \frac{1}{2}
  \boldsymbol{z}_2^{\mathsf{T}}\boldsymbol{z}_2
  +
  \frac{1}{2\chi}
  (P(t)-U^*)^2,
  \qquad
  \chi
  :=
  \frac{\kappa P^*}{\Delta}
  >0.
  \label{eq:canonical_Lyapunov}
\end{equation}
Differentiating Eq.~\eqref{eq:canonical_Lyapunov} along Eq.~\eqref{eq:canonical_z2_closed} and applying Eq.~\eqref{eq:canonical_xi_bound}, we have
\begin{equation}
  \dot{V}
  \leq
  \Gamma\|\boldsymbol{z}_2\|
  -
  \frac{P+P^*}{\Delta}
  \|\boldsymbol{z}_2\|^2
  -
  \frac{U^*-P}{\chi}
  \dot{P}.
  \label{eq:canonical_Vdot_2}
\end{equation}
Since $\|\boldsymbol{z}_2\|^2/\Delta
>\|\boldsymbol{z}_2\|$ for $\|\boldsymbol{z}_2\|>\Delta$, Eq.~\eqref{eq:canonical_Vdot_2} becomes
\begin{equation}
  \dot{V}
  \leq
  -(U^*-\Gamma+P^*)
  \|\boldsymbol{z}_2\|
  -
  (U^*-P)
  \!\left(
    \frac{\dot{P}}{\chi}
    -
    \|\boldsymbol{z}_2\|
  \right).
  \label{eq:canonical_Vdot_factored}
\end{equation}
The first term is strictly negative because $U^*>\Gamma$ and $P^*>0$.
The second term is nonpositive: $U^*-P>0$ by
Lemma~\ref{lem:generic_gain_upper_bound}, and from
Eq.~\eqref{eq:canonical_gain_update},
\[
  \frac{\dot{P}}{\|\boldsymbol{z}_2\|}
  =
  \kappa
  \!\left[
    P\!\left(\frac{1}{\Delta}-\frac{1}{\|\boldsymbol{z}_2\|}\right)
    +
    \frac{P^*}{\Delta}
  \right]
  >
  \frac{\kappa P^*}{\Delta}
  =
  \chi
\]
for $\|\boldsymbol{z}_2\|>\Delta$.
Therefore $\dot{V}<0$ outside
$\Omega_\Delta=\{\boldsymbol{z}_2:\|\boldsymbol{z}_2\|\leq\Delta\}$,
which establishes Eq.~\eqref{eq:canonical_z2_limsup}.

For the bound on $\boldsymbol{z}_1$, rewrite
Eq.~\eqref{eq:canonical_z2} componentwise as
\begin{equation}
  \dot{z}_{1,i}
  +
  \beta
  \operatorname{sgn}(z_{1,i})
  |z_{1,i}|^\nu
  =
  z_{2,i}.
  \label{eq:canonical_z1_component_dynamics}
\end{equation}
Once $\boldsymbol{z}_2$ has reached its ultimate boundary layer in
Eq.~\eqref{eq:canonical_z2_limsup}, each component satisfies
\begin{equation}
  |z_{2,i}|
  \leq
  \|\boldsymbol{z}_2\|
  \leq
  \Delta.
  \label{eq:canonical_z2_component_bound}
\end{equation}
Combining Eqs.~\eqref{eq:canonical_z1_component_dynamics} and
\eqref{eq:canonical_z2_component_bound} gives
\begin{equation}
  -\Delta
  \leq
  \dot{z}_{1,i}
  +
  \beta
  \operatorname{sgn}(z_{1,i})
  |z_{1,i}|^\nu
  \leq
  \Delta.
  \label{eq:canonical_z1_double_inequality}
\end{equation}

Define the interval
\begin{equation}
  \mathcal{L}_{\Delta,i}
  :=
  \left\{
    z_{1,i}\in\mathbb{R}
    :
    |z_{1,i}|
    \leq
    \left(
      \frac{\Delta}{\beta}
    \right)^{1/\nu}
  \right\}.
  \label{eq:canonical_z1_attracting_set}
\end{equation}
To show that $\mathcal{L}_{\Delta,i}$ is attracting, consider the
Lyapunov function
\begin{equation}
  W_i
  =
  \frac{1}{2}z_{1,i}^2.
  \label{eq:canonical_Wi_def}
\end{equation}
Its derivative is
\begin{equation}
  \dot{W}_i
  =
  z_{1,i}\dot{z}_{1,i}.
  \label{eq:canonical_Wi_dot_def}
\end{equation}
Two cases must be considered outside
$\mathcal{L}_{\Delta,i}$.

If
\begin{equation}
  z_{1,i}
  >
  \left(
    \frac{\Delta}{\beta}
  \right)^{1/\nu}
  >
  0,
  \label{eq:canonical_z1_positive_case}
\end{equation}
then
$\beta z_{1,i}^{\nu}>\Delta$. From the upper inequality in
Eq.~\eqref{eq:canonical_z1_double_inequality},
\begin{equation}
  \dot{z}_{1,i}
  \leq
  \Delta
  -
  \beta z_{1,i}^{\nu}
  <
  0.
  \label{eq:canonical_z1dot_positive_case}
\end{equation}
Since $z_{1,i}>0$,
\begin{equation}
  \dot{W}_i
  =
  z_{1,i}\dot{z}_{1,i}
  \leq
  z_{1,i}
  \left(
    \Delta-\beta z_{1,i}^{\nu}
  \right)
  <
  0.
  \label{eq:canonical_Wdot_positive_case}
\end{equation}

Alternatively, if
\begin{equation}
  z_{1,i}
  <
  -
  \left(
    \frac{\Delta}{\beta}
  \right)^{1/\nu}
  <
  0,
  \label{eq:canonical_z1_negative_case}
\end{equation}
then
$\beta|z_{1,i}|^\nu>\Delta$. Because
$\operatorname{sgn}(z_{1,i})=-1$, the lower inequality in
Eq.~\eqref{eq:canonical_z1_double_inequality} gives
\begin{equation}
  \dot{z}_{1,i}
  \geq
  -\Delta
  +
  \beta|z_{1,i}|^\nu
  >
  0.
  \label{eq:canonical_z1dot_negative_case}
\end{equation}
Since $z_{1,i}<0$,
\begin{equation}
  \dot{W}_i
  =
  z_{1,i}\dot{z}_{1,i}
  \leq
  z_{1,i}
  \left(
    -\Delta
    +
    \beta|z_{1,i}|^\nu
  \right)
  <
  0.
  \label{eq:canonical_Wdot_negative_case}
\end{equation}

Thus, $\dot{W}_i<0$ for both possible cases outside
$\mathcal{L}_{\Delta,i}$. Hence, each component of
$\boldsymbol{z}_1$ is ultimately attracted to
$\mathcal{L}_{\Delta,i}$, and
\begin{equation}
  \limsup_{t\to\infty}
  \|\boldsymbol{z}_1(t)\|_\infty
  \leq
  \left(
    \frac{\Delta}{\beta}
  \right)^{1/\nu},
\end{equation}
which proves Eq.~\eqref{eq:canonical_z1_limsup}.

From Eq.~\eqref{eq:canonical_z1}, the dynamics of
$\boldsymbol{x}$ are
\begin{equation}
  \dot{\boldsymbol{x}}
  =
  -\alpha\boldsymbol{x}
  +
  \boldsymbol{z}_1.
  \label{eq:canonical_x_dynamics}
\end{equation}
The corresponding variation-of-constants solution is
\begin{equation}
  \boldsymbol{x}(t)
  =
  e^{-\alpha(t-t_0)}
  \boldsymbol{x}(t_0)
  +
  \int_{t_0}^{t}
  e^{-\alpha(t-\tau)}
  \boldsymbol{z}_1(\tau)\,d\tau.
  \label{eq:canonical_x_variation}
\end{equation}
Taking the infinity norm gives
\begin{align}
  \|\boldsymbol{x}(t)\|_\infty
  &\leq
  e^{-\alpha(t-t_0)}
  \|\boldsymbol{x}(t_0)\|_\infty
  \notag\\
  &\quad+
  \int_{t_0}^{t}
  e^{-\alpha(t-\tau)}
  \|\boldsymbol{z}_1(\tau)\|_\infty\,d\tau.
  \label{eq:canonical_x_norm_variation}
\end{align}
Using Eq.~\eqref{eq:canonical_z1_limsup} and
\[
  \int_{t_0}^{t}
  e^{-\alpha(t-\tau)}\,d\tau
  =
  \frac{1-e^{-\alpha(t-t_0)}}{\alpha},
\]
the upper limit of Eq.~\eqref{eq:canonical_x_norm_variation} satisfies
\begin{align}
  \limsup_{t\to\infty}
  \|\boldsymbol{x}(t)\|_\infty
  &\leq
  \frac{1}{\alpha}
  \limsup_{t\to\infty}
  \|\boldsymbol{z}_1(t)\|_\infty
  \notag\\
  &\leq
  \frac{1}{\alpha}
  \left(
    \frac{\Delta}{\beta}
  \right)^{1/\nu},
\end{align}
which proves Eq.~\eqref{eq:canonical_x_limsup}.

Finally, Eq.~\eqref{eq:canonical_x_dynamics} gives
\begin{equation}
  \|\dot{\boldsymbol{x}}(t)\|_\infty
  \leq
  \|\boldsymbol{z}_1(t)\|_\infty
  +
  \alpha\|\boldsymbol{x}(t)\|_\infty.
  \label{eq:canonical_xdot_norm_bound}
\end{equation}
Therefore,
\begin{align}
  \limsup_{t\to\infty}
  \|\dot{\boldsymbol{x}}(t)\|_\infty
  &\leq
  \left(
    \frac{\Delta}{\beta}
  \right)^{1/\nu}
  +
  \alpha
  \left[
    \frac{1}{\alpha}
    \left(
      \frac{\Delta}{\beta}
    \right)^{1/\nu}
  \right]
  \notag\\
  &=
  2
  \left(
    \frac{\Delta}{\beta}
  \right)^{1/\nu},
\end{align}
which establishes Eq.~\eqref{eq:canonical_xdot_limsup}.
\end{proof}

\begin{theorem}[SMDO residual-disturbance boundedness]
\label{thm:smdo_residual_boundedness}
Define
\begin{equation}
  \hat{\boldsymbol{s}}_1
  =
  \dot{\tilde{\boldsymbol{q}}}
  +
  \lambda_1\tilde{\boldsymbol{q}},
  \qquad
  \hat{\boldsymbol{s}}_2
  =
  \dot{\hat{\boldsymbol{s}}}_1
  +
  \lambda_2
  \operatorname{sgn}(\hat{\boldsymbol{s}}_1)
  |\hat{\boldsymbol{s}}_1|^{\nu_{\mathrm{smdo}}},
  \qquad
  \lambda_1 > 0,
  \quad
  \lambda_2 > 0,
  \quad
  0<\nu_{\mathrm{smdo}}<1,
  \label{eq:smdo_surfaces_theorem}
\end{equation}
and decompose the SMDO compensation as
$\hat{\boldsymbol{u}}_{\mathrm{smdo}}
=\hat{\boldsymbol{u}}_{\mathrm{smdo}}^{\mathrm{can}}
+\hat{\boldsymbol{u}}_{\mathrm{smdo}}^{\mathrm{ad}}$.
Here, the cancellation component removes the known model- and
surface-dependent terms, while the adaptive component is generated
dynamically using the adaptive gain. The cancellation component is given by Eq.~\eqref{eq:smdo_cancellation_component}.
The SMDO adaptive component follows
Eqs.~\eqref{eq:smdo_adaptive_injection}
and~\eqref{eq:smdo_adaptive_gain_update}, which correspond to the canonical
adaptive-input and gain-update forms in
Eqs.~\eqref{eq:canonical_adaptive_input}
and~\eqref{eq:canonical_gain_update}, respectively.
Define the residual disturbance after SMDO compensation as
\begin{equation}
  \tilde{\boldsymbol{d}}_{\mathrm{smdo}}
  :=
  \boldsymbol{d}_{\mathrm{eff}}
  +
  \hat{\boldsymbol{u}}_{\mathrm{smdo}}^{\mathrm{ad}}.
  \label{eq:smdo_residual_def_theorem}
\end{equation}
Assume:
\begin{enumerate}[label=\textup{(S\arabic*)}]
  \item $\boldsymbol{d}_{\mathrm{eff}}(t)$ is continuously differentiable
        with $\|\dot{\boldsymbol{d}}_{\mathrm{eff}}(t)\|\leq \Gamma_s$
        for all $t\geq0$ where $\Gamma_s$ is a finite constant.
  \item $L(0)\in(0,U_s^*)$ for some $U_s^*>\Gamma_s$.
  \item $\|\dot{\hat{\boldsymbol{u}}}_{\mathrm{smdo}}^{\mathrm{ad}}(t)\|\leq U_s^*$
        for all $t\geq0$, reflecting actuator-rate saturation at the
        implementation level.
  \item $\tilde{\boldsymbol{q}}(0)$ and
        $\dot{\tilde{\boldsymbol{q}}}(0)$ are finite.
\end{enumerate}

Then
\begin{align}
  \limsup_{t\to\infty}
  \|\hat{\boldsymbol{s}}_2(t)\|
  &\leq
  \epsilon,
  \label{eq:smdo_s2_bound_theorem}\\
  \limsup_{t\to\infty}
  \|\hat{\boldsymbol{s}}_1(t)\|_\infty
  &\leq
  \left(
    \frac{\epsilon}{\lambda_2}
  \right)^{1/\nu_{\mathrm{smdo}}},
  \label{eq:smdo_s1_bound_theorem}\\
  \limsup_{t\to\infty}
  \|\tilde{\boldsymbol{q}}(t)\|_\infty
  &\leq
  \frac{1}{\lambda_1}
  \left(
    \frac{\epsilon}{\lambda_2}
  \right)^{1/\nu_{\mathrm{smdo}}},
  \label{eq:smdo_qtilde_bound_theorem}\\
  \limsup_{t\to\infty}
  \|\dot{\tilde{\boldsymbol{q}}}(t)\|_\infty
  &\leq
  2
  \left(
    \frac{\epsilon}{\lambda_2}
  \right)^{1/\nu_{\mathrm{smdo}}},
  \label{eq:smdo_qdottilde_bound_theorem}\\
  \limsup_{t\to\infty}
  \|\tilde{\boldsymbol{d}}_{\mathrm{smdo}}(t)\|
  &\leq
  \epsilon.
  \label{eq:smdo_residual_bound_theorem}
\end{align}
\end{theorem}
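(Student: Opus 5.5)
The plan is to reduce the theorem to Lemma~\ref{lem:canonical_adaptive_boundary_layer} via the correspondence in Table~\ref{tab:canonical_mapping}. That is, we set $\boldsymbol{x}=\tilde{\boldsymbol{q}}$, $\boldsymbol{z}_1=\hat{\boldsymbol{s}}_1$, $\boldsymbol{z}_2=\hat{\boldsymbol{s}}_2$, $(\alpha,\beta,\nu,\Delta)=(\lambda_1,\lambda_2,\nu_{\mathrm{smdo}},\epsilon)$, $(P,P^*,\kappa)=(L,L^*,\eta)$ and $\boldsymbol{u}=\hat{\boldsymbol{u}}_{\mathrm{smdo}}^{\mathrm{ad}}$. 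With these identifications, Eqs.~\eqref{eq:smdo_adaptive_injection}--\eqref{eq:smdo_adaptive_gain_update} are literally Eqs.~\eqref{eq:canonical_adaptive_input}--\eqref{eq:canonical_gain_update}. The surfaces in Eq.~\eqref{eq:smdo_surfaces_theorem} are Eqs.~\eqref{eq:canonical_z1}--\eqref{eq:canonical_z2}.

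The only thing to construct is the generic drift $\boldsymbol{\xi}(t)$ in Eq.~\eqref{eq:z2_dynamics_generic}. After the cancellation component Eq.~\eqref{eq:smdo_cancellation_component} is applied, Eq.~\eqref{eq:smdo_s2_residual_relation} gives the algebraic identity $\hat{\boldsymbol{s}}_2=\boldsymbol{d}_{\mathrm{eff}}+\hat{\boldsymbol{u}}_{\mathrm{smdo}}^{\mathrm{ad}}$. Differentiating, using (S1) for the differentiability of $\boldsymbol{d}_{\mathrm{eff}}$, yields $\dot{\hat{\boldsymbol{s}}}_2=\dot{\boldsymbol{d}}_{\mathrm{eff}}+\dot{\hat{\boldsymbol{u}}}_{\mathrm{smdo}}^{\mathrm{ad}}$. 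Hence $\boldsymbol{\xi}=\dot{\boldsymbol{d}}_{\mathrm{eff}}$ and $\Gamma=\Gamma_s$. Assumption (S1) then supplies Eq.~\eqref{eq:canonical_xi_bound}. Assumptions (S2)--(S3) supply the hypotheses of Lemma~\ref{lem:generic_gain_upper_bound} together with Eq.~\eqref{eq:canonical_input_rate_bound}, using $U^*=U_s^*>\Gamma_s$. Assumption (S4) gives the finite initial conditions.

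Invoking Lemma~\ref{lem:canonical_adaptive_boundary_layer} then gives Eqs.~\eqref{eq:smdo_s2_bound_theorem}--\eqref{eq:smdo_qdottilde_bound_theorem} directly from Eqs.~\eqref{eq:canonical_z2_limsup}--\eqref{eq:canonical_xdot_limsup}. The last bound, Eq.~\eqref{eq:smdo_residual_bound_theorem}, comes for free. By the definitions in Eq.~\eqref{eq:smdo_residual_def_theorem} and Eq.~\eqref{eq:smdo_s2_residual_relation}, $\tilde{\boldsymbol{d}}_{\mathrm{smdo}}(t)=\hat{\boldsymbol{s}}_2(t)$ identically. Its limsup is therefore the one in Eq.~\eqref{eq:smdo_s2_bound_theorem}.

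The main obstacle I expect is justifying the step $\boldsymbol{\xi}=\dot{\boldsymbol{d}}_{\mathrm{eff}}$ rigorously. Two issues arise. First, the residual $\boldsymbol{\Delta}_{can}$ is folded into $\boldsymbol{d}_{\mathrm{eff}}$, so (S1) must be read as bounding its derivative as well; I would state this explicitly rather than derive it. Second, $\hat{\boldsymbol{s}}_2$ contains the term $\operatorname{sgn}(\hat{\boldsymbol{s}}_1)|\hat{\boldsymbol{s}}_1|^{\nu_{\mathrm{smdo}}}$, which is non-Lipschitz at zero. I would avoid differentiating it directly by working only with the reduced identity Eq.~\eqref{eq:smdo_s2_residual_relation}, since in that identity this term has already been cancelled. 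I would also note that the componentwise cascade bounds are then inherited verbatim from the canonical lemma, without further work.
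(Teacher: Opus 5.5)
Your proposal matches the paper's proof. It uses the identity $\hat{\boldsymbol{s}}_2=\boldsymbol{d}_{\mathrm{eff}}+\hat{\boldsymbol{u}}_{\mathrm{smdo}}^{\mathrm{ad}}=\tilde{\boldsymbol{d}}_{\mathrm{smdo}}$, differentiates it to get $\boldsymbol{\xi}_{\mathrm{smdo}}=\dot{\boldsymbol{d}}_{\mathrm{eff}}$ with bound $\Gamma_s$, applies Lemma~\ref{lem:canonical_adaptive_boundary_layer} under the same correspondence, and reads off the residual bound from that identity; the only cosmetic difference is that the paper re-derives the identity by subtracting the auxiliary dynamics from the actual dynamics instead of citing Eq.~\eqref{eq:smdo_s2_residual_relation}, and (S1) is stated for $\boldsymbol{d}_{\mathrm{eff}}=\boldsymbol{d}+\boldsymbol{\Delta}_{\mathrm{can}}$ exactly as you read it.
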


\begin{proof}
The actual system and auxiliary observer satisfy
\begin{align}
  \ddot{\boldsymbol{q}}
  &=
  \boldsymbol{a}
  +
  \boldsymbol{u}_r
  +
  \hat{\boldsymbol{u}}_{\mathrm{asmc}}
  +
  \hat{\boldsymbol{u}}_{\mathrm{smdo}}
  +
  \boldsymbol{d},
  \label{eq:actual_system_smdo_proof}\\
  \ddot{\hat{\boldsymbol{q}}}
  &=
  \hat{\boldsymbol{a}}
  +
  \boldsymbol{u}_r
  +
  \hat{\boldsymbol{u}}_{\mathrm{asmc}}.
  \label{eq:auxiliary_system_smdo_proof}
\end{align}
Subtracting and writing $\tilde{\boldsymbol{a}}:=\boldsymbol{a}-\hat{\boldsymbol{a}}$ yields
\begin{equation}
  \ddot{\tilde{\boldsymbol{q}}}
  =
  \tilde{\boldsymbol{a}}
  +
  \hat{\boldsymbol{u}}_{\mathrm{smdo}}
  +
  \boldsymbol{d}.
  \label{eq:qtilde_dynamics}
\end{equation}
Substituting the decomposition
$\hat{\boldsymbol{u}}_{\mathrm{smdo}}
=\hat{\boldsymbol{u}}_{\mathrm{smdo}}^{\mathrm{can}}
+\hat{\boldsymbol{u}}_{\mathrm{smdo}}^{\mathrm{ad}}$
with Eq.~\eqref{eq:smdo_cancellation_component} into
Eq.~\eqref{eq:qtilde_dynamics} gives
\begin{equation}
  \ddot{\tilde{\boldsymbol{q}}}
  =
  -\lambda_1\dot{\tilde{\boldsymbol{q}}}
  -
  \lambda_2
  \operatorname{sgn}(\hat{\boldsymbol{s}}_1)
  |\hat{\boldsymbol{s}}_1|^{\nu_{\mathrm{smdo}}}
  +
  \hat{\boldsymbol{u}}_{\mathrm{smdo}}^{\mathrm{ad}}
  +
  \boldsymbol{d}
  +
  \boldsymbol{\Delta}_{\mathrm{can}},
  \label{eq:qtilde_reduced_dynamics}
\end{equation}
where $\boldsymbol{\Delta}_{\mathrm{can}}$ denotes any residual mismatch
from the imperfect cancellation of $\tilde{\boldsymbol{a}}$ and the
surface-dependent terms at the implementation level.
From the definition of $\hat{\boldsymbol{s}}_1$ and
Eq.~\eqref{eq:qtilde_reduced_dynamics},
\begin{equation}
  \dot{\hat{\boldsymbol{s}}}_1
  =
  \ddot{\tilde{\boldsymbol{q}}}
  +
  \lambda_1\dot{\tilde{\boldsymbol{q}}}
  =
  -\lambda_2
  \operatorname{sgn}(\hat{\boldsymbol{s}}_1)
  |\hat{\boldsymbol{s}}_1|^{\nu_{\mathrm{smdo}}}
  +
  \hat{\boldsymbol{u}}_{\mathrm{smdo}}^{\mathrm{ad}}
  +
  \boldsymbol{d}
  +
  \boldsymbol{\Delta}_{\mathrm{can}},
\end{equation}
and therefore
\begin{equation}
  \hat{\boldsymbol{s}}_2
  =
  \dot{\hat{\boldsymbol{s}}}_1
  +
  \lambda_2
  \operatorname{sgn}(\hat{\boldsymbol{s}}_1)
  |\hat{\boldsymbol{s}}_1|^{\nu_{\mathrm{smdo}}}
  =
  \boldsymbol{d}
  +
  \boldsymbol{\Delta}_{\mathrm{can}}
  +
  \hat{\boldsymbol{u}}_{\mathrm{smdo}}^{\mathrm{ad}}
  =:
  \boldsymbol{d}_{\mathrm{eff}}
  +
  \hat{\boldsymbol{u}}_{\mathrm{smdo}}^{\mathrm{ad}}
  =
  \tilde{\boldsymbol{d}}_{\mathrm{smdo}},
  \label{eq:s2_residual_identity}
\end{equation}
where $\boldsymbol{d}_{\mathrm{eff}} := \boldsymbol{d} + \boldsymbol{\Delta}_{\mathrm{can}}$
denotes the effective lumped disturbance defined by~\eqref{eq:deff_definition}. Differentiating Eq.~\eqref{eq:s2_residual_identity}, we obtain
\begin{equation}
  \dot{\hat{\boldsymbol{s}}}_2
  =
  \dot{\boldsymbol{d}}_{\mathrm{eff}}
  +
  \dot{\hat{\boldsymbol{u}}}_{\mathrm{smdo}}^{\mathrm{ad}}.
  \label{eq:s2dot_smdo_canonical}
\end{equation}
Equation~\eqref{eq:s2dot_smdo_canonical} matches
Eq.~\eqref{eq:z2_dynamics_generic} with
\[
  \boldsymbol{\xi}_{\mathrm{smdo}}
  =
  \dot{\boldsymbol{d}}_{\mathrm{eff}},
  \qquad
  \|\boldsymbol{\xi}_{\mathrm{smdo}}(t)\|
  \leq
  \Gamma_s,
\]
where the bound follows from assumption~(S1). Under the correspondence
\[
  (\boldsymbol{x},\,
  \boldsymbol{z}_1,\,
  \boldsymbol{z}_2,\,
  \alpha,\,\beta,\,
  P,\,P^*,\,\Delta,\,\kappa,\,\Gamma,\,U^*,\,\nu)
  \leftrightarrow
  (\tilde{\boldsymbol{q}},\,
  \hat{\boldsymbol{s}}_1,\,
  \hat{\boldsymbol{s}}_2,\,
  \lambda_1,\,\lambda_2,\,
  L,\,L^*,\,\epsilon,\,\eta,\,\Gamma_s,\,U_s^*,\,\nu_{smdo}),
\]
assumptions~(S1)--(S4) verify all hypotheses of
Lemma~\ref{lem:canonical_adaptive_boundary_layer}, and
Eqs.~\eqref{eq:smdo_s2_bound_theorem}--\eqref{eq:smdo_qdottilde_bound_theorem}
follow directly from
Eqs.~\eqref{eq:canonical_z2_limsup}--\eqref{eq:canonical_xdot_limsup}.
Finally, the identity~\eqref{eq:s2_residual_identity} and Eq.~\eqref{eq:smdo_s2_bound_theorem}
give $\limsup_{t\to\infty}\|\tilde{\boldsymbol{d}}_{\mathrm{smdo}}(t)\|\leq\epsilon$,
which is Eq.~\eqref{eq:smdo_residual_bound_theorem}.
\end{proof}

\begin{theorem}[ASMC tracking-error boundedness]
\label{thm:asmc_tracking_boundedness}
Define the estimated tracking error and estimated ASMC surfaces as
\begin{equation}
  \hat{\boldsymbol{e}}
  =
  \hat{\boldsymbol{q}}
  -
  \boldsymbol{q}_n,
  \qquad
  \hat{\boldsymbol{\sigma}}_1
  =
  \dot{\hat{\boldsymbol{e}}}
  +
  C_1\hat{\boldsymbol{e}},
  \label{eq:estimated_error_sigma1_theorem}
\end{equation}
and
\begin{equation}
  \hat{\boldsymbol{\sigma}}_2
  =
  \dot{\hat{\boldsymbol{\sigma}}}_1
  +
  C_2
  \operatorname{sgn}(\hat{\boldsymbol{\sigma}}_1)
  |\hat{\boldsymbol{\sigma}}_1|^{\nu_{\mathrm{asmc}}},
  \qquad
  0<\nu_{\mathrm{asmc}}<1.
  \label{eq:estimated_sigma2_theorem}
\end{equation}
The estimated-state ASMC command is decomposed as
$\hat{\boldsymbol{u}}_{\mathrm{asmc}}
=\hat{\boldsymbol{u}}_{\mathrm{asmc}}^{\mathrm{can}}
+\hat{\boldsymbol{u}}_{\mathrm{asmc}}^{\mathrm{ad}}$.
Here, the cancellation component removes the known nonlinear surface
term, while the adaptive component is generated dynamically using the
adaptive gain. The cancellation component is given by Eq.~\eqref{eq:second_order_asmc_cancellation_component}.
The ASMC adaptive-component rate and gain update are specified in
Eqs.~\eqref{eq:second_order_asmc_control}
and~\eqref{eq:second_order_asmc_gain_update}. These equations have the same
structure as the canonical adaptive-input and gain-update laws in
Eqs.~\eqref{eq:canonical_adaptive_input}
and~\eqref{eq:canonical_gain_update}, respectively. Assume:
\begin{enumerate}[label=\textup{(A\arabic*)}]
  \item $K(0)\in(0,U_a^*)$ for some $U_a^*>\Gamma_a$ where $\Gamma_a$ is an upper bound on $\boldsymbol{a}_n$ defined by $\boldsymbol{a}_n = A_1 \boldsymbol{q}_n +A_2 \dot{\boldsymbol{q}}_n$.
  \item $\|\dot{\hat{\boldsymbol{u}}}_{\mathrm{asmc}}^{\mathrm{ad}}(t)\|\leq U_a^*$
        for all $t\geq0$, reflecting actuator-rate saturation at the
        implementation level.
  \item $\hat{\boldsymbol{e}}(0)$ and $\dot{\hat{\boldsymbol{e}}}(0)$
        are finite.
\end{enumerate}

Then the estimated tracking variables satisfy
\begin{align}
  \limsup_{t\to\infty}
  \|\hat{\boldsymbol{\sigma}}_2(t)\|
  &\leq
  \delta,
  \label{eq:asmc_sigma2_bound_theorem}\\
  \limsup_{t\to\infty}
  \|\hat{\boldsymbol{\sigma}}_1(t)\|_\infty
  &\leq
  \left(
    \frac{\delta}{C_2}
  \right)^{1/\nu_{\mathrm{asmc}}},
  \label{eq:asmc_sigma1_bound_theorem}\\
  \limsup_{t\to\infty}
  \|\hat{\boldsymbol{e}}(t)\|_\infty
  &\leq
  \frac{1}{C_1}
  \left(
    \frac{\delta}{C_2}
  \right)^{1/\nu_{\mathrm{asmc}}},
  \label{eq:asmc_ehat_bound_theorem}\\
  \limsup_{t\to\infty}
  \|\dot{\hat{\boldsymbol{e}}}(t)\|_\infty
  &\leq
  2
  \left(
    \frac{\delta}{C_2}
  \right)^{1/\nu_{\mathrm{asmc}}}.
  \label{eq:asmc_edothat_bound_theorem}
\end{align}
Moreover, since $\boldsymbol{e}=\hat{\boldsymbol{e}}+\tilde{\boldsymbol{q}}$
and $\dot{\boldsymbol{e}}=\dot{\hat{\boldsymbol{e}}}+\dot{\tilde{\boldsymbol{q}}}$,
the actual tracking errors satisfy
\begin{align}
  \limsup_{t\to\infty}
  \|\boldsymbol{e}(t)\|_\infty
  &\leq
  \frac{1}{C_1}
  \left(
    \frac{\delta}{C_2}
  \right)^{1/\nu_{\mathrm{asmc}}}
  +
  \frac{1}{\lambda_1}
  \left(
    \frac{\epsilon}{\lambda_2}
  \right)^{1/\nu_{\mathrm{smdo}}},
  \label{eq:asmc_actual_e_bound_theorem}\\
  \limsup_{t\to\infty}
  \|\dot{\boldsymbol{e}}(t)\|_\infty
  &\leq
  2
  \left(
    \frac{\delta}{C_2}
  \right)^{1/\nu_{\mathrm{asmc}}}
  +
  2
  \left(
    \frac{\epsilon}{\lambda_2}
  \right)^{1/\nu_{\mathrm{smdo}}}.
  \label{eq:asmc_actual_edot_bound_theorem}
\end{align}
\end{theorem}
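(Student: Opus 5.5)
The plan is to reproduce the structure of the proof of Theorem~\ref{thm:smdo_residual_boundedness}, with $\hat{\boldsymbol{e}}$ as the canonical error variable. First, write $\hat{\boldsymbol{\sigma}}_2$ in its reduced form. Substituting Eqs.~\eqref{eq:asmc_input_decomposition} and \eqref{eq:second_order_asmc_cancellation_component} into Eq.~\eqref{eq:estimated_sigma2_expanded} gives Eq.~\eqref{eq:estimated_sigma2_reduced_dynamics}, namely $\hat{\boldsymbol{\sigma}}_2=-\boldsymbol{a}_n+\hat{\boldsymbol{u}}^{\mathrm{ad}}_{\mathrm{asmc}}$. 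Because the ASMC cancellation uses only quantities internal to the auxiliary system, this identity is exact and no residual of the $\boldsymbol{\Delta}_{\mathrm{can}}$ type appears. Differentiating it yields
\[
\dot{\hat{\boldsymbol{\sigma}}}_2=-\dot{\boldsymbol{a}}_n+\dot{\hat{\boldsymbol{u}}}^{\mathrm{ad}}_{\mathrm{asmc}},
\]
which is exactly the canonical form Eq.~\eqref{eq:z2_dynamics_generic} with $\boldsymbol{\xi}_{\mathrm{asmc}}=-\dot{\boldsymbol{a}}_n$.

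The step I expect to be the main obstacle is verifying hypothesis Eq.~\eqref{eq:canonical_xi_bound}. Lemma~\ref{lem:canonical_adaptive_boundary_layer} requires a bound on $\dot{\boldsymbol{a}}_n$, whereas (A1) is phrased in terms of $\boldsymbol{a}_n$. I would close this gap using the explicit structure of the reference $\boldsymbol{q}_n$. On $[0,t_f]$, $\boldsymbol{q}_n$ is the position part of Eq.~\eqref{eq:xi_n}, which is smooth on a compact interval, so $\dot{\boldsymbol{a}}_n=A_1\dot{\boldsymbol{q}}_n+A_2\ddot{\boldsymbol{q}}_n$ is bounded there. For $t>t_f$, $\boldsymbol{q}_n=\boldsymbol{q}_d$ is a pure sinusoid of amplitude $O(\rho)$ and frequency $n$, so $\|\dot{\boldsymbol{a}}_n\|\le c\,n^3\rho$ for an explicit constant $c$. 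At $t=t_f$ the acceleration $\ddot{\boldsymbol{q}}_n$ may jump, since $\boldsymbol{u}_r$ is switched off. That jump only produces a bounded jump in $\hat{\boldsymbol{\sigma}}_2$, and the limsup statements depend only on the interval $t>t_f$, where $\boldsymbol{\xi}_{\mathrm{asmc}}$ is bounded and continuous. I would therefore apply the lemma on $[t_f,\infty)$, with finite initial data at $t_f$ ensured by (A3) and finite-time boundedness on $[0,t_f]$. Accordingly, I would read $\Gamma_a$ in (A1) as a common bound on this driving term, and I would note this reinterpretation explicitly in the write-up.

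With that bound in place, (A1)--(A3) supply the remaining hypotheses of Lemmas~\ref{lem:generic_gain_upper_bound} and \ref{lem:canonical_adaptive_boundary_layer} under the correspondence of Table~\ref{tab:canonical_mapping}:
\[
(\boldsymbol{x},\boldsymbol{z}_1,\boldsymbol{z}_2,\alpha,\beta,P,P^*,\Delta,\kappa,\Gamma,U^*,\nu)\leftrightarrow(\hat{\boldsymbol{e}},\hat{\boldsymbol{\sigma}}_1,\hat{\boldsymbol{\sigma}}_2,C_1,C_2,K,K^*,\delta,\gamma,\Gamma_a,U_a^*,\nu_{\mathrm{asmc}}).
\]
Reading off Eqs.~\eqref{eq:canonical_z2_limsup}--\eqref{eq:canonical_xdot_limsup} then gives Eqs.~\eqref{eq:asmc_sigma2_bound_theorem}--\eqref{eq:asmc_edothat_bound_theorem} directly.

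For the actual errors, I would use $\boldsymbol{e}=\hat{\boldsymbol{e}}+\tilde{\boldsymbol{q}}$ and $\dot{\boldsymbol{e}}=\dot{\hat{\boldsymbol{e}}}+\dot{\tilde{\boldsymbol{q}}}$, the triangle inequality in $\|\cdot\|_\infty$, and subadditivity of $\limsup$. Combining the bounds just obtained with Eqs.~\eqref{eq:smdo_qtilde_bound_theorem} and \eqref{eq:smdo_qdottilde_bound_theorem} of Theorem~\ref{thm:smdo_residual_boundedness} yields Eqs.~\eqref{eq:asmc_actual_e_bound_theorem} and \eqref{eq:asmc_actual_edot_bound_theorem}. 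This step relies on the triangular structure: the ASMC bounds never used the observer error, so the two analyses compose without any coupling argument.
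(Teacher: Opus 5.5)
Your proposal is correct and follows the paper's proof essentially step for step. The common steps are the exact reduced identity $\hat{\boldsymbol{\sigma}}_2=-\boldsymbol{a}_n+\hat{\boldsymbol{u}}^{\mathrm{ad}}_{\mathrm{asmc}}$, differentiation to the canonical form with $\boldsymbol{\xi}_{\mathrm{asmc}}=-\dot{\boldsymbol{a}}_n$, application of Lemma~\ref{lem:canonical_adaptive_boundary_layer}, and the triangle inequality with Theorem~\ref{thm:smdo_residual_boundedness}. The paper resolves the $\Gamma_a$ mismatch the same way you do, silently redefining $\Gamma_a$ in Eq.~\eqref{eq:xi_asmc_bound} as a bound on $\dot{\boldsymbol{a}}_n$, though it bounds $\dot{\boldsymbol{a}}_n$ via $\ddot{\boldsymbol{q}}_n=\boldsymbol{a}_n+\boldsymbol{u}_r$ rather than your piecewise smoothness argument. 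Your restart at $t_f$ is harmless but not needed, since the Lyapunov argument only uses a uniform bound on $\boldsymbol{\xi}_{\mathrm{asmc}}$, which may be piecewise continuous. Also, the switch at $t_f$ makes $\dot{\hat{\boldsymbol{\sigma}}}_2$ jump, not $\hat{\boldsymbol{\sigma}}_2$, which stays continuous because $\boldsymbol{q}_n$, $\dot{\boldsymbol{q}}_n$, and $\hat{\boldsymbol{u}}^{\mathrm{ad}}_{\mathrm{asmc}}$ are continuous there.
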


\begin{proof}
The auxiliary system dynamics and the feedforward structure of
Eq.~\eqref{eq:auxiliary_tracking_dynamics} yield
\begin{equation}
  \ddot{\hat{\boldsymbol{e}}}
  =
  \hat{\boldsymbol{a}}-\boldsymbol{a}_n
  +
  \hat{\boldsymbol{u}}_{\mathrm{asmc}}.
  \label{eq:estimated_error_acceleration_2}
\end{equation}
Substituting the decomposition
$\hat{\boldsymbol{u}}_{\mathrm{asmc}}
=\hat{\boldsymbol{u}}_{\mathrm{asmc}}^{\mathrm{can}}+\hat{\boldsymbol{u}}_{\mathrm{asmc}}^{\mathrm{ad}}$
with Eq.~\eqref{eq:second_order_asmc_cancellation_component}
into the definition of $\hat{\boldsymbol{\sigma}}_2$
and using the exactness of the ASMC cancellation
established in Section~\ref{subsec:second_order_asmc}, we obtain
\begin{equation}
  \hat{\boldsymbol{\sigma}}_2
  =
  -\boldsymbol{a}_n
  +
  \hat{\boldsymbol{u}}_{\mathrm{asmc}}^{\mathrm{ad}}.
  \label{eq:estimated_sigma2_reduced}
\end{equation}
Differentiating
Eq.~\eqref{eq:estimated_sigma2_reduced} gives
\begin{equation}
  \dot{\hat{\boldsymbol{\sigma}}}_2
  =
  \boldsymbol{\xi}_{\mathrm{asmc}}
  + \dot{\hat{\boldsymbol{u}}}_{\mathrm{asmc}}^{\mathrm{ad}},
  \label{eq:estimated_sigma2_canonical}
\end{equation}
where $\boldsymbol{\xi}_{\mathrm{asmc}}=-\dot{\boldsymbol{a}}_n$. Because the nominal trajectory $\boldsymbol{q}_n$ and $\dot{\boldsymbol{q}}_n$ are generated analytically over a finite maneuver interval, they are clearly bounded as can be seen from Eq.~\eqref{eq:qn_ref}. Hence $\boldsymbol{a}_n$ is also bounded. Moreover,
\begin{equation}
  \dot{\boldsymbol{a}}_n
  =
  A_1\dot{\boldsymbol{q}}_n
  +
  A_2\ddot{\boldsymbol{q}}_n
  =
  A_1\dot{\boldsymbol{q}}_n
  +
  A_2
  \left(
    \boldsymbol{a}_n+\boldsymbol{u}_r
  \right),
  \label{eq:cw_nominal_acceleration_derivative_bounded}
\end{equation}
which shows that $\dot{\boldsymbol{a}}_n$ is bounded as well since $\boldsymbol{u}_r$ is bounded according to Eq.~\eqref{eq:ur}.
Consequently, $\boldsymbol{\xi}_{\mathrm{asmc}}=-\dot{\boldsymbol{a}}_n$ is uniformly bounded and there therefore exists $\Gamma_a>0$ such that
\begin{equation}
  \|\boldsymbol{\xi}_{\mathrm{asmc}}(t)\|
  \leq
  \Gamma_a,
  \qquad
  \forall\,t\geq0.
  \label{eq:xi_asmc_bound}
\end{equation}
Equation~\eqref{eq:estimated_sigma2_canonical} matches the canonical
form~\eqref{eq:z2_dynamics_generic} under the correspondence
\[
  (\boldsymbol{x},\,
  \boldsymbol{z}_1,\,
  \boldsymbol{z}_2,\,
  \alpha,\,\beta,\,
  P,\,P^*,\,\Delta,\,\kappa,\,\Gamma,\,U^*,\,\nu)
  \leftrightarrow
  (\hat{\boldsymbol{e}},\,
  \hat{\boldsymbol{\sigma}}_1,\,
  \hat{\boldsymbol{\sigma}}_2,\,
  C_1,\,C_2,\,
  K,\,K^*,\,\delta,\,\gamma,\,\Gamma_a,\,U_a^*,\,\nu_{asmc}).
\]
Assumptions~(A1)--(A3) verify all hypotheses of
Lemma~\ref{lem:canonical_adaptive_boundary_layer}, and
Eqs.~\eqref{eq:asmc_sigma2_bound_theorem}--\eqref{eq:asmc_edothat_bound_theorem}
follow from
Eqs.~\eqref{eq:canonical_z2_limsup}--\eqref{eq:canonical_xdot_limsup}. 
The actual tracking errors satisfy
$\boldsymbol{e}=\hat{\boldsymbol{e}}+\tilde{\boldsymbol{q}}$
and
$\dot{\boldsymbol{e}}=\dot{\hat{\boldsymbol{e}}}+\dot{\tilde{\boldsymbol{q}}}$,
as shown in Eq.~\eqref{eq:actual_estimated_error_relation} below.
\begin{equation}
  \boldsymbol{e}
  =
  \boldsymbol{q}-\boldsymbol{q}_n
  =
  \hat{\boldsymbol{e}}+\tilde{\boldsymbol{q}},
  \qquad
  \dot{\boldsymbol{e}}
  =
  \dot{\hat{\boldsymbol{e}}}+\dot{\tilde{\boldsymbol{q}}}.
  \label{eq:actual_estimated_error_relation}
\end{equation}
Applying the triangle inequality to
Eqs.~\eqref{eq:asmc_ehat_bound_theorem},
\eqref{eq:asmc_edothat_bound_theorem},
\eqref{eq:smdo_qtilde_bound_theorem}, and
\eqref{eq:smdo_qdottilde_bound_theorem}
gives
Eqs.~\eqref{eq:asmc_actual_e_bound_theorem} and
\eqref{eq:asmc_actual_edot_bound_theorem}. It is noted that the present analysis does not claim a full composite Lyapunov proof for the entire observer-controller interconnection. Instead, it establishes practical ultimate boundedness by exploiting the triangular structure induced by the auxiliary-state formulation and by combining the SMDO and ASMC bounds through $\boldsymbol{e} = \hat{\boldsymbol{e}} + \tilde{\boldsymbol{q}}$.
\end{proof}

The SMDO and ASMC layers are not treated as two independently applied controllers acting on unrelated systems. Their interconnection has a triangular structure induced by the auxiliary-state formulation. Since the ASMC command is applied to both the actual dynamics and the auxiliary dynamics, it cancels from the observer-error dynamics when the two systems are subtracted. Therefore, the SMDO subsystem regulates the observer error ($\tilde{\boldsymbol{q}}$), whereas the ASMC subsystem regulates the estimated tracking error ($\hat{\boldsymbol{e}}$). The actual tracking error is then obtained through the algebraic decomposition
\[
\boldsymbol{e}=\hat{\boldsymbol{e}}+\tilde{\boldsymbol{q}},\qquad \dot{\boldsymbol{e}}=\dot{\hat{\boldsymbol{e}}}+\dot{\tilde{\boldsymbol{q}}}.
\]
Thus, the final tracking-error bound is obtained by combining the ultimate bounds of the two subsystems. This should be interpreted as a practical boundedness result for a triangular observer-controller interconnection, rather than as a full composite Lyapunov proof for a strongly coupled cascade system. If residual modeling errors, imperfect cancellation, or implementation mismatch introduce additional coupling terms, these terms can be included in the effective bounded perturbations of the SMDO and ASMC surface dynamics. In that case, the same analysis applies with enlarged perturbation bounds and consequently enlarged practical ultimate bounds.

\subsection*{Surrogate Derivatives for the Second-Order SMDO and ASMC Surfaces}
\label{subsec:surrogate_surface_derivatives}

Evaluation of $\hat{\boldsymbol{s}}_2$ and $\hat{\boldsymbol{\sigma}}_2$ requires the
derivatives of their respective first-order surfaces,
$\dot{\hat{\boldsymbol{s}}}_1$ and
$\dot{\hat{\boldsymbol{\sigma}}}_1$. Direct evaluation of these derivatives
requires the acceleration-level quantities
$\ddot{\boldsymbol{q}}$ and
$\ddot{\hat{\boldsymbol{q}}}$. Within the second-order branch, however, these
accelerations depend on the SMDO and ASMC commands being evaluated, whereas
the commands themselves depend on
$\hat{\boldsymbol{s}}_2$ and
$\hat{\boldsymbol{\sigma}}_2$. Computing the required accelerations from the
same branch would therefore create a circular dependency between
the surface derivatives, the control commands, and the resulting closed-loop
accelerations.

To avoid this implicit loop, a parallel first-order branch is introduced.
It uses the same plant, auxiliary-system, and tracking formulations as the second-order branch, but employs first-order control laws that do not require derivatives of second-order sliding surfaces. Consequently, the corresponding acceleration-level signals of second-order sliding surfaces can be generated explicitly, while the velocity-level terms are evaluated from the available measured/estimated velocities and the analytic reference velocity. The superscript $(1)$ is used below to distinguish signals supplied by this first-order branch.

The plant and auxiliary-system dynamics of the first-order branch are governed by
\begin{equation}
\ddot{\boldsymbol{q}}^{(1)}
=
\boldsymbol{a}^{(1)}
+
\boldsymbol{u}_r
+
\boldsymbol{u}_{\mathrm{asmc}}^{(1)}
+
\boldsymbol{u}_{\mathrm{smdo}}^{(1)}
+
\boldsymbol{d}^{(1)},
\label{eq:first-order_branch}
\end{equation}
\begin{equation}
\ddot{\hat{\boldsymbol{q}}}^{(1)}
=
\hat{\boldsymbol{a}}^{(1)}
+
\boldsymbol{u}_r
+
\boldsymbol{u}_{\mathrm{asmc}}^{(1)},
\label{eq:first-order_aux_branch}
\end{equation}
where the feedforward input $\boldsymbol{u}_r$ is pre-computed from the given
initial conditions and is identical to that used in the second-order branch.
This branch is not an additional physical plant and its commands are not applied to the actual closed-loop system. The term $\boldsymbol{d}^{(1)}$ denotes an bounded branch residual used only for surrogate-signal generation. It may represent a known perturbation model or a filtered disturbance estimate. It is not assumed to be identical to the actual lumped disturbance acting on the second-order closed-loop branch. Any acceleration-level discrepancy between the surrogate branch and the implemented second-order branch, including the effects of branch-state mismatch, command mismatch, navigation-error-induced implementation errors, sampling effects, and unmodeled residuals, is collected into the surrogate mismatch terms $\boldsymbol{\omega}_s$ and $\boldsymbol{\omega}_{\sigma}$, as established in Corollary~\ref{cor:surrogate_derivative_implementation}.

The first-order sliding surfaces associated with the SMDO observation error and the ASMC tracking error are defined, respectively, as
\begin{align}
\boldsymbol{s}_{1}^{(1)}
&=
\dot{\boldsymbol{q}}^{(1)}
-
\dot{\hat{\boldsymbol{q}}}^{(1)}
+
\lambda_{1}
\!\left(
\boldsymbol{q}^{(1)}
-
\hat{\boldsymbol{q}}^{(1)}
\right).
\label{eq:s1_first_order}
\\[4pt]
\boldsymbol{\sigma}_{1}^{(1)}
&=
\dot{\boldsymbol{q}}^{(1)}
-
\dot{\boldsymbol{q}}_{n}
+
C_{1}
\!\left(
\boldsymbol{q}^{(1)}
-
\boldsymbol{q}_{n}
\right),
\label{eq:sigma1_first_order}
\end{align}
The corresponding first-order control laws are given as
\begin{align}
\boldsymbol{u}_{\mathrm{smdo}}^{(1)}
&=
-\frac{L^{(1)}+L^{*}}{\epsilon}\,
\boldsymbol{s}_{1}^{(1)},
\label{eq:u_smdo_first_order}
\\[4pt]
\boldsymbol{u}_{\mathrm{asmc}}^{(1)}
&=
-\frac{K^{(1)}+K^{*}}{\delta}\,
\boldsymbol{\sigma}_{1}^{(1)},
\label{eq:u_asmc_first_order}
\end{align}
with adaptive gain update rules
\begin{align}
\dot{L}^{(1)}(t)
&=
\eta
\!\left[
L^{(1)}(t)
\!\left(
\frac{\bigl\|\boldsymbol{s}_{1}^{(1)}(t)\bigr\|}{\epsilon}
-1
\right)
+
\frac{L^{*}}{\epsilon}
\bigl\|\boldsymbol{s}_{1}^{(1)}(t)\bigr\|
\right],
\qquad L^{(1)}(0)>0.
\label{eq:L_first_order}
\\[4pt]
\dot{K}^{(1)}(t)
&=
\gamma
\!\left[
K^{(1)}(t)
\!\left(
\frac{\bigl\|\boldsymbol{\sigma}_{1}^{(1)}(t)\bigr\|}{\delta}
-1
\right)
+
\frac{K^{*}}{\delta}
\bigl\|\boldsymbol{\sigma}_{1}^{(1)}(t)\bigr\|
\right],
\qquad K^{(1)}(0)>0,
\label{eq:K_first_order}
\end{align}

For the SMDO, the surrogate derivative of the first-order observer surface is constructed as
\begin{equation}
\dot{\hat{\boldsymbol{s}}}_{1,\mathrm{sur}}
=
\left(
\ddot{\boldsymbol{q}}^{(1)}
-
\ddot{\hat{\boldsymbol{q}}}^{(1)}
\right)
+
\lambda_1
\left(
\dot{\boldsymbol{q}}
-
\dot{\hat{\boldsymbol{q}}}
\right).
\label{eq:smdo_surrogate_derivative}
\end{equation}
Similarly, the surrogate derivative used to evaluate the second-order ASMC surface is
\begin{equation}
\dot{\hat{\boldsymbol{\sigma}}}_{1,\mathrm{sur}}
=
\left(
\ddot{\hat{\boldsymbol{q}}}^{(1)}
-
\ddot{\boldsymbol{q}}_n
\right)
+
C_1
\left(
\dot{\hat{\boldsymbol{q}}}
-
\dot{\boldsymbol{q}}_n
\right).
\label{eq:asmc_surrogate_derivative}
\end{equation}

The first-order branch is used only to provide dynamically consistent
surrogate signals for evaluating the second-order surfaces. Its control
commands are not applied to the actual closed-loop system and do not replace
the proposed SMDO and ASMC commands. This parallel construction
therefore preserves the proposed sliding surface-based
control architecture while eliminating
the algebraic loop that would arise if the required acceleration-level
signals were computed within the same branch.

The proposed implementation should be distinguished from a Levant-type exact differentiator (\cite{levant1998robust}). The main stability analysis is derived for the exact second-order surfaces, whereas the implemented controller evaluates these surfaces using surrogate derivatives generated by the parallel first-order branch. Therefore, the analysis does not require the finite-time exactness conditions of Levant’s differentiator. If a Levant-type differentiator is used as an alternative implementation, then the usual regularity assumptions must be imposed on the differentiated signals, such as a known bound on the Lipschitz constant of their derivatives and bounded measurement noise. In contrast, the proposed surrogate-branch implementation is characterized by the bounded mismatch between the exact and surrogate derivatives. This mismatch is treated as a bounded acceleration-level approximation error and may result in enlarged practical ultimate bounds. A quantitative comparison with finite differencing and Levant's differentiator is provided in~\ref{sec:comparison}.

\begin{corollary}[Boundedness under surrogate-derivative implementation]
\label{cor:surrogate_derivative_implementation}

Theorems~\ref{thm:smdo_residual_boundedness}
and~\ref{thm:asmc_tracking_boundedness} are derived using the exact
derivatives $\dot{\hat{\boldsymbol{s}}}_1$ and
$\dot{\hat{\boldsymbol{\sigma}}}_1$. In the proposed
architecture, direct evaluation of these derivatives requires the
acceleration-level quantities $\ddot{\boldsymbol{q}}$ and
$\ddot{\hat{\boldsymbol{q}}}$, which depend on the SMDO and
ASMC commands; those commands in turn depend on
$\hat{\boldsymbol{s}}_2$ and $\hat{\boldsymbol{\sigma}}_2$, creating an
implicit loop. To avoid this dependency, the required acceleration-level
quantities are supplied by the parallel first-order branch, yielding the surrogate
derivatives $\dot{\hat{\boldsymbol{s}}}_{1,\mathrm{sur}}$ and
$\dot{\hat{\boldsymbol{\sigma}}}_{1,\mathrm{sur}}$.

Define the surrogate-derivative mismatches as
\begin{align}
  \boldsymbol{\omega}_s
  &:=
  \dot{\hat{\boldsymbol{s}}}_{1,\mathrm{sur}}
  -
  \dot{\hat{\boldsymbol{s}}}_1
  =
  \ddot{\tilde{\boldsymbol{q}}}^{(1)}
  -
  \ddot{\tilde{\boldsymbol{q}}},
  \label{eq:omega_s_def}\\
  \boldsymbol{\omega}_\sigma
  &:=
  \dot{\hat{\boldsymbol{\sigma}}}_{1,\mathrm{sur}}
  -
  \dot{\hat{\boldsymbol{\sigma}}}_1
  =
  \ddot{\hat{\boldsymbol{q}}}^{(1)}
  -
  \ddot{\hat{\boldsymbol{q}}},
  \label{eq:omega_sigma_def}
\end{align}
where $\ddot{\tilde{\boldsymbol{q}}}^{(1)} = \ddot{\boldsymbol{q}}^{(1)} - \ddot{\hat{\boldsymbol{q}}}^{(1)}$ and the superscript $(1)$ denotes signals generated by the first-order surrogate branch.

\textit{Boundedness of $\boldsymbol{\omega}_s$ and $\boldsymbol{\omega}_{\sigma}$.}
From Eq.~\eqref{eq:qtilde_reduced_dynamics} in the proof of
Theorem~\ref{thm:smdo_residual_boundedness},
\[
  \ddot{\tilde{\boldsymbol{q}}}
  =
  -\lambda_1\dot{\tilde{\boldsymbol{q}}}
  -
  \lambda_2
  \operatorname{sgn}(\hat{\boldsymbol{s}}_1)
  |\hat{\boldsymbol{s}}_1|^{\nu_{\mathrm{smdo}}}
  +
  \tilde{\boldsymbol{d}}_{\mathrm{smdo}}.
\]
Since $\dot{\tilde{\boldsymbol{q}}}$, $\hat{\boldsymbol{s}}_1$, and
$\tilde{\boldsymbol{d}}_{\mathrm{smdo}}$ are all bounded by
Theorem~\ref{thm:smdo_residual_boundedness}, it follows that
$\ddot{\tilde{\boldsymbol{q}}}$ is bounded. Also, from Eqs.~\eqref{eq:estimated_error_acceleration},~\eqref{eq:second_order_asmc_cancellation_component}, and~\eqref{eq:estimated_sigma2_reduced}, we obtain
\[
  \ddot{\hat{\boldsymbol{q}}}
  =
  \ddot{\boldsymbol{q}}_n
  +
  \hat{\boldsymbol{\sigma}}_2
  -
  C_1\dot{\hat{\boldsymbol{e}}}
  -
  C_2
  \operatorname{sgn}(\hat{\boldsymbol{\sigma}}_1)
  |\hat{\boldsymbol{\sigma}}_1|^{\nu_{\mathrm{asmc}}}.
\]
Since $\hat{\boldsymbol{\sigma}}_2$, $\dot{\hat{\boldsymbol{e}}}$, and
$\hat{\boldsymbol{\sigma}}_1$ are all bounded by
Theorem~\ref{thm:asmc_tracking_boundedness}, and $\ddot{\boldsymbol{q}}_n$
is bounded as established in the same proof, it follows that
$\ddot{\hat{\boldsymbol{q}}}$ is bounded.

As a result, the second-order branch
($\ddot{\tilde{\boldsymbol{q}}},\ddot{\hat{\boldsymbol{q}}}$) is bounded under
Theorems~\ref{thm:smdo_residual_boundedness}
and~\ref{thm:asmc_tracking_boundedness}.
For the first-order surrogate branch, the control inputs are algebraic,
given by Eqs.~\eqref{eq:u_smdo_first_order}--\eqref{eq:u_asmc_first_order}, and are assumed to satisfy
$\|\boldsymbol{u}_{\mathrm{smdo}}^{(1)}(t)\|\leq U_S^*$ and
$\|\boldsymbol{u}_{\mathrm{asmc}}^{(1)}(t)\|\leq U_A^*$ for all $t\geq0$,
reflecting magnitude saturation at the implementation level.
Under these admissibility conditions, the dynamics of
$\boldsymbol{\sigma}_1^{(1)}$ and $\boldsymbol{s}_1^{(1)}$ take the form
\begin{equation*}
  \dot{\boldsymbol{\sigma}}_1^{(1)} = \boldsymbol{\xi}_{\mathrm{asmc}}^{(1)}(t) + \boldsymbol{u}_{\mathrm{asmc}}^{(1)},
  \qquad
  \dot{\boldsymbol{s}}_1^{(1)} = \boldsymbol{\xi}_{\mathrm{smdo}}^{(1)}(t) + \boldsymbol{u}_{\mathrm{smdo}}^{(1)},
\end{equation*}
where $\boldsymbol{\xi}_{\mathrm{asmc}}^{(1)}$ and $\boldsymbol{\xi}_{\mathrm{smdo}}^{(1)}$
collect the remaining bounded terms.
Although $\boldsymbol{\sigma}_1^{(1)}$ and $\boldsymbol{s}_1^{(1)}$ are
first-order surfaces and do not share the nonlinear structure of
$\boldsymbol{z}_2$ in Eq.~\eqref{eq:canonical_z2}, their closed-loop
dynamics have the same additive form as
Eq.~\eqref{eq:z2_dynamics_generic}, with
$\boldsymbol{u}_{\mathrm{asmc}}^{(1)}$ and $\boldsymbol{u}_{\mathrm{smdo}}^{(1)}$
entering directly in place of $\dot{\boldsymbol{u}}$.
The gain update laws~\eqref{eq:K_first_order}--\eqref{eq:L_first_order}
and the control laws~\eqref{eq:u_asmc_first_order}--\eqref{eq:u_smdo_first_order}
match the canonical forms~\eqref{eq:canonical_gain_update}
and~\eqref{eq:canonical_adaptive_input} under the substitutions
$(P,P^*,\Delta,\kappa)\leftarrow(K^{(1)},K^*,\delta,\gamma)$
and
$(P,P^*,\Delta,\kappa)\leftarrow(L^{(1)},L^*,\epsilon,\eta)$,
respectively.
The same Lyapunov argument used to establish
$\limsup_{t\to\infty}\|\boldsymbol{z}_2(t)\|\leq\Delta$
in the proof of Lemma~\ref{lem:canonical_adaptive_boundary_layer}
therefore applies, and yields boundedness of
$\boldsymbol{\sigma}_1^{(1)}$ and $\boldsymbol{s}_1^{(1)}$.
Boundedness of $\ddot{\tilde{\boldsymbol{q}}}^{(1)}$ and
$\ddot{\hat{\boldsymbol{q}}}^{(1)}$ then follows directly from
Eqs.~\eqref{eq:first-order_branch}--\eqref{eq:first-order_aux_branch}
together with the bounded inputs and bounded remaining terms.

Hence the triangle inequality applied
to Eqs.~\eqref{eq:omega_s_def}--\eqref{eq:omega_sigma_def} yields
finite nonnegative constants $\bar{\omega}_s$ and $\bar{\omega}_\sigma$
such that
\begin{equation}
  \|\boldsymbol{\omega}_s(t)\|
  \leq
  \bar{\omega}_s,
  \qquad
  \|\boldsymbol{\omega}_\sigma(t)\|
  \leq
  \bar{\omega}_\sigma,
  \qquad
  \forall\,t\geq0.
  \label{eq:omega_bounds}
\end{equation}
The implemented second-order surfaces are
\begin{align}
  \hat{\boldsymbol{s}}_{2,\mathrm{imp}}
  &=
  \dot{\hat{\boldsymbol{s}}}_{1,\mathrm{sur}}
  +
  \lambda_2
  \operatorname{sgn}(\hat{\boldsymbol{s}}_1)
  |\hat{\boldsymbol{s}}_1|^{\nu_{\mathrm{smdo}}}
  =
  \hat{\boldsymbol{s}}_2
  +
  \boldsymbol{\omega}_s,
  \label{eq:s2_imp}\\
  \hat{\boldsymbol{\sigma}}_{2,\mathrm{imp}}
  &=
  \dot{\hat{\boldsymbol{\sigma}}}_{1,\mathrm{sur}}
  +
  C_2
  \operatorname{sgn}(\hat{\boldsymbol{\sigma}}_1)
  |\hat{\boldsymbol{\sigma}}_1|^{\nu_{\mathrm{asmc}}}
  =
  \hat{\boldsymbol{\sigma}}_2
  +
  \boldsymbol{\omega}_\sigma.
  \label{eq:sigma2_imp}
\end{align}
Applying the triangle inequality together with
Eqs.~\eqref{eq:smdo_s2_bound_theorem}, \eqref{eq:asmc_sigma2_bound_theorem},
and~\eqref{eq:omega_bounds},
\begin{align}
  \limsup_{t\to\infty}
  \|\hat{\boldsymbol{s}}_{2,\mathrm{imp}}(t)\|
  &\leq
  \epsilon + \bar{\omega}_s,
  \label{eq:s2imp_bound}\\
  \limsup_{t\to\infty}
  \|\hat{\boldsymbol{\sigma}}_{2,\mathrm{imp}}(t)\|
  &\leq
  \delta + \bar{\omega}_\sigma.
  \label{eq:sigma2imp_bound}
\end{align}
The implemented second-order surfaces remain bounded consistently with the exact-surface analysis of Theorems~\ref{thm:smdo_residual_boundedness}
and~\ref{thm:asmc_tracking_boundedness}, with enlarged practical bounds $\bar{\omega}_s$ and $\bar{\omega}_\sigma$. The downstream bounds on
$\hat{\boldsymbol{s}}_1$, $\tilde{\boldsymbol{q}}$,
$\hat{\boldsymbol{\sigma}}_1$, $\hat{\boldsymbol{e}}$, and their
time-derivatives are modified by replacing $\epsilon$ and $\delta$ with
$\epsilon+\bar{\omega}_s$ and $\delta+\bar{\omega}_\sigma$, respectively.
\end{corollary}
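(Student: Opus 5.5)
The plan is to treat the implemented surfaces as the exact surfaces plus a bounded additive perturbation, and to reuse Lemma~\ref{lem:canonical_adaptive_boundary_layer} rather than redo any Lyapunov analysis.

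\emph{Step 1: the identities.} Subtracting the exact derivatives $\dot{\hat{\boldsymbol{s}}}_1=\ddot{\tilde{\boldsymbol{q}}}+\lambda_1\dot{\tilde{\boldsymbol{q}}}$ and $\dot{\hat{\boldsymbol{\sigma}}}_1=\ddot{\hat{\boldsymbol{q}}}-\ddot{\boldsymbol{q}}_n+C_1\dot{\hat{\boldsymbol{e}}}$ from the surrogate expressions in Eqs.~\eqref{eq:smdo_surrogate_derivative}--\eqref{eq:asmc_surrogate_derivative}, the velocity-level terms cancel identically. This works because they are built from the same measured/estimated velocities and the analytic $\dot{\boldsymbol{q}}_n$. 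What remains are Eqs.~\eqref{eq:omega_s_def}--\eqref{eq:omega_sigma_def}. Adding the common fractional-power terms then gives Eqs.~\eqref{eq:s2_imp}--\eqref{eq:sigma2_imp}.

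\emph{Step 2: bounded mismatch.} I bound each acceleration separately and then apply the triangle inequality.
\begin{itemize}
\item For the second-order branch, $\ddot{\tilde{\boldsymbol{q}}}$ is written via Eq.~\eqref{eq:qtilde_reduced_dynamics} as a sum of $\dot{\tilde{\boldsymbol{q}}}$, the fractional-power term in $\hat{\boldsymbol{s}}_1$, and $\tilde{\boldsymbol{d}}_{\mathrm{smdo}}=\hat{\boldsymbol{s}}_2$.
\item Likewise, $\ddot{\hat{\boldsymbol{q}}}$ is expressed through $\hat{\boldsymbol{\sigma}}_2$, $\hat{\boldsymbol{\sigma}}_1$, $\dot{\hat{\boldsymbol{e}}}$ and $\ddot{\boldsymbol{q}}_n$.
\item All of these are ultimately bounded by Theorems~\ref{thm:smdo_residual_boundedness} and~\ref{thm:asmc_tracking_boundedness}. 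Together with continuity on compact initial intervals, this gives a uniform bound for all $t\ge0$.
\item For the first-order branch, the $\boldsymbol{\sigma}_1^{(1)}$ and $\boldsymbol{s}_1^{(1)}$ dynamics have the additive form of Eq.~\eqref{eq:z2_dynamics_generic}, with the algebraic commands in place of $\dot{\boldsymbol{u}}$. The first part of the proof of Lemma~\ref{lem:canonical_adaptive_boundary_layer}, i.e.\ the $V$-argument together with Lemma~\ref{lem:generic_gain_upper_bound} under the magnitude admissibility bounds $U_S^*,U_A^*$, gives ultimate boundedness of these surfaces.
\item Boundedness of $\ddot{\boldsymbol{q}}^{(1)}$ and $\ddot{\hat{\boldsymbol{q}}}^{(1)}$ then follows from Eqs.~\eqref{eq:first-order_branch}--\eqref{eq:first-order_aux_branch}, given bounded $\boldsymbol{u}_r$, bounded $\boldsymbol{d}^{(1)}$, and bounded model accelerations.
\end{itemize}
Taking suprema yields the constants $\bar\omega_s$ and $\bar\omega_\sigma$.

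\emph{Step 3: surface bounds.} Taking $\limsup$ in Eqs.~\eqref{eq:s2_imp}--\eqref{eq:sigma2_imp} together with Eqs.~\eqref{eq:smdo_s2_bound_theorem} and~\eqref{eq:asmc_sigma2_bound_theorem} gives Eqs.~\eqref{eq:s2imp_bound}--\eqref{eq:sigma2imp_bound}. For the downstream statement, note that the $\boldsymbol{z}_1$- and $\boldsymbol{x}$-parts of the proof of Lemma~\ref{lem:canonical_adaptive_boundary_layer} use only the componentwise bound $|z_{2,i}|\le\Delta$ in Eq.~\eqref{eq:canonical_z1_double_inequality}. Rerunning those comparison arguments with $\Delta$ replaced by $\epsilon+\bar\omega_s$ (respectively $\delta+\bar\omega_\sigma$) gives the enlarged bounds on $\hat{\boldsymbol{s}}_1$, $\tilde{\boldsymbol{q}}$, $\hat{\boldsymbol{\sigma}}_1$, $\hat{\boldsymbol{e}}$ and their derivatives.

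\emph{Main obstacle.} The hard part is Step 2's implicit circularity and uniformity. The bounds from the Theorems are ultimate, not global-in-time, and the first-order branch has its own unmodeled terms $\boldsymbol{\xi}^{(1)}$ whose boundedness must itself be argued. I would first try to handle this by stating explicitly that the hypotheses are the admissibility (saturation) conditions, which give a priori uniform bounds on all commands. Under those bounds every acceleration is bounded on $[0,\infty)$ by local existence plus boundedness of the right-hand sides. Only if that fails would I weaken the conclusion to $\limsup$ bounds on $\boldsymbol{\omega}_s$ and $\boldsymbol{\omega}_\sigma$, which suffice for Eqs.~\eqref{eq:s2imp_bound}--\eqref{eq:sigma2imp_bound}.
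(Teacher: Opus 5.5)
Your proposal follows the paper's argument essentially step for step: exact surface plus acceleration-level mismatch, boundedness of $\ddot{\tilde{\boldsymbol{q}}}$ and $\ddot{\hat{\boldsymbol{q}}}$ from Theorems~\ref{thm:smdo_residual_boundedness}--\ref{thm:asmc_tracking_boundedness}, the Lyapunov argument of Lemma~\ref{lem:canonical_adaptive_boundary_layer} transplanted to the first-order branch, the triangle inequality, and then $\limsup$ bounds with $\epsilon,\delta$ enlarged, while also making explicit what the paper leaves implicit (the cancellation of the velocity-level terms, the passage from ultimate to uniform-in-time bounds, and why the downstream comparison argument tolerates an enlarged $\Delta$). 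The one thing to drop is the fallback in your ``main obstacle'' paragraph: command saturation alone does not bound the branch accelerations on $[0,\infty)$, because $\boldsymbol{a}^{(1)}$ and $\hat{\boldsymbol{a}}^{(1)}$ depend on the branch states and the CW model is only marginally stable (a bounded input $u_z=\cos nt$ already makes $a_z=-n^2z$ grow linearly), so that boundedness must instead come from bounded $\boldsymbol{s}_1^{(1)},\boldsymbol{\sigma}_1^{(1)}$ through the stable first-order filters, a step for which the paper, like your main line, simply takes the boundedness of the $\boldsymbol{\xi}^{(1)}$ terms as given.
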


Corollary~\ref{cor:surrogate_derivative_implementation} confirms that the surrogate-derivative implementation does not compromise the stability conclusions of the analysis. Because the mismatches $\boldsymbol{\omega}_s$ and $\boldsymbol{\omega}_\sigma$ are bounded acceleration-level differences, the implemented second-order surfaces remain bounded, with the boundary-layer parameters $\epsilon$ and $\delta$ enlarged by the finite margins $\bar{\omega}_s$ and $\bar{\omega}_\sigma$.

\section{Implementation Summary of the Proposed Algorithm}
\label{sec:implementation_summary}

The proposed method is implemented as a two-phase algorithm that combines
analytic PCO insertion with the integrated SMDO--ASMC feedback
structure. The implementation procedure is summarized as follows.

\begin{enumerate}
\item \textit{Nominal trajectory design.}\;
      Specify the initial relative state $\boldsymbol{\xi}_0$, transfer time
      $t_f$, PCO radius $\rho$, and target mean motion $n$. Solve the quartic
      stationarity equation given in Eq.~\eqref{eq:phase_quartic_polynomial} to obtain the optimal PCO entry phase
      $\phi_0^\star$, and generate the nominal reference trajectory
      $\boldsymbol{q}_n(t)$, $\dot{\boldsymbol{q}}_n(t)$,
      $\ddot{\boldsymbol{q}}_n(t)$ using Eq.~\eqref{eq:qn_ref} and the feedforward input $\boldsymbol{u}_r(t)$ using Eq.~\eqref{eq:ur} for the Reaching Phase.

\item \textit{Initialization.}\;
      Set the auxiliary state $\hat{\boldsymbol{q}}(0)=\boldsymbol{q}(0)$, and select the initial
      adaptive gains $L(0)>0$ and $K(0)>0$.

\item \textit{SMDO update.}\;
      At each time step, propagate the auxiliary system
      Eq.~\eqref{eq:auxiliary_normalized_dynamics} and evaluate the observer
      surfaces $\hat{\boldsymbol{s}}_1$ and $\hat{\boldsymbol{s}}_2$.
      Update the adaptive gain $L(t)$ using Eq.~\eqref{eq:smdo_adaptive_gain_update} and integrate
      $\dot{\hat{\boldsymbol{u}}}_{\mathrm{smdo}}^{\mathrm{ad}}$ to obtain the total SMDO compensation
      $\hat{\boldsymbol{u}}_{\mathrm{smdo}}
      =\hat{\boldsymbol{u}}_{\mathrm{smdo}}^{\mathrm{can}}
      +\hat{\boldsymbol{u}}_{\mathrm{smdo}}^{\mathrm{ad}}$
      using Eq.~\eqref{eq:smdo_explicit}.

\item \textit{ASMC update.}\;
      Evaluate the estimated tracking surface $\hat{\boldsymbol{\sigma}}_1$
      using the reference-signal surrogate derivative, form
      $\hat{\boldsymbol{\sigma}}_2$, and update the adaptive gain $K(t)$ by Eq.~\eqref{eq:second_order_asmc_gain_update}
      to compute the total ASMC tracking command 
      $\hat{\boldsymbol{u}}_{\mathrm{asmc}}
      =\hat{\boldsymbol{u}}_{\mathrm{asmc}}^{\mathrm{can}}
      +\hat{\boldsymbol{u}}_{\mathrm{asmc}}^{\mathrm{ad}}$
      using Eq.~\eqref{eq:asmc_explicit}.

\item \textit{Control input application.}\;
      Apply the total input according to the active mission phase:
      \begin{equation}
        \boldsymbol{u}(t)
        =
        \begin{cases}
          \boldsymbol{u}_r(t)
          +\hat{\boldsymbol{u}}_{\mathrm{asmc}}(t)
          +\hat{\boldsymbol{u}}_{\mathrm{smdo}}(t),
          & 0\leq t\leq t_f,\\[4pt]
          \hat{\boldsymbol{u}}_{\mathrm{asmc}}(t)
          +\hat{\boldsymbol{u}}_{\mathrm{smdo}}(t),
          & t>t_f.
        \end{cases}
      \end{equation}
      Repeat steps 3--5 at each time step until the end of the mission.
\end{enumerate}

\section{Numerical Simulation Results}
\label{sec:numerical_results}

This section presents the numerical evaluation of the proposed optimal PCO-entry phase selection and the integrated SMDO--ASMC control framework. The simulation environment and disturbance model are described first, followed by validation of the analytic PCO-entry solution. The closed-loop behavior is then examined in terms of the chaser trajectory, control inputs, observer estimation, tracking-error boundedness, and adaptive gain evolution. Finally, an additional simulation under additive navigation measurement noise is presented to assess the robustness of the output-feedback structure against navigation uncertainty and to examine the degree to which measurement contamination is isolated within the observer layer.

\subsection{Simulation Setup}
\label{subsec:simulation_setup}

The total applied control input is
\begin{equation}
    \boldsymbol{u}_{\mathrm{tot}}
    =
    \boldsymbol{u}_r
    +
    \hat{\boldsymbol{u}}_{\mathrm{smdo}}
    +
    \hat{\boldsymbol{u}}_{\mathrm{asmc}},
\end{equation}
where $\boldsymbol{u}_r$ is the reaching-phase input,
$\hat{\boldsymbol{u}}_{\mathrm{smdo}}$ is the SMDO disturbance-compensation
term, and $\hat{\boldsymbol{u}}_{\mathrm{asmc}}$ is the ASMC tracking term.
The actual relative-motion dynamics are
\begin{equation}
    \ddot{\boldsymbol{q}}
    =
    \boldsymbol{a}
    +
    \boldsymbol{u}_{\mathrm{tot}}
    +
    \boldsymbol{d},
    \label{eq:simulation_full_dynamics}
\end{equation}
where $\boldsymbol{a}$ is the nominal relative acceleration from
Section~\ref{sec:system_modeling_nominal_design} and $\boldsymbol{d}$
is an external disturbance acceleration injected into the plant but
withheld from both the controller and the observer. The disturbance
takes the form
\begin{equation}
    \boldsymbol{d}(t)
    =
    1.8\times10^{-4}
    \begin{bmatrix}
        1 - 1.5\sin(3nt) \\
        0.5\sin(2nt) \\
        \sin(3nt)
    \end{bmatrix}
    \,\mathrm{m/s^2},
    \label{eq:simulation_disturbance}
\end{equation}
where $n$ is the target mean motion. This disturbance is time-varying,
bounded, and oscillatory at frequencies that are integer multiples of
$n$. It is applied only to the actual spacecraft dynamics and is not
provided to the controller or observer; the SMDO must reconstruct its
effect solely from the measured and estimated state information.

The SMDO supplies the estimated states $\hat{\boldsymbol{q}}$ and
$\dot{\hat{\boldsymbol{q}}}$ to the ASMC tracking layer, so the
controller operates on observer output without direct access to the
true state. The controller and observer parameters are listed in
Table~\ref{tab:simulation_parameters}. The initial value of $\hat{\boldsymbol{u}}_{\mathrm{asmc}}^{ad}$ was chosen to ensure
$\hat{\boldsymbol{u}}_{\mathrm{asmc}}(0)=\boldsymbol{0}$. All simulations were performed
in MATLAB/Simulink with a fixed time step of $0.001\,\mathrm{s}$ over
two orbital periods of the target.

\begin{table}[H]
\centering
\caption{Controller and observer parameters used in the numerical
simulations.}
\label{tab:simulation_parameters}
\begin{tabular}{cc|cc}
\toprule
\multicolumn{2}{c|}{SMDO} & \multicolumn{2}{c}{ASMC} \\
\cmidrule(lr){1-2}\cmidrule(lr){3-4}
Parameter & Value & Parameter & Value \\
\midrule
$\epsilon$ & $1.0\times10^{-5}$
& $\delta$ & $0.01$ \\
$\lambda_1$ & $2$
& $C_1$ & $2$ \\
$\eta$ & $0.01$
& $\gamma$ & $0.001$ \\
$L(0)$ & $0.01$
& $K(0)$ & $0.2$ \\
$L^{*}$ & $0.01$
& $K^{*}$ & $0.1$ \\
$\lambda_2$ & $3$
& $C_2$ & $3$ \\
$\nu_{\mathrm{smdo}}$ & $8/11$
& $\nu_{\mathrm{asmc}}$ & $9/11$ \\
$\hat{\boldsymbol{u}}_{\mathrm{smdo}}^{ad}(0)$ & 
$\begin{bmatrix}
        0 & 0 & 0
    \end{bmatrix}^{\top}$
& $\hat{\boldsymbol{u}}_{\mathrm{asmc}}^{ad}(0)$ & 
$10^{-4}\times\begin{bmatrix}
        14.7465 & -2.2136 & -13.4749
    \end{bmatrix}^{\top}$ \\
\bottomrule
\end{tabular}
\end{table}

\subsection{Numerical Validation of the Optimal PCO-Entry Phase}
\label{subsec:optimal_phi0_validation}

The analytic phase-selection procedure of
Section~\ref{subsec:pco_entry_phase_selection} was validated against a
brute-force grid search under the initial conditions
\begin{equation}
    \boldsymbol{\xi}_0 =
    \begin{bmatrix}
        100 & 100 & 1100 & 0.1 & 0.5 & 0.8
    \end{bmatrix}^{\top}\,\mathrm{m,\,m/s},
\end{equation}
with transfer time $t_f = 3600\,\mathrm{s}$, PCO radius
$\rho = 1000\,\mathrm{m}$, and target mean motion
$n = 1.1068\times10^{-3}\,\mathrm{rad/s}$.

The real roots of the quartic equation in
Eq.~\eqref{eq:phase_quartic_polynomial} were computed analytically, and
each admissible candidate was evaluated against
\begin{equation}
    J(\phi_0)
    =
    \frac{1}{2}
    G(\phi_0)^{\top}
    C_{\Phi}^{\top}
    S_f^{-1}
    C_{\Phi}
    G(\phi_0),
    \qquad
    G(\phi_0)
    =
    \Phi_f^{-1}\boldsymbol{\xi}_f(\phi_0)
    -
    \Phi_0^{-1}\boldsymbol{\xi}_0.
\end{equation}
The analytic optimum and the corresponding transfer cost were obtained as
\begin{equation}
    \phi_{0,\mathrm{analytic}}^{\star}
    =
    49.2878^{\circ},
    \qquad
    J_{\mathrm{analytic}}^{\star}
    =
    4.2236\times10^{-4}\,\mathrm{m^2/s^3},
\end{equation}
while a grid search over $\phi_0 \in [0^{\circ}, 360^{\circ})$ at a
resolution of $0.01^{\circ}$ produced
\begin{equation}
    \phi_{0,\mathrm{grid}}^{\star}
    =
    49.2900^{\circ},
    \qquad
    J_{\mathrm{grid}}^{\star}
    =
    4.2236\times10^{-4}\,\mathrm{m^2/s^3}.
\end{equation}
The phase discrepancy of $2.25\times10^{-3}\,\mathrm{deg}$ is smaller
than the grid resolution, and the corresponding cost difference of
$7.4\times10^{-13}\,\mathrm{m^2/s^3}$ is negligible. The cost curve
$J(\phi_0)$ is shown in Fig.~\ref{fig:optimal_phi0_cost_curve}; the
analytic minimizer coincides with the numerical minimum to within the
resolution of the sweep.

\begin{figure}[H]
    \centering
    \includegraphics[
        width=0.45\linewidth,
        trim={0 310 0 310},
        clip
    ]{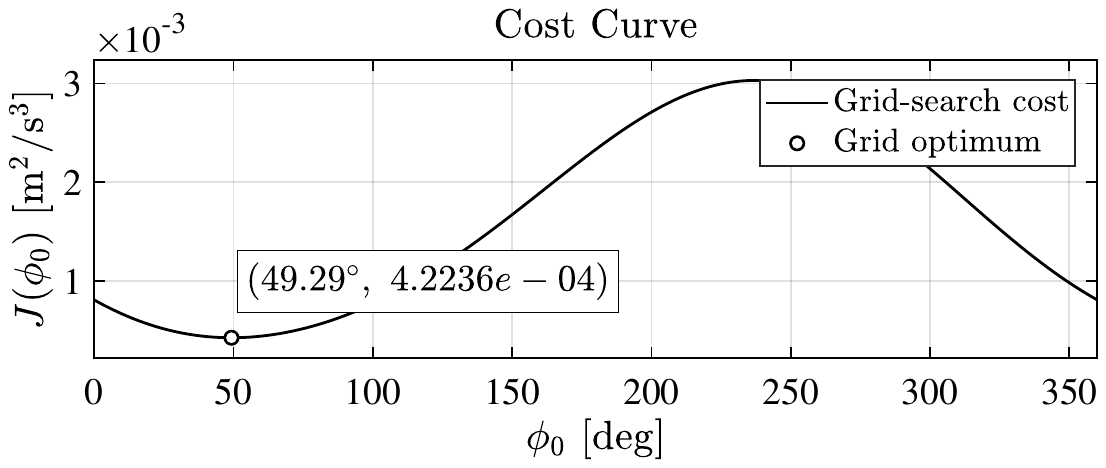}
    \caption{Transfer cost $J(\phi_0)$ as a function of the PCO entry
    phase.}
    \label{fig:optimal_phi0_cost_curve}
\end{figure}

As a terminal consistency check, the state produced by the optimal
reaching input at $t = t_f$ was compared with the desired PCO state:
\begin{equation}
    \left\|
        \boldsymbol{\xi}_n(t_f)
        -
        \boldsymbol{\xi}_f\!\left(\phi_{0,\mathrm{analytic}}^{\star}\right)
    \right\|
    =
    1.497\times10^{-11}.
\end{equation}
This residual establishes that the constructed nominal trajectory reaches
the intended PCO state at the prescribed terminal time. The resulting
trajectory is adopted as the tracking reference for all subsequent
closed-loop simulations.

\subsection{Baseline Closed-Loop Performance without Navigation Error}
\label{subsec:closed_loop_ideal}
Before examining the subsystem-level behavior in detail, the overall
closed-loop response is presented in terms of the chaser trajectory and
the three control input components. The subsequent analysis then proceeds
layer by layer, from the SMDO observer through the ASMC tracking
controller, following the boundedness hierarchy established in
Section~\ref{sec:boundedness_stability}.

\paragraph{Chaser trajectory and control inputs}
The trajectory of the chaser satellite in the relative frame is shown
in Fig.~\ref{fig:Trajectory_Chaser_Satellite}. The chaser is steered
from its initial offset toward the PCO and subsequently maintains the
formation geometry throughout the simulation. The transition between
the Reaching Phase and the PCO Phase is governed by the
interplay of the reaching input and the SMDO--ASMC feedback structure.

\begin{figure}[H]
    \centering
    \includegraphics[width=0.45\linewidth]{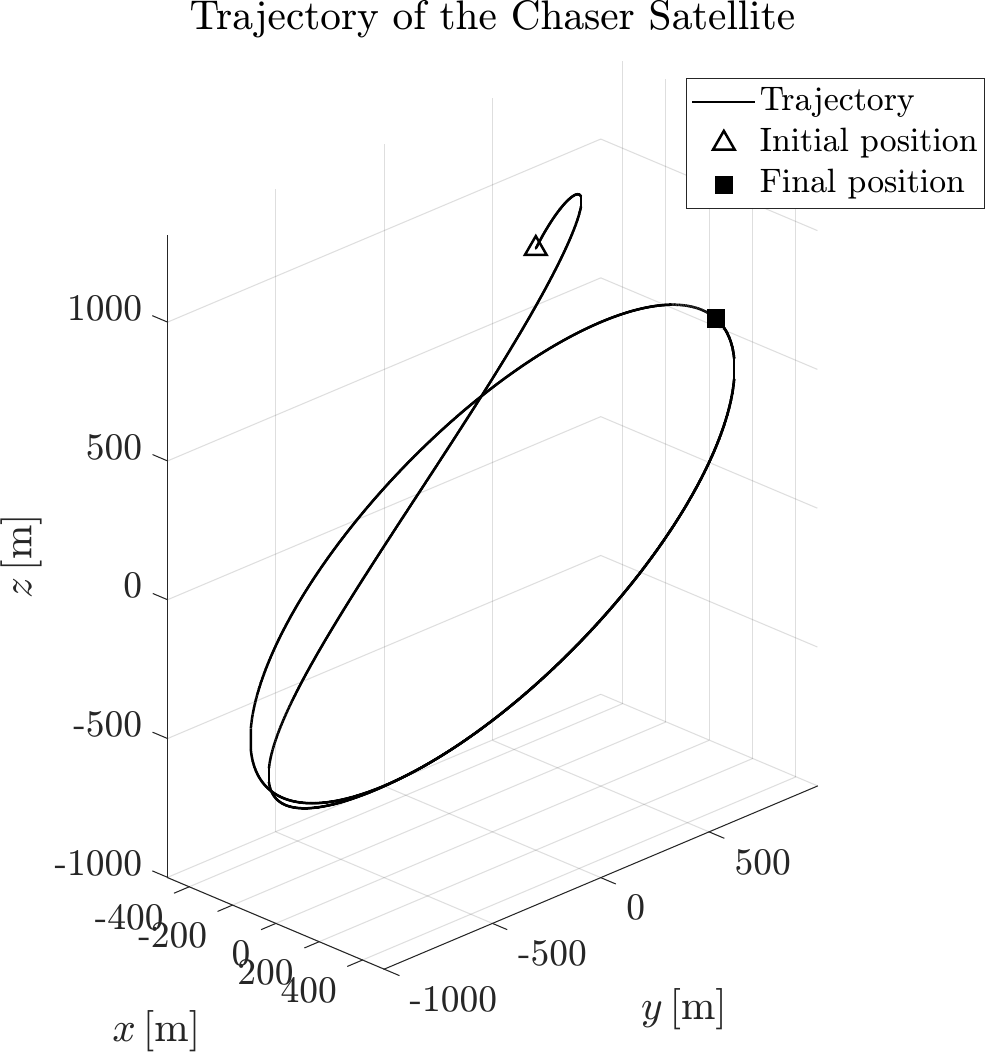}
    \caption{Trajectory of the chaser satellite in the relative frame.}
    \label{fig:Trajectory_Chaser_Satellite}
\end{figure}

The three control input components are shown in
Figs.~\ref{fig:reaching_input_second_order}--\ref{fig:u_asmc_second_order}.
The reaching input $\boldsymbol{u}_r$
(Fig.~\ref{fig:reaching_input_second_order}) is active during the
transfer interval and drives the nominal state toward the analytically
selected PCO terminal condition. Once the transfer is complete,
$\boldsymbol{u}_r$ becomes zero and the feedback layer assumes
full responsibility for maintaining the formation. The SMDO compensation
input $\hat{\boldsymbol{u}}_{\mathrm{smdo}}$
(Fig.~\ref{fig:u_smdo}) tracks the injected disturbance profile and
effectively reduces its effect on the plant before it reaches the tracking layer.
The ASMC tracking input $\hat{\boldsymbol{u}}_{\mathrm{asmc}}$
(Fig.~\ref{fig:u_asmc_second_order}) handles the residual that the
observer cannot reconstruct, keeping the estimated tracking error within
the prescribed boundary layer. All three components remain bounded
throughout the two-orbit simulation.

\begin{figure}[H]
    \centering
    \begin{subfigure}{0.32\linewidth}
        \centering
        \includegraphics[width=\linewidth]{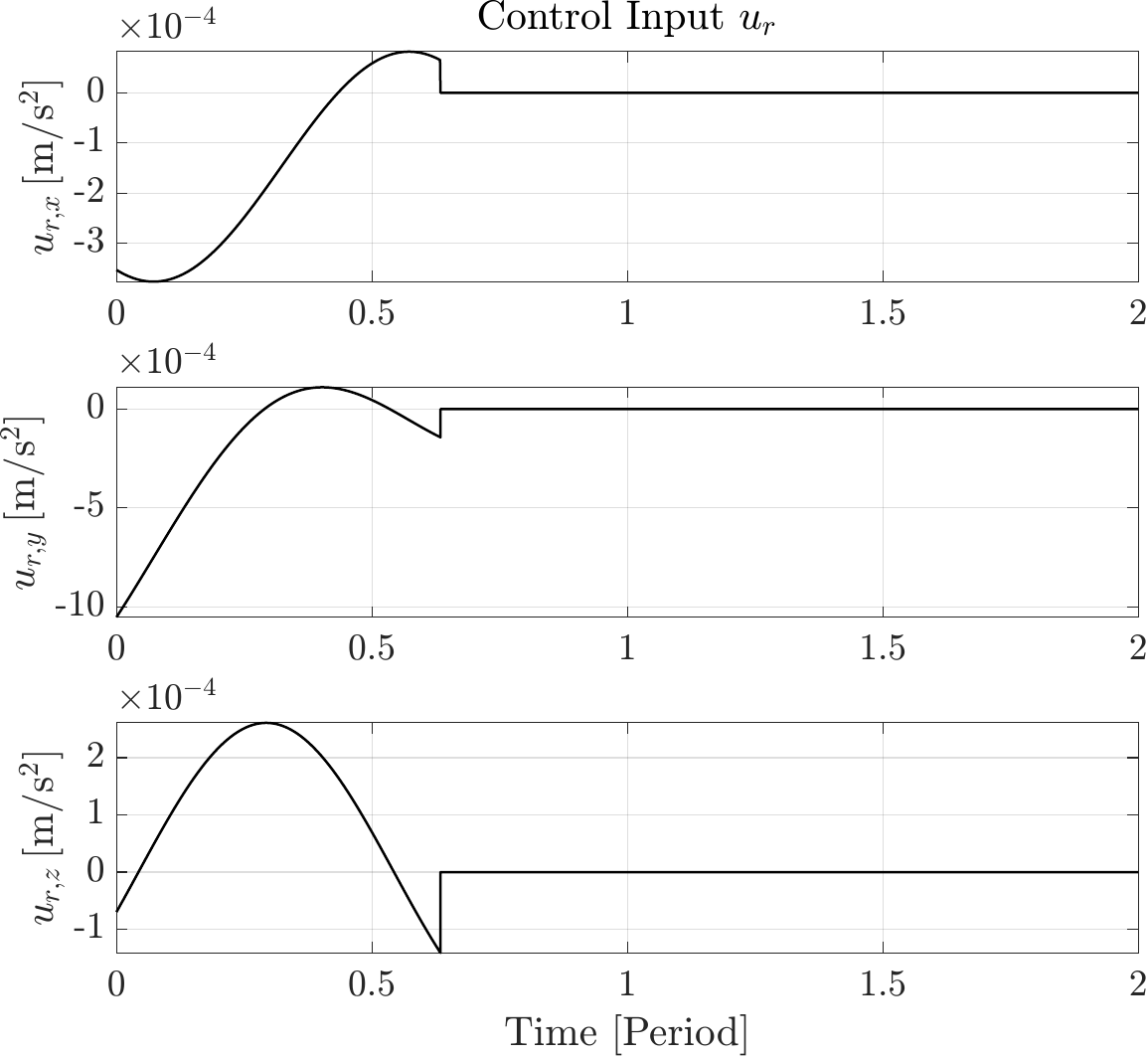}
        \caption{Reaching-phase input $\boldsymbol{u}_r$.}
        \label{fig:reaching_input_second_order}
    \end{subfigure}
    \hfill
    \begin{subfigure}{0.32\linewidth}
        \centering
        \includegraphics[width=\linewidth]{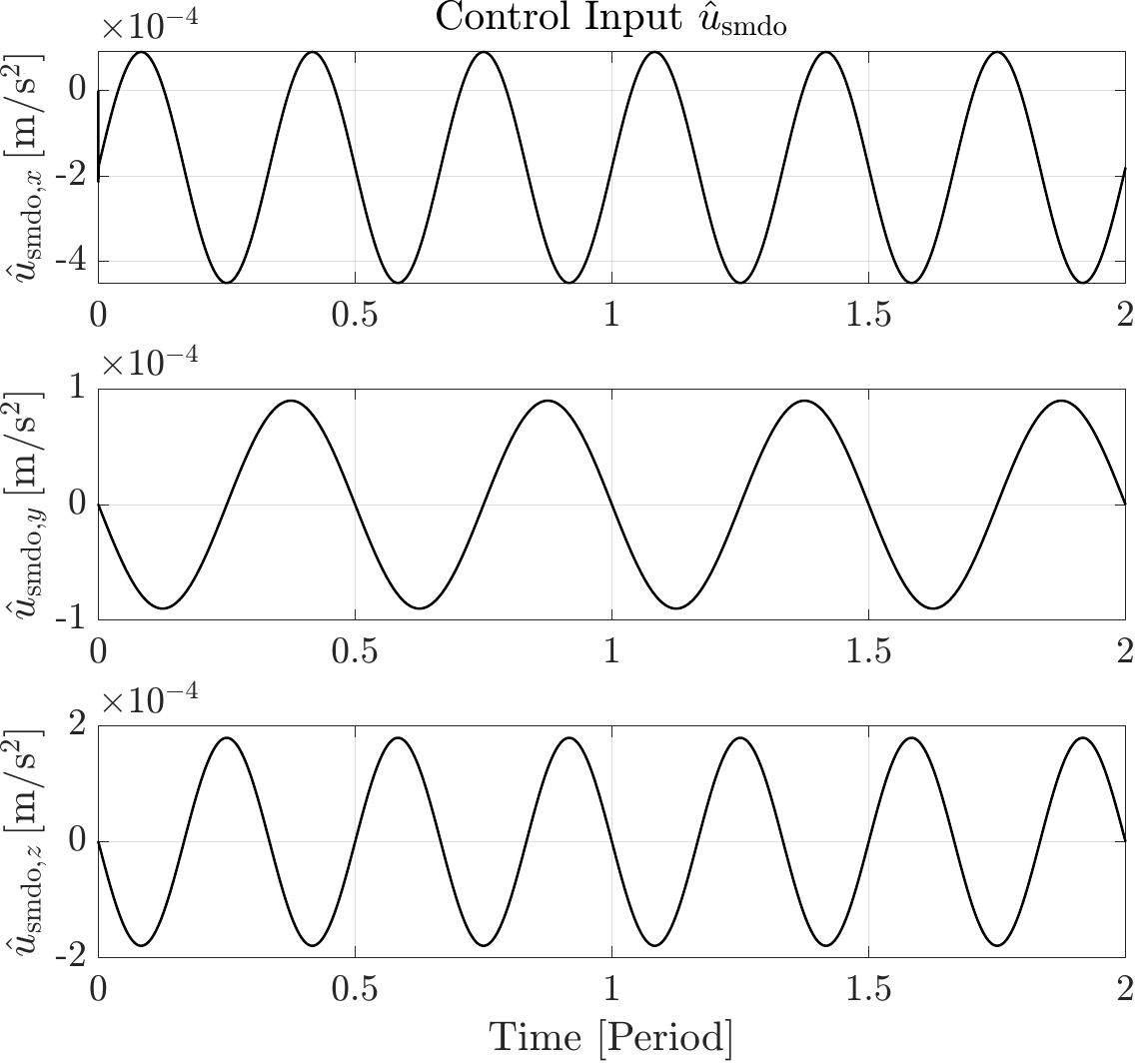}
        \caption{SMDO compensation input
        $\hat{\boldsymbol{u}}_{\mathrm{smdo}}$.}
        \label{fig:u_smdo}
    \end{subfigure}
    \hfill
    \begin{subfigure}{0.32\linewidth}
        \centering
        \includegraphics[width=\linewidth]{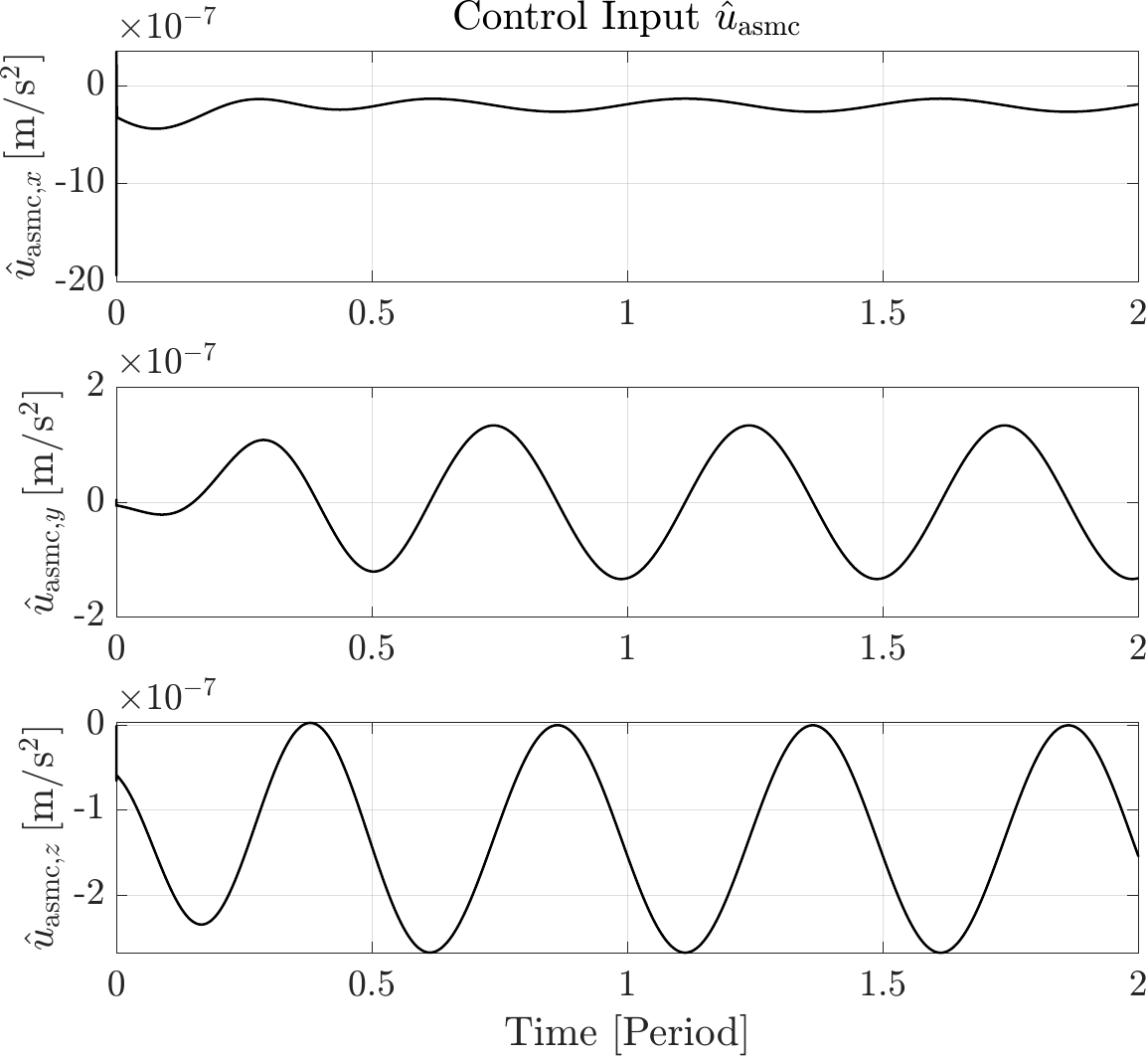}
        \caption{ASMC tracking input
        $\hat{\boldsymbol{u}}_{\mathrm{asmc}}$.}
        \label{fig:u_asmc_second_order}
    \end{subfigure}
    \caption{Control input components of the proposed second-order SMDO--ASMC structure.}
    \label{fig:control_input_components_second_order}
\end{figure}

The SMDO and ASMC inputs are further decomposed into their cancellation and
adaptive components in Figs.~\ref{fig:smdo_input_decomposition}
and~\ref{fig:asmc_input_decomposition}, respectively. For the SMDO,
the cancellation component
$\hat{\boldsymbol{u}}_{\mathrm{smdo}}^{\mathrm{can}}$
contains the known terms introduced to cancel the nominal part of the
observer-error dynamics, whereas the adaptive component
$\hat{\boldsymbol{u}}_{\mathrm{smdo}}^{\mathrm{ad}}$ is generated by the
adaptive law to compensate for the remaining unknown disturbance effect.
The bounded responses of both components indicate that the SMDO
compensation is achieved without unbounded growth in either the
cancellation term or the adaptive injection term.

\begin{figure}[H]
    \centering
    \begin{subfigure}{0.48\linewidth}
        \centering
        \includegraphics[width=\linewidth]{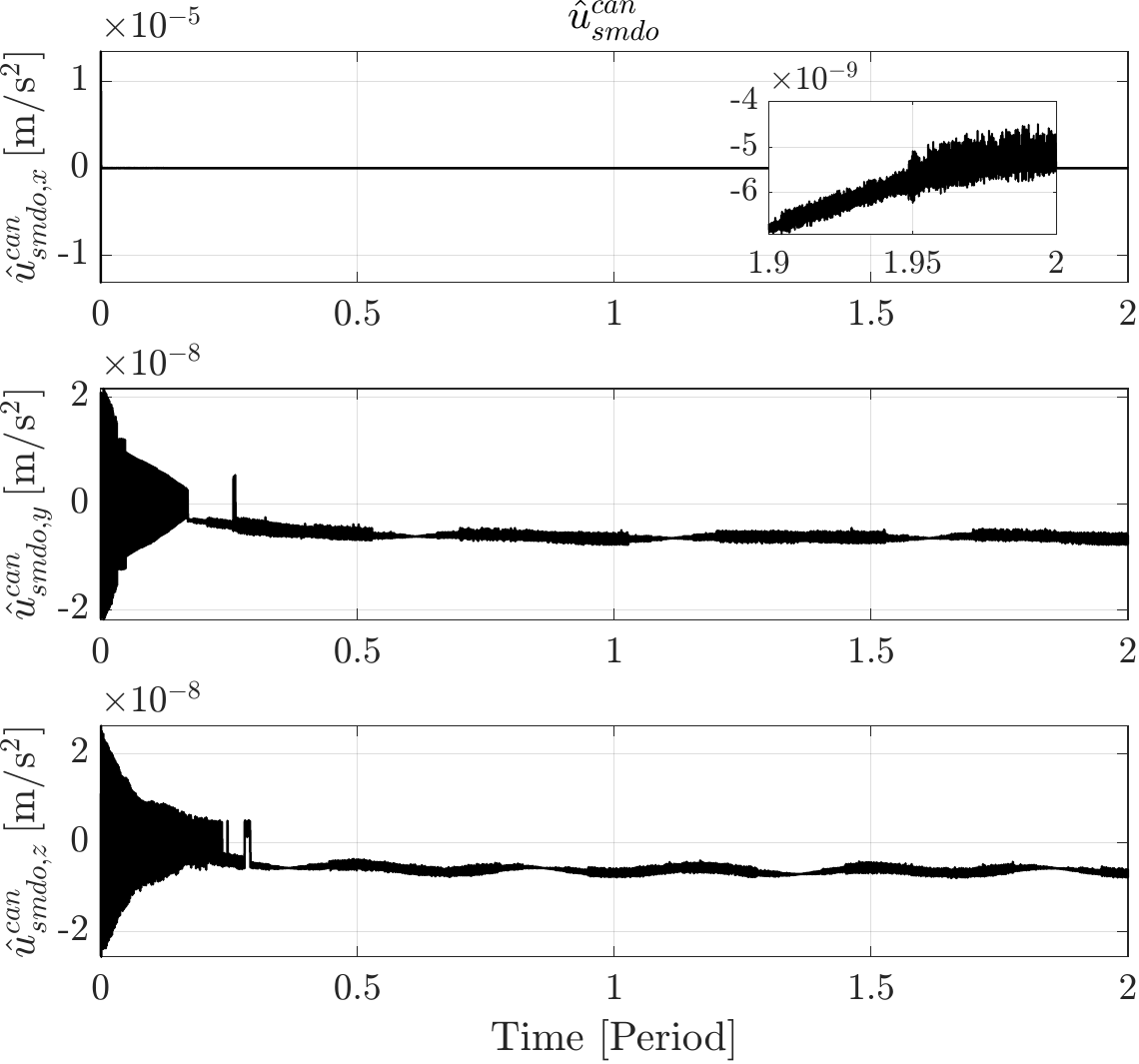}
        \caption{SMDO cancellation component
        $\hat{\boldsymbol{u}}_{\mathrm{smdo}}^{\mathrm{can}}$.}
        \label{fig:u_smdo_can}
    \end{subfigure}
    \hfill
    \begin{subfigure}{0.48\linewidth}
        \centering
        \includegraphics[width=\linewidth]{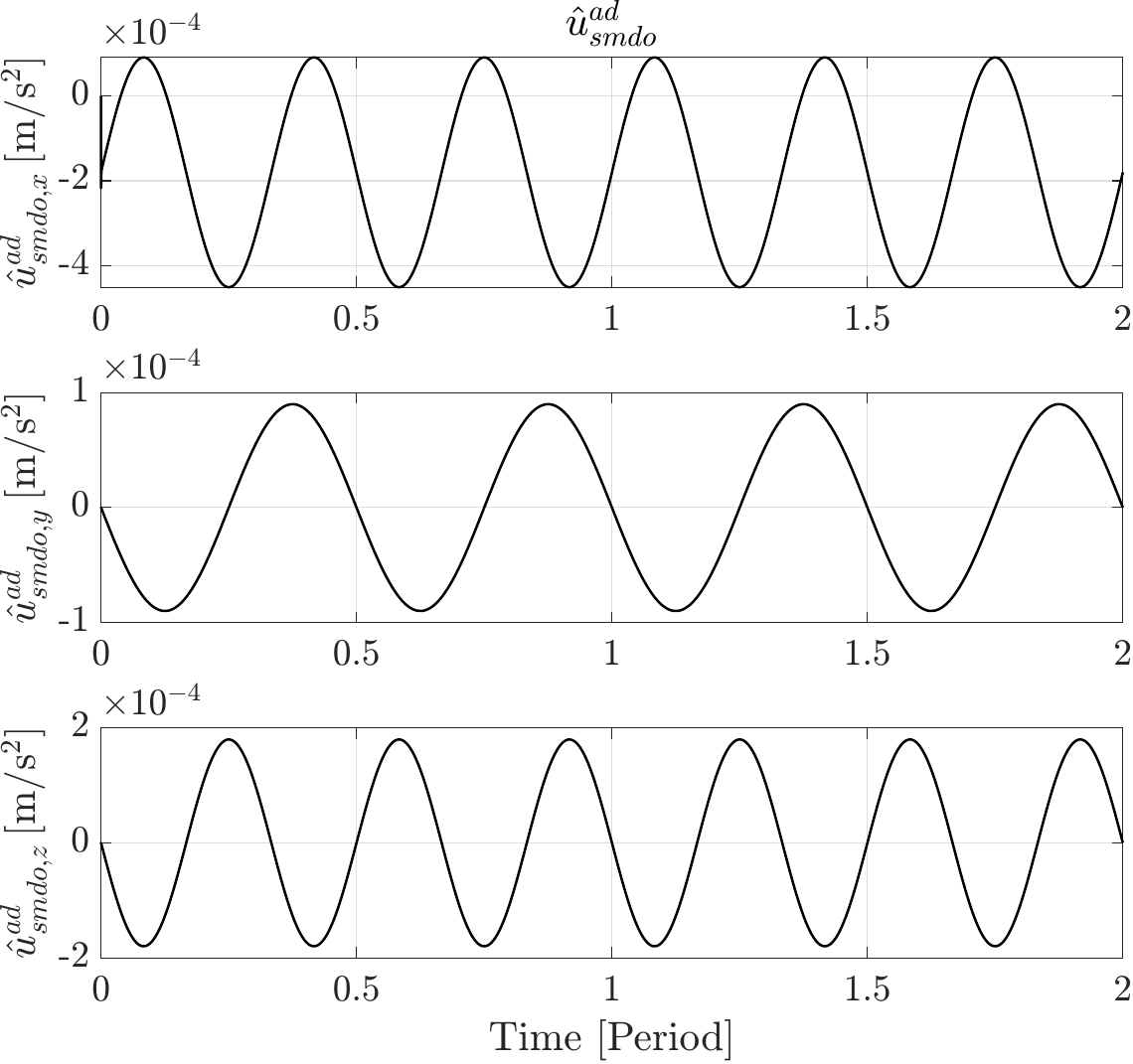}
        \caption{Adaptive SMDO component
        $\hat{\boldsymbol{u}}_{\mathrm{smdo}}^{\mathrm{ad}}$.}
        \label{fig:u_smdo_ad}
    \end{subfigure}
    \caption{Cancellation and adaptive components of the SMDO compensation input.}
    \label{fig:smdo_input_decomposition}
\end{figure}

For the ASMC input, the cancellation component
$\hat{\boldsymbol{u}}_{\mathrm{asmc}}^{\mathrm{can}}$ removes the known
surface-dependent terms in the estimated tracking-error dynamics, while
the adaptive component
$\hat{\boldsymbol{u}}_{\mathrm{asmc}}^{\mathrm{ad}}$ provides the remaining
robust correction against the lumped uncertainty in the reduced
second-order surface dynamics. The two components remain finite over the
entire simulation, which is consistent with the bounded behavior of the
ASMC surfaces and the estimated tracking error.

\begin{figure}[H]
    \centering
    \begin{subfigure}{0.48\linewidth}
        \centering
        \includegraphics[width=\linewidth]{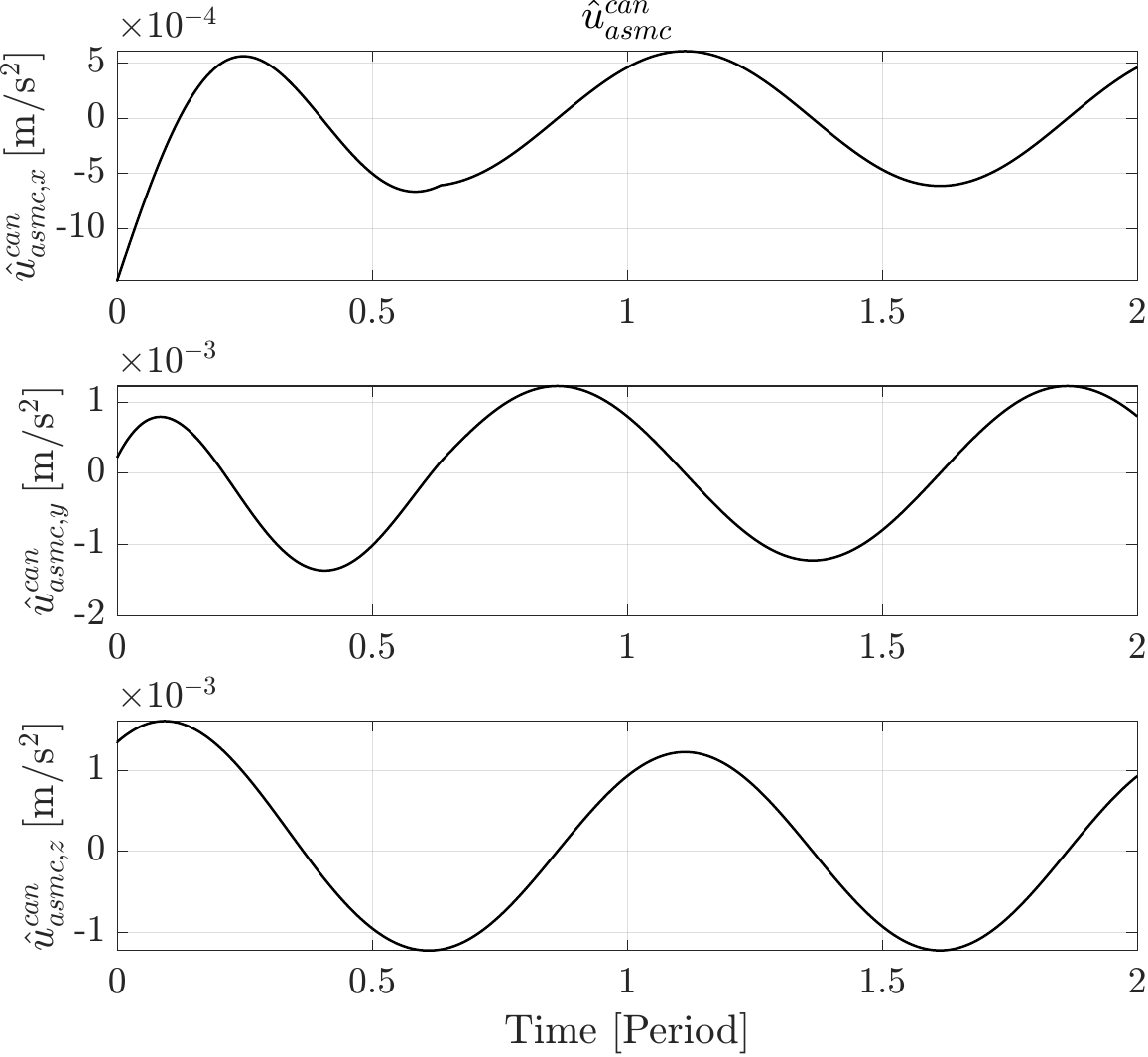}
        \caption{ASMC cancellation component
        $\hat{\boldsymbol{u}}_{\mathrm{asmc}}^{\mathrm{can}}$.}
        \label{fig:u_asmc_can}
    \end{subfigure}
    \hfill
    \begin{subfigure}{0.48\linewidth}
        \centering
        \includegraphics[width=\linewidth]{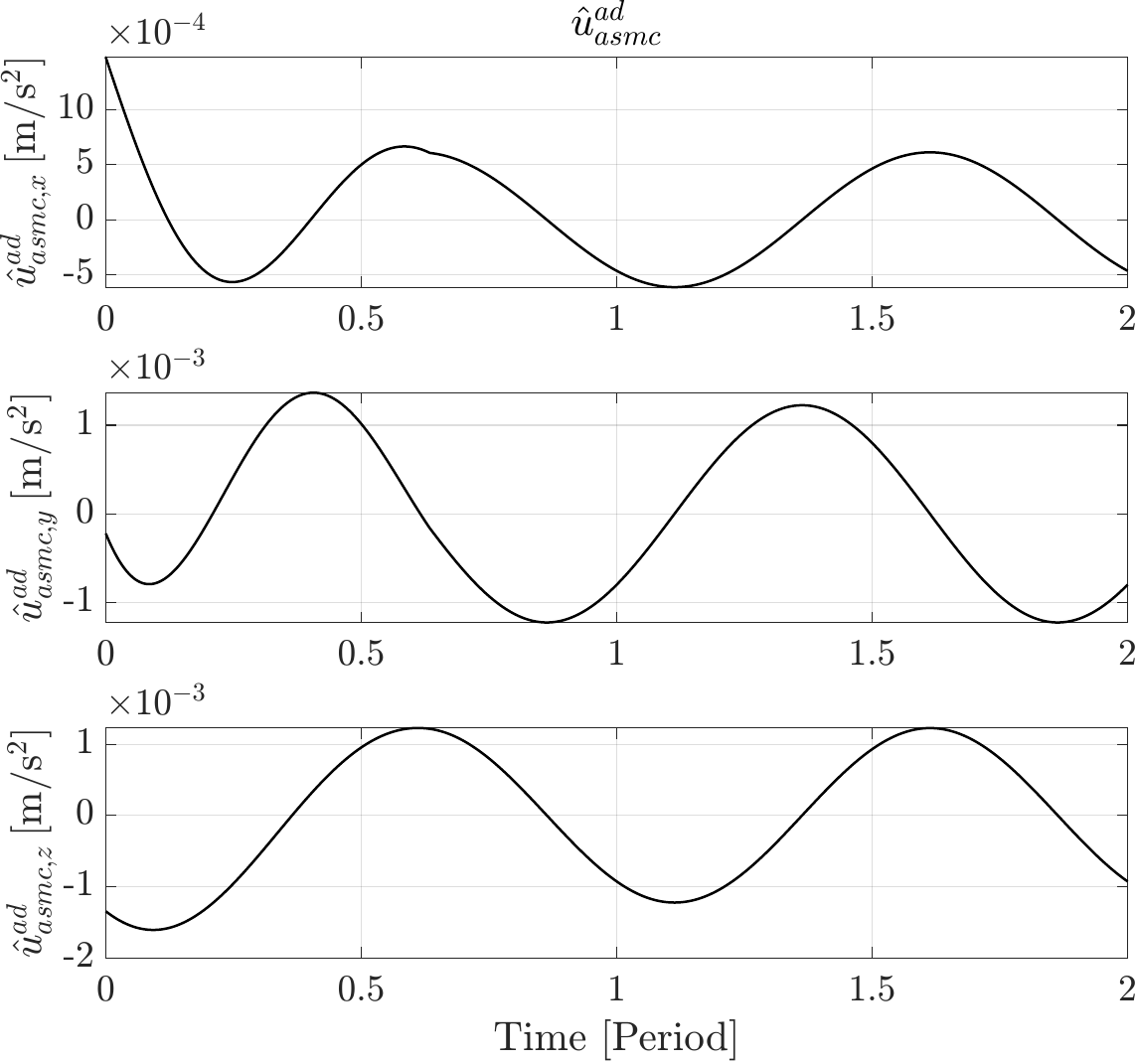}
        \caption{Adaptive ASMC component
        $\hat{\boldsymbol{u}}_{\mathrm{asmc}}^{\mathrm{ad}}$.}
        \label{fig:u_asmc_ad}
    \end{subfigure}
    \caption{Cancellation and adaptive components of the ASMC tracking input.}
    \label{fig:asmc_input_decomposition}
\end{figure}

With the control structure established, the observer and tracking layers are examined in turn.
\paragraph{SMDO Estimation Performance}

The observer layer is analyzed in terms of the SMDO sliding surfaces,
the resulting state-estimation errors, and the residual disturbance
after compensation. This sequence reflects the propagation of
boundedness established in Theorem~\ref{thm:smdo_residual_boundedness}:
once $\hat{\boldsymbol{s}}_2$ is confined within its boundary layer,
the downstream quantities $\hat{\boldsymbol{s}}_1$,
$\tilde{\boldsymbol{q}}$, $\dot{\tilde{\boldsymbol{q}}}$, and
$\tilde{\boldsymbol{d}}_{\mathrm{smdo}}$ are all ultimately bounded
as well.

With $\epsilon = 1.0\times10^{-5}$, $\lambda_1 = 2$,
$\lambda_2 = 3$, and $\nu_{\mathrm{smdo}} = 8/11$, the estimated ultimate bounds are
\begin{align}
  \limsup_{t\to\infty}\|\hat{\boldsymbol{s}}_2(t)\|
  &\leq
  1.0\times10^{-5}\ \mathrm{m/s^2},
  \label{eq:smdo_s2_numerical_bound}\\
  \limsup_{t\to\infty}\|\hat{\boldsymbol{s}}_1(t)\|_\infty
  &\leq
  2.9441\times10^{-8}\ \mathrm{m/s},
  \label{eq:smdo_s1_numerical_bound}\\
  \limsup_{t\to\infty}\|\tilde{\boldsymbol{q}}(t)\|_\infty
  &\leq
  1.4721\times10^{-8}\ \mathrm{m},
  \label{eq:smdo_qtilde_numerical_bound}\\
  \limsup_{t\to\infty}\|\dot{\tilde{\boldsymbol{q}}}(t)\|_\infty
  &\leq
  5.8883\times10^{-8}\ \mathrm{m/s},
  \label{eq:smdo_qdottilde_numerical_bound}\\
  \limsup_{t\to\infty}\|\tilde{\boldsymbol{d}}_{\mathrm{smdo}}(t)\|
  &\leq
  1.0\times10^{-5}\ \mathrm{m/s^2}.
  \label{eq:smdo_residual_numerical_bound}
\end{align}

Figure~\ref{fig:SMDO_surfaces} shows the two SMDO sliding surfaces.
The second-order surface $\hat{\boldsymbol{s}}_2$
(Fig.~\ref{fig:SMDO_s2}) is driven into the neighborhood determined
by $\epsilon$ after the initial transient, with all three components
remaining confined near the origin thereafter. The first-order surface
$\hat{\boldsymbol{s}}_1$ (Fig.~\ref{fig:SMDO_s1}), which represents
the filtered state-estimation error, inherits this boundedness: once
$\hat{\boldsymbol{s}}_2$ enters its layer, $\hat{\boldsymbol{s}}_1$
remains within the induced bound of
Eq.~\eqref{eq:smdo_s1_numerical_bound}.

\begin{figure}[H]
  \centering
  \begin{subfigure}{0.48\linewidth}
    \centering
    \includegraphics[width=\linewidth]{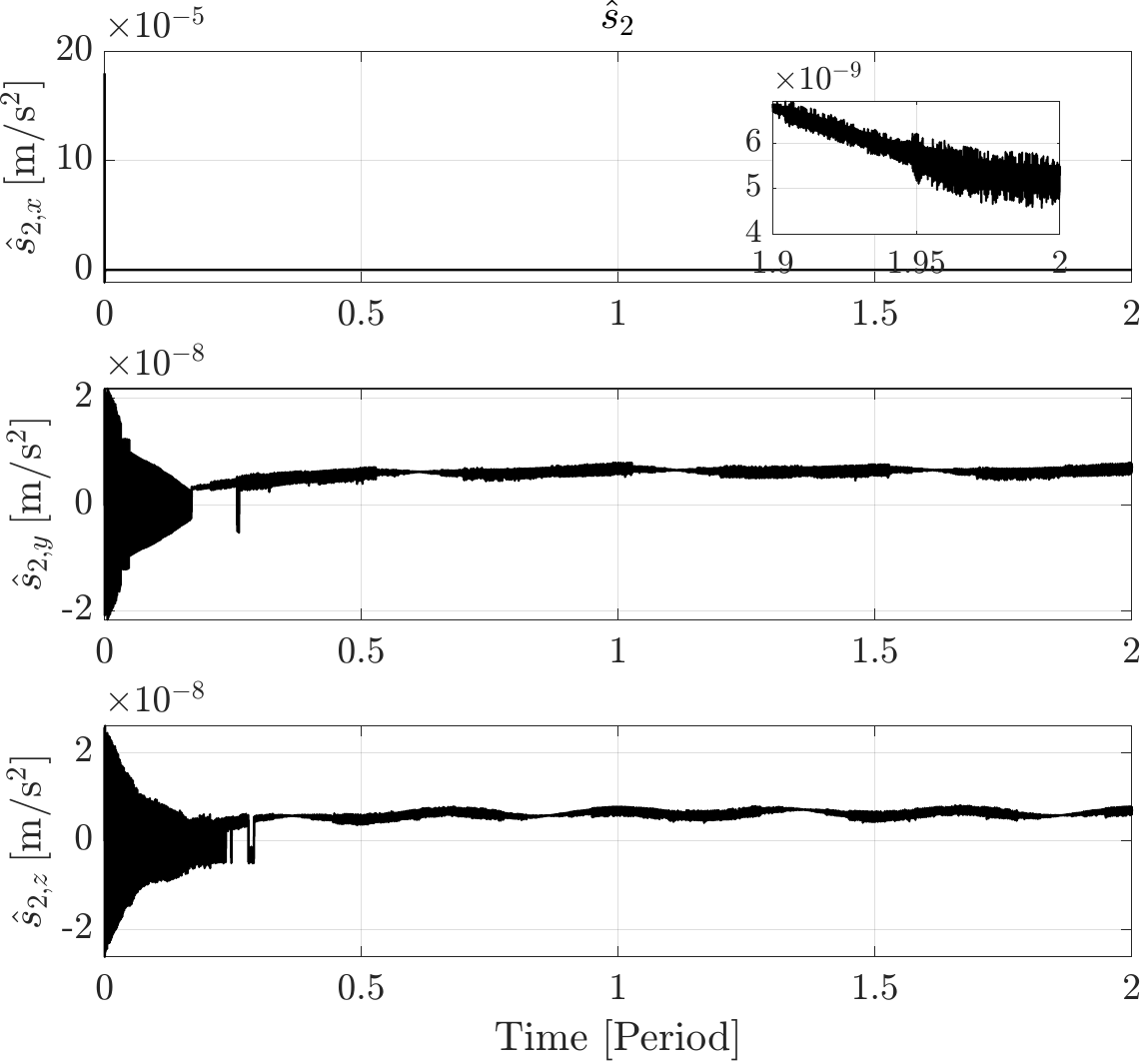}
    \caption{Second-order SMDO surface $\hat{\boldsymbol{s}}_2$.}
    \label{fig:SMDO_s2}
  \end{subfigure}
  \hfill
  \begin{subfigure}{0.48\linewidth}
    \centering
    \includegraphics[width=\linewidth]{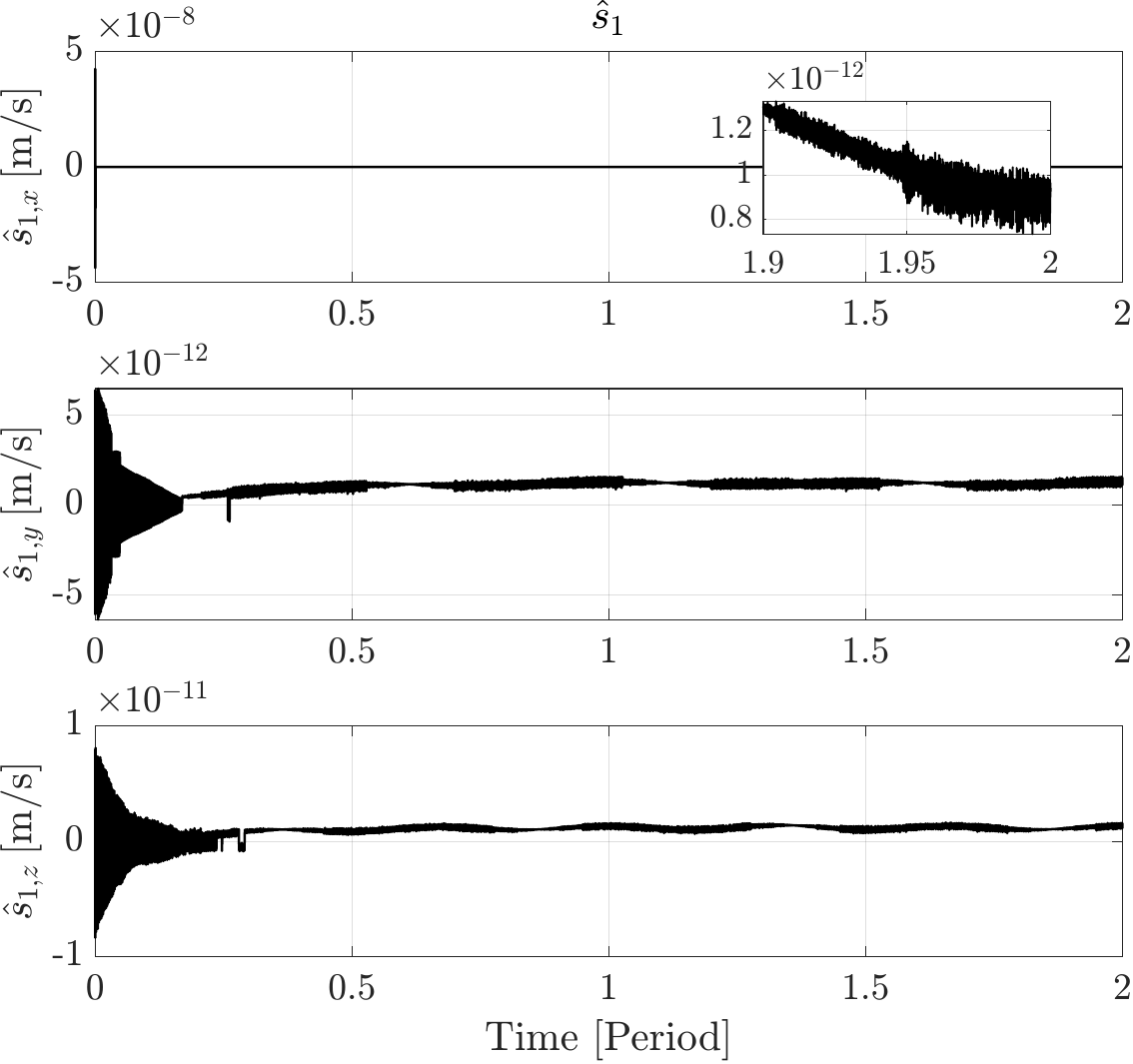}
    \caption{First-order SMDO surface $\hat{\boldsymbol{s}}_1$.}
    \label{fig:SMDO_s1}
  \end{subfigure}
  \caption{SMDO sliding surfaces.}
  \label{fig:SMDO_surfaces}
\end{figure}

The state-estimation errors are plotted in
Fig.~\ref{fig:state_estimation_errors}. The position error
$\tilde{\boldsymbol{q}} = \boldsymbol{q} - \hat{\boldsymbol{q}}$
(Fig.~\ref{fig:state_estimation_error_position}) settles within the
bound of Eq.~\eqref{eq:smdo_qtilde_numerical_bound}, and the velocity
error $\dot{\tilde{\boldsymbol{q}}}$
(Fig.~\ref{fig:state_estimation_error_velocity}) remains within the
bound of Eq.~\eqref{eq:smdo_qdottilde_numerical_bound}.

\begin{figure}[H]
  \centering
  \begin{subfigure}{0.48\linewidth}
    \centering
    \includegraphics[width=\linewidth]{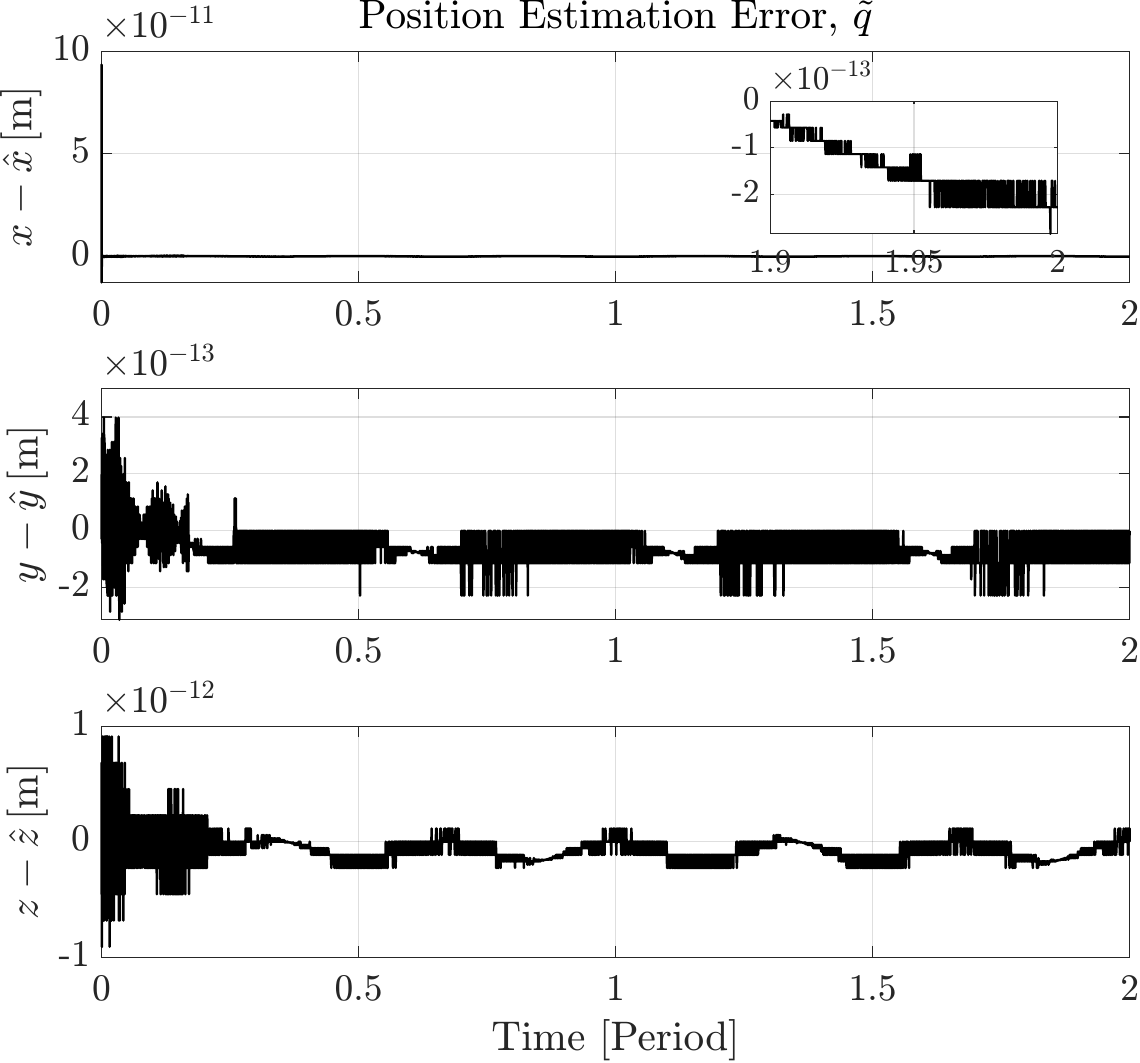}
    \caption{Position-estimation error
    $\tilde{\boldsymbol{q}} = \boldsymbol{q} - \hat{\boldsymbol{q}}$.}
    \label{fig:state_estimation_error_position}
  \end{subfigure}
  \hfill
  \begin{subfigure}{0.48\linewidth}
    \centering
    \includegraphics[width=\linewidth]{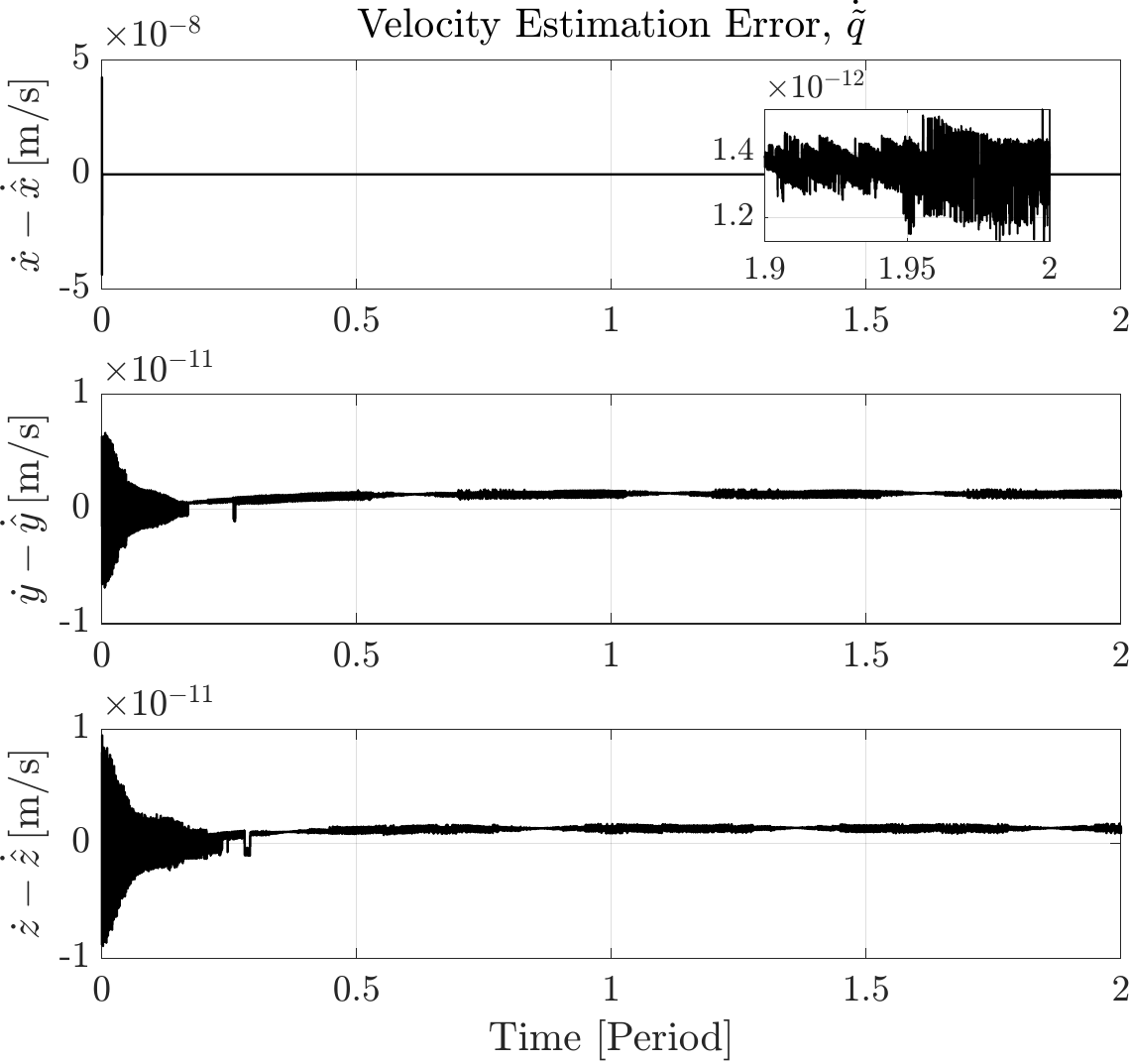}
    \caption{Velocity-estimation error
    $\dot{\tilde{\boldsymbol{q}}}
    = \dot{\boldsymbol{q}} - \dot{\hat{\boldsymbol{q}}}$.}
    \label{fig:state_estimation_error_velocity}
  \end{subfigure}
  \caption{State-estimation errors.}
  \label{fig:state_estimation_errors}
\end{figure}

The residual disturbance after SMDO compensation is shown in
Fig.~\ref{fig:Disturbance_Estimation_Error}. The compensation signal
$\hat{\boldsymbol{u}}_{\mathrm{smdo}}$, already presented in
Fig.~\ref{fig:u_smdo}, tracks the injected disturbance profile
throughout the simulation. The remaining residual
$\boldsymbol{d}+\hat{\boldsymbol{u}}_{\mathrm{smdo}}^{\mathrm{ad}}$ is reduced
to the neighborhood predicted by
Eq.~\eqref{eq:smdo_residual_numerical_bound} and is subsequently
handled by the ASMC layer as an unmodeled input.

\begin{figure}[H]
  \centering
  \includegraphics[width=0.45\linewidth]{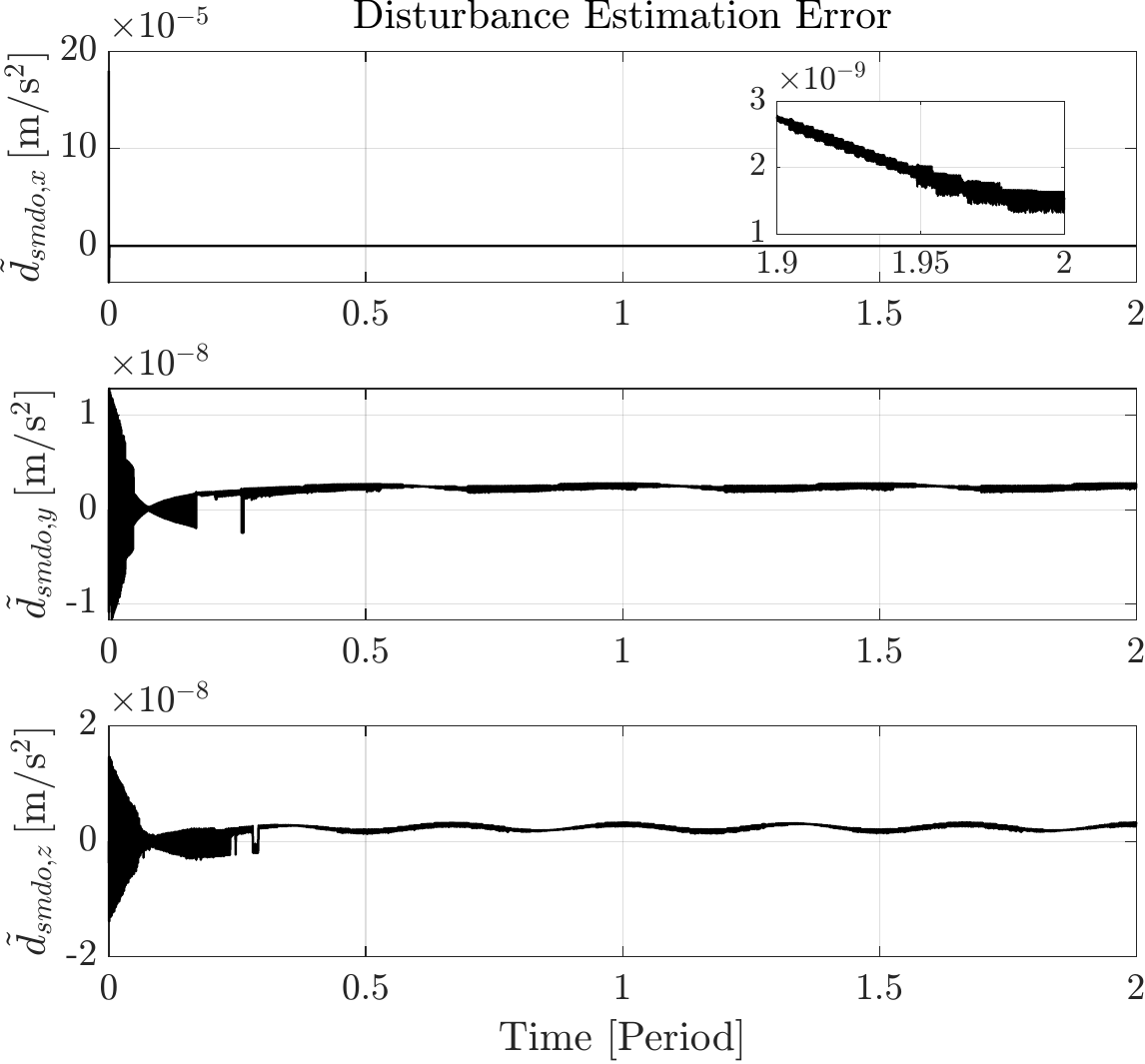}
  \caption{Residual disturbance-estimation error
  $\boldsymbol{d}
  +
  \hat{\boldsymbol{u}}_{\mathrm{smdo}}^{\mathrm{ad}}$.}
  \label{fig:Disturbance_Estimation_Error}
\end{figure}

\paragraph{ASMC Tracking Performance}

The ASMC layer receives the estimated states from the SMDO and generates
a tracking input that suppresses both the estimated tracking error and
the residual disturbance passed from the observer. The analysis proceeds
from the second-order sliding surface down to the actual tracking error,
mirroring the boundedness hierarchy of
Theorem~\ref{thm:asmc_tracking_boundedness}.

With $\delta = 0.01$, $C_1 = 2$, $C_2 = 3$, and
$\nu_{\mathrm{asmc}} = 9/11$, the predicted ultimate bounds are
\begin{align}
  \limsup_{t\to\infty}\|\hat{\boldsymbol{\sigma}}_2(t)\|
  &\leq
  0.01\ \mathrm{m/s^2},
  \label{eq:asmc_sigma2_numerical_bound}\\
  \limsup_{t\to\infty}\|\hat{\boldsymbol{\sigma}}_1(t)\|_\infty
  &\leq
  9.3844\times10^{-4}\ \mathrm{m/s},
  \label{eq:asmc_sigma1_numerical_bound}\\
  \limsup_{t\to\infty}\|\hat{\boldsymbol{e}}(t)\|_\infty
  &\leq
  4.6922\times10^{-4}\ \mathrm{m},
  \label{eq:asmc_ehat_numerical_bound}\\
  \limsup_{t\to\infty}\|\dot{\hat{\boldsymbol{e}}}(t)\|_\infty
  &\leq
  1.8769\times10^{-3}\ \mathrm{m/s}.
  \label{eq:asmc_ehatdot_numerical_bound}
\end{align}
Since the actual tracking error is the sum of the estimated tracking
error and the SMDO estimation error, the actual bounds follow as
\begin{align}
  \limsup_{t\to\infty}\|\boldsymbol{e}(t)\|_\infty
  &\leq
  4.6924\times10^{-4}\ \mathrm{m},
  \label{eq:actual_tracking_error_numerical_bound}\\
  \limsup_{t\to\infty}\|\dot{\boldsymbol{e}}(t)\|_\infty
  &\leq
  1.8769\times10^{-3}\ \mathrm{m/s}.
  \label{eq:actual_tracking_velocity_error_numerical_bound}
\end{align}
The SMDO estimation bounds are several orders of magnitude smaller than
the ASMC tracking bounds, so the actual tracking-error limits are
dominated by the ASMC boundary layer.

The ASMC sliding surfaces are plotted in Fig.~\ref{fig:ASMC_surfaces}.
The second-order surface $\hat{\boldsymbol{\sigma}}_2$
(Fig.~\ref{fig:ASMC_sigma2}) is driven into the $\delta = 0.01$
boundary layer after the initial transient, and the induced first-order
surface $\hat{\boldsymbol{\sigma}}_1$ (Fig.~\ref{fig:ASMC_sigma1})
remains within the bound of
Eq.~\eqref{eq:asmc_sigma1_numerical_bound}. The two-stage confinement
is the central mechanism through which the second-order ASMC propagates
its boundary-layer guarantee down to the tracking error level.

\begin{figure}[H]
  \centering
  \begin{subfigure}{0.48\linewidth}
    \centering
    \includegraphics[width=\linewidth]{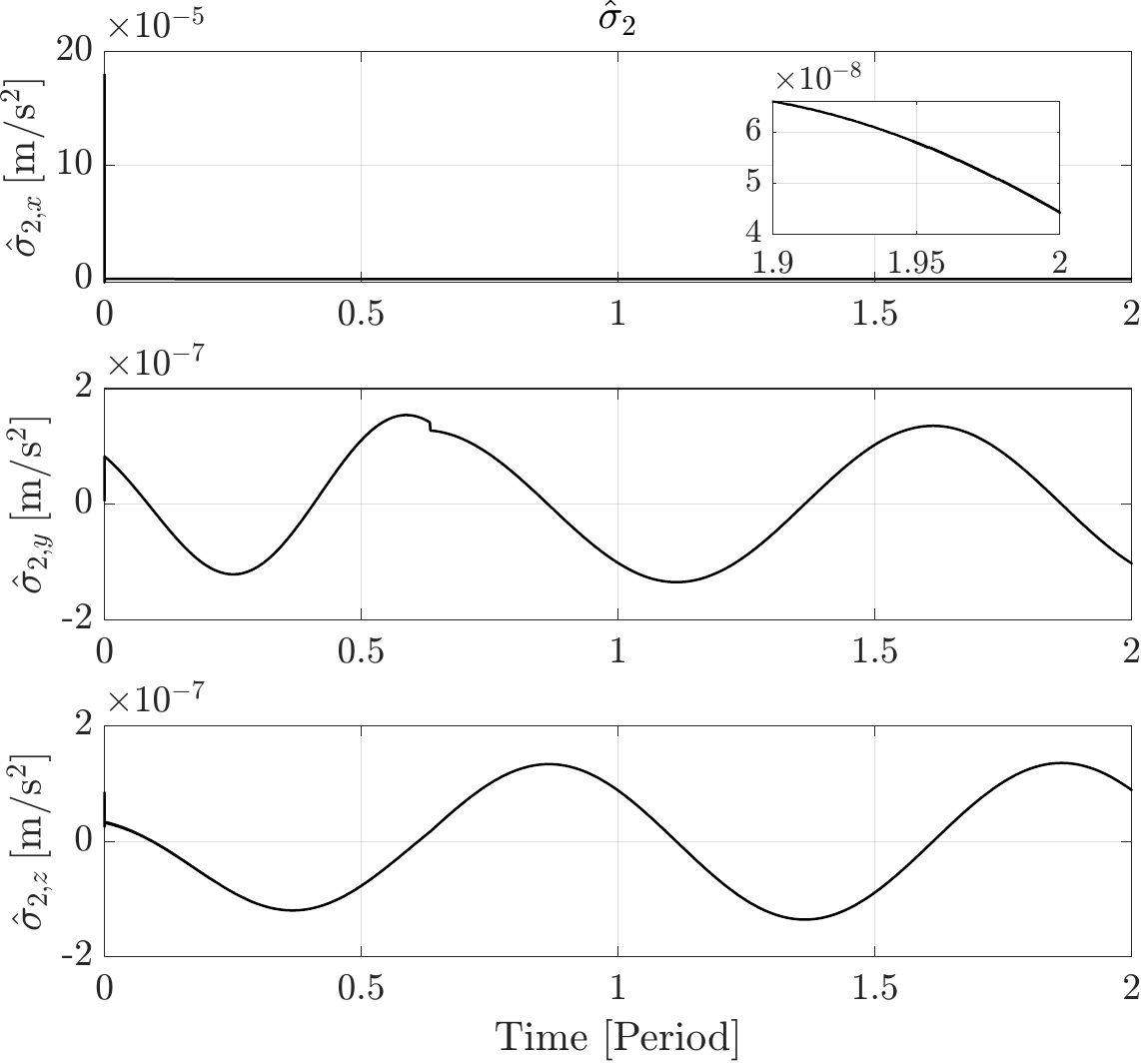}
    \caption{Second-order ASMC surface $\hat{\boldsymbol{\sigma}}_2$.}
    \label{fig:ASMC_sigma2}
  \end{subfigure}
  \hfill
  \begin{subfigure}{0.48\linewidth}
    \centering
    \includegraphics[width=\linewidth]{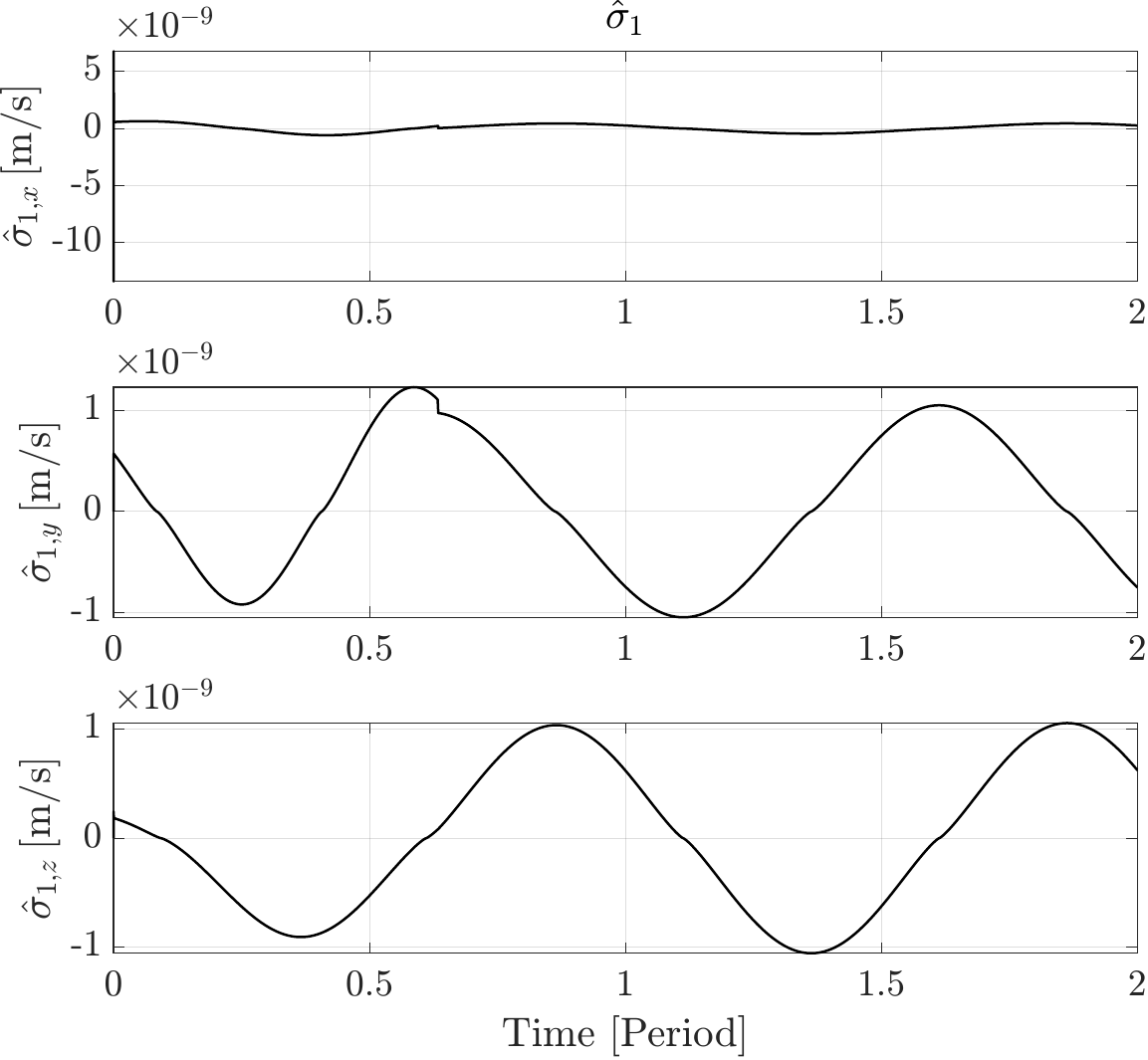}
    \caption{First-order ASMC surface $\hat{\boldsymbol{\sigma}}_1$.}
    \label{fig:ASMC_sigma1}
  \end{subfigure}
  \caption{ASMC sliding surfaces.}
  \label{fig:ASMC_surfaces}
\end{figure}

The estimated tracking errors are shown in
Fig.~\ref{fig:estimated_tracking_errors}. The position-level error
$\hat{\boldsymbol{e}} = \hat{\boldsymbol{q}} - \boldsymbol{q}_n$
(Fig.~\ref{fig:ehat}) enters the neighborhood predicted by
Eq.~\eqref{eq:asmc_ehat_numerical_bound} after the initial transient,
and the velocity-level error $\dot{\hat{\boldsymbol{e}}}$
(Fig.~\ref{fig:ehatdot}) remains within the bound of
Eq.~\eqref{eq:asmc_ehatdot_numerical_bound}. The ASMC tracking input
$\hat{\boldsymbol{u}}_{\mathrm{asmc}}$, presented in
Fig.~\ref{fig:u_asmc_second_order}, is the signal responsible for
driving these errors into their respective neighborhoods.

\begin{figure}[H]
  \centering
  \begin{subfigure}{0.48\linewidth}
    \centering
    \includegraphics[width=\linewidth]{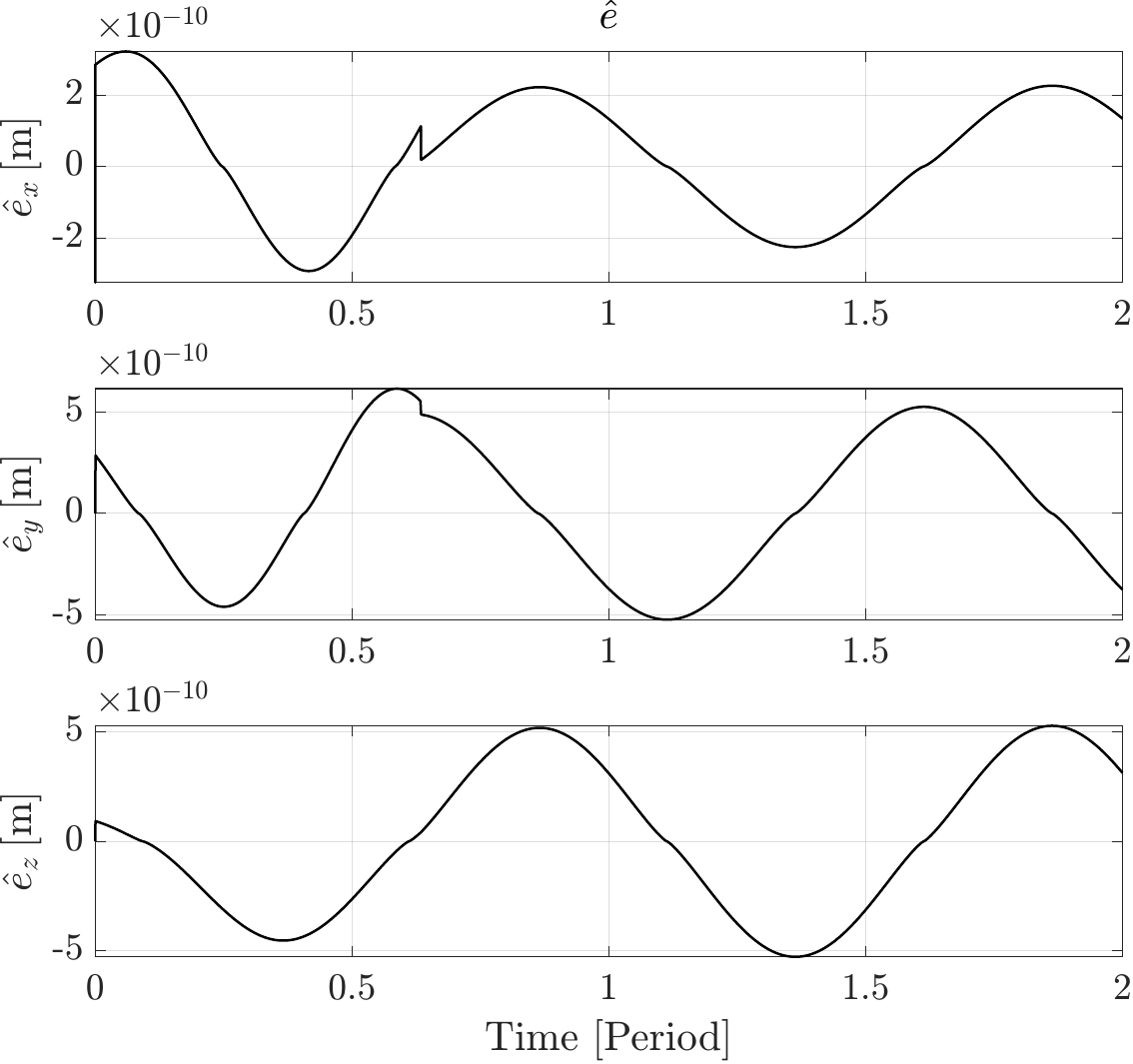}
    \caption{Estimated position tracking error
    $\hat{\boldsymbol{e}} = \hat{\boldsymbol{q}} - \boldsymbol{q}_n$.}
    \label{fig:ehat}
  \end{subfigure}
  \hfill
  \begin{subfigure}{0.48\linewidth}
    \centering
    \includegraphics[width=\linewidth]{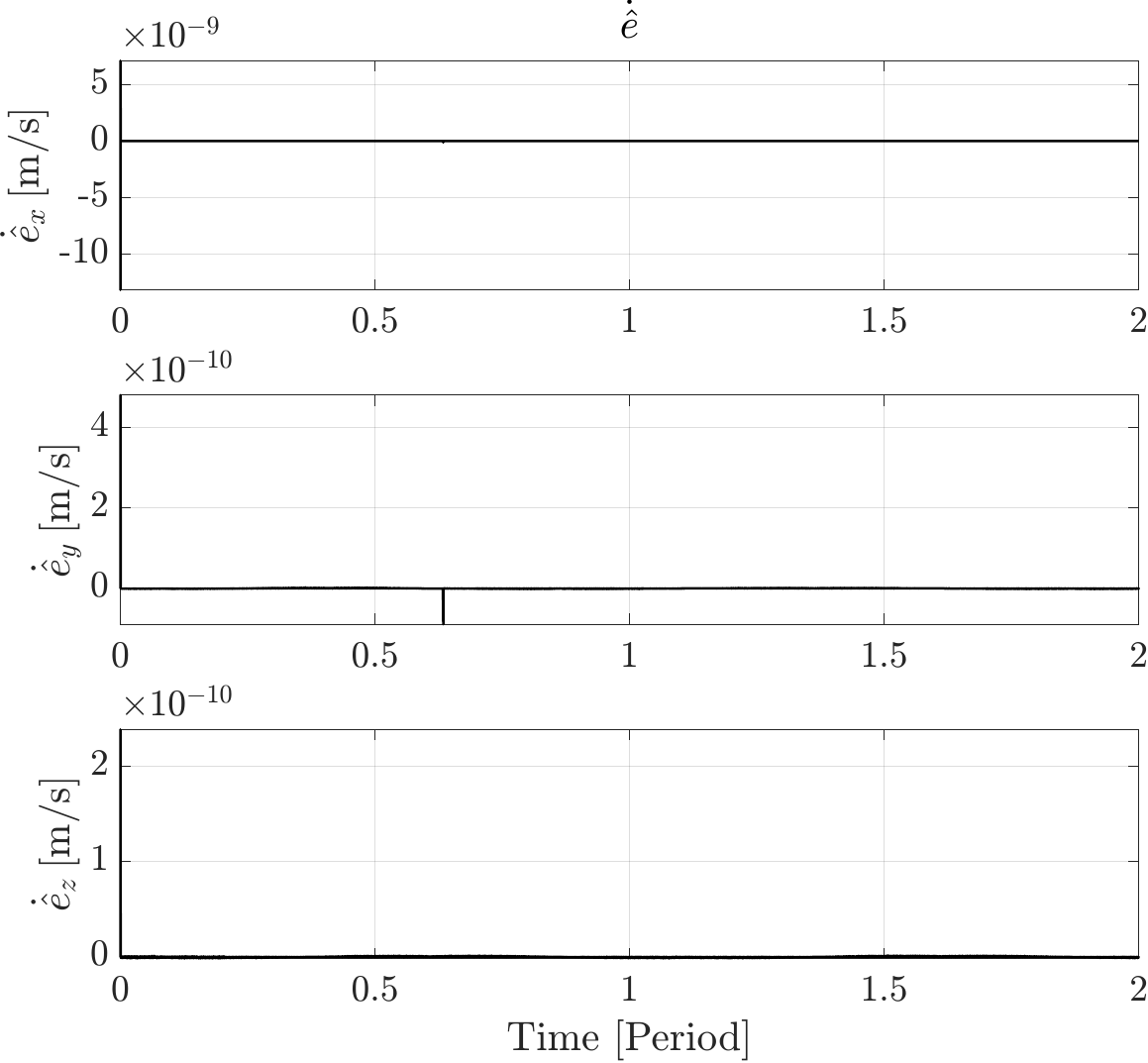}
    \caption{Estimated velocity tracking error
    $\dot{\hat{\boldsymbol{e}}} = \dot{\hat{\boldsymbol{q}}} - \dot{\boldsymbol{q}}_n$.}
    \label{fig:ehatdot}
  \end{subfigure}
  \caption{Estimated tracking errors.}
  \label{fig:estimated_tracking_errors}
\end{figure}

The actual tracking errors are shown in
Fig.~\ref{fig:actual_tracking_errors}. The position tracking error
$\boldsymbol{e} = \boldsymbol{q} - \boldsymbol{q}_n$
(Fig.~\ref{fig:Tracking_Error}) settles within the bound of
Eq.~\eqref{eq:actual_tracking_error_numerical_bound}, and the velocity
tracking error $\dot{\boldsymbol{e}}$
(Fig.~\ref{fig:edot}) remains within the bound of
Eq.~\eqref{eq:actual_tracking_velocity_error_numerical_bound}. The
SMDO estimation residual contributes a correction of only
$2\times10^{-10}\,\mathrm{m}$ to the position bound relative to the
estimated tracking error bound, indicating that the output-feedback
structure imposes no practical penalty on tracking accuracy.

\begin{figure}[H]
  \centering
  \begin{subfigure}{0.48\linewidth}
    \centering
    \includegraphics[width=\linewidth]{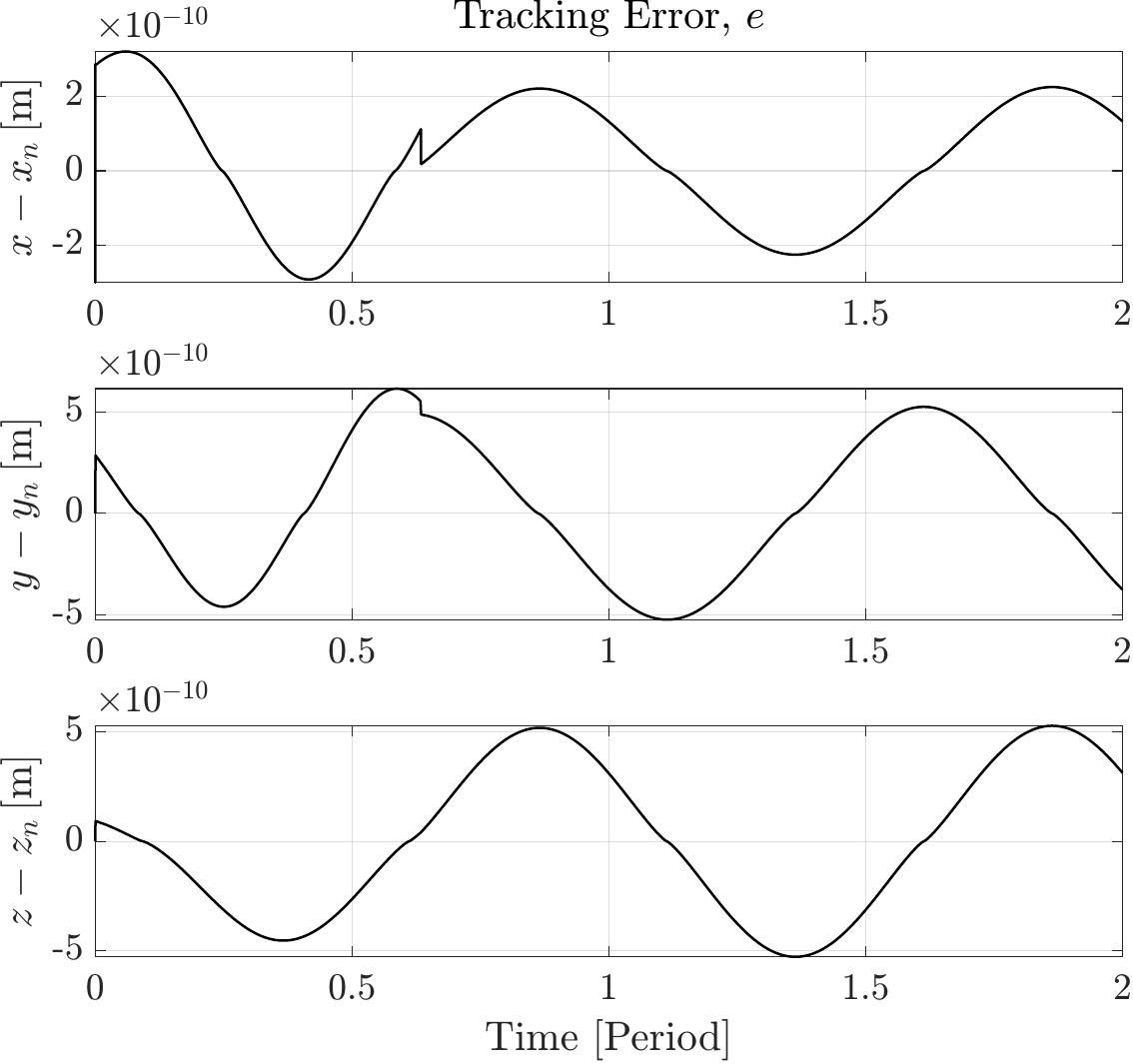}
    \caption{Actual position tracking error
    $\boldsymbol{e} = \boldsymbol{q} - \boldsymbol{q}_n$.}
    \label{fig:Tracking_Error}
  \end{subfigure}
  \hfill
  \begin{subfigure}{0.48\linewidth}
    \centering
    \includegraphics[width=\linewidth]{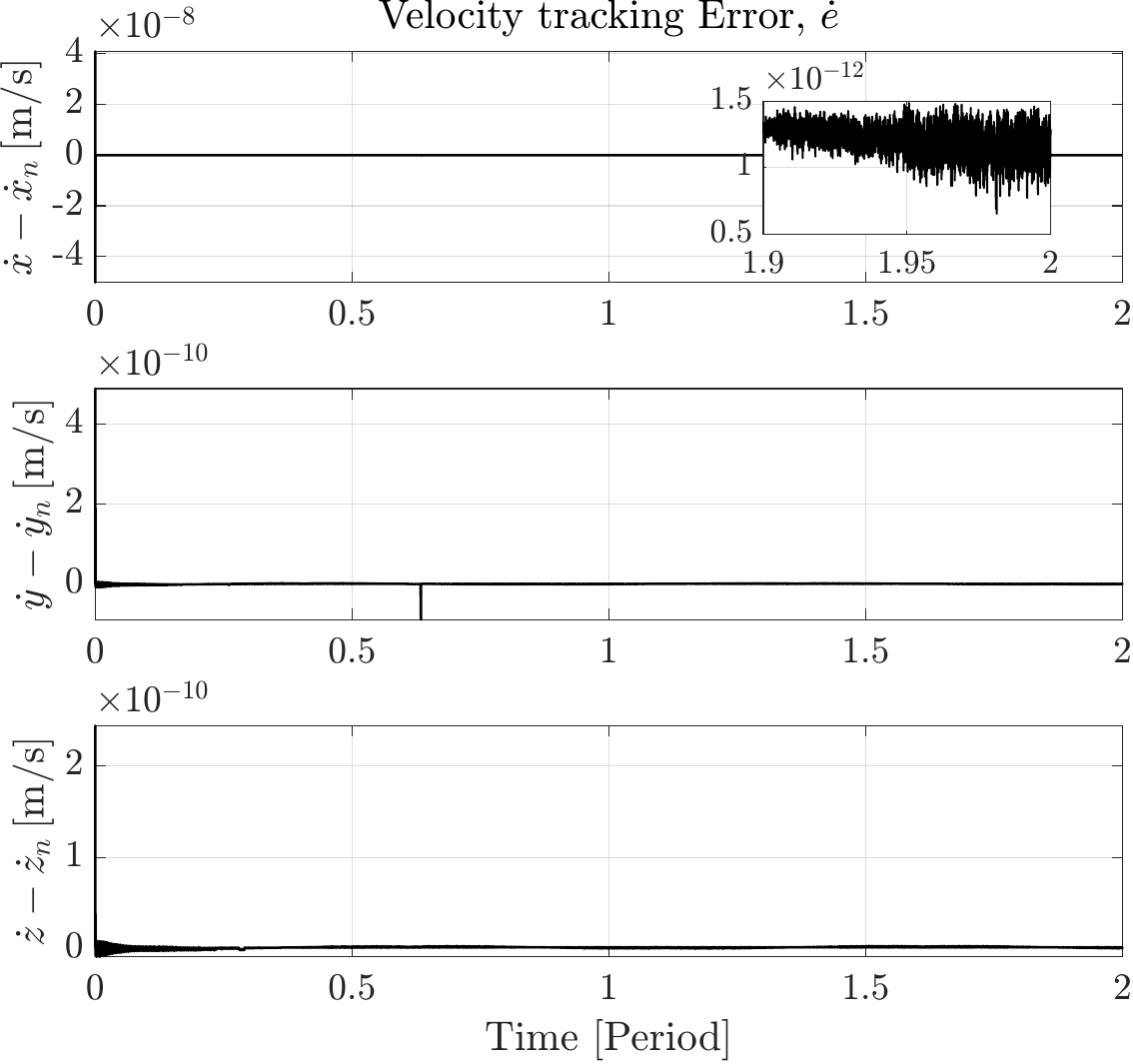}
    \caption{Actual velocity tracking error
    $\dot{\boldsymbol{e}} = \dot{\boldsymbol{q}} - \dot{\boldsymbol{q}}_n$.}
    \label{fig:edot}
  \end{subfigure}
  \caption{Actual tracking errors.}
  \label{fig:actual_tracking_errors}
\end{figure}

\paragraph{Adaptive Gains and Control Effort}

The adaptive SMDO gain $L(t)$ and ASMC gain $K(t)$ are plotted in
Fig.~\ref{fig:adaptive_gains}. Both gains remain finite throughout the
two-orbit simulation, which is consistent with the admissibility
conditions assumed in the theoretical analysis. The absence of unbounded
gain growth also implies that no unlimited actuation authority is
required to sustain the observed tracking performance.

\begin{figure}[H]
  \centering
  \begin{subfigure}{0.48\linewidth}
    \centering
    \includegraphics[width=\linewidth]{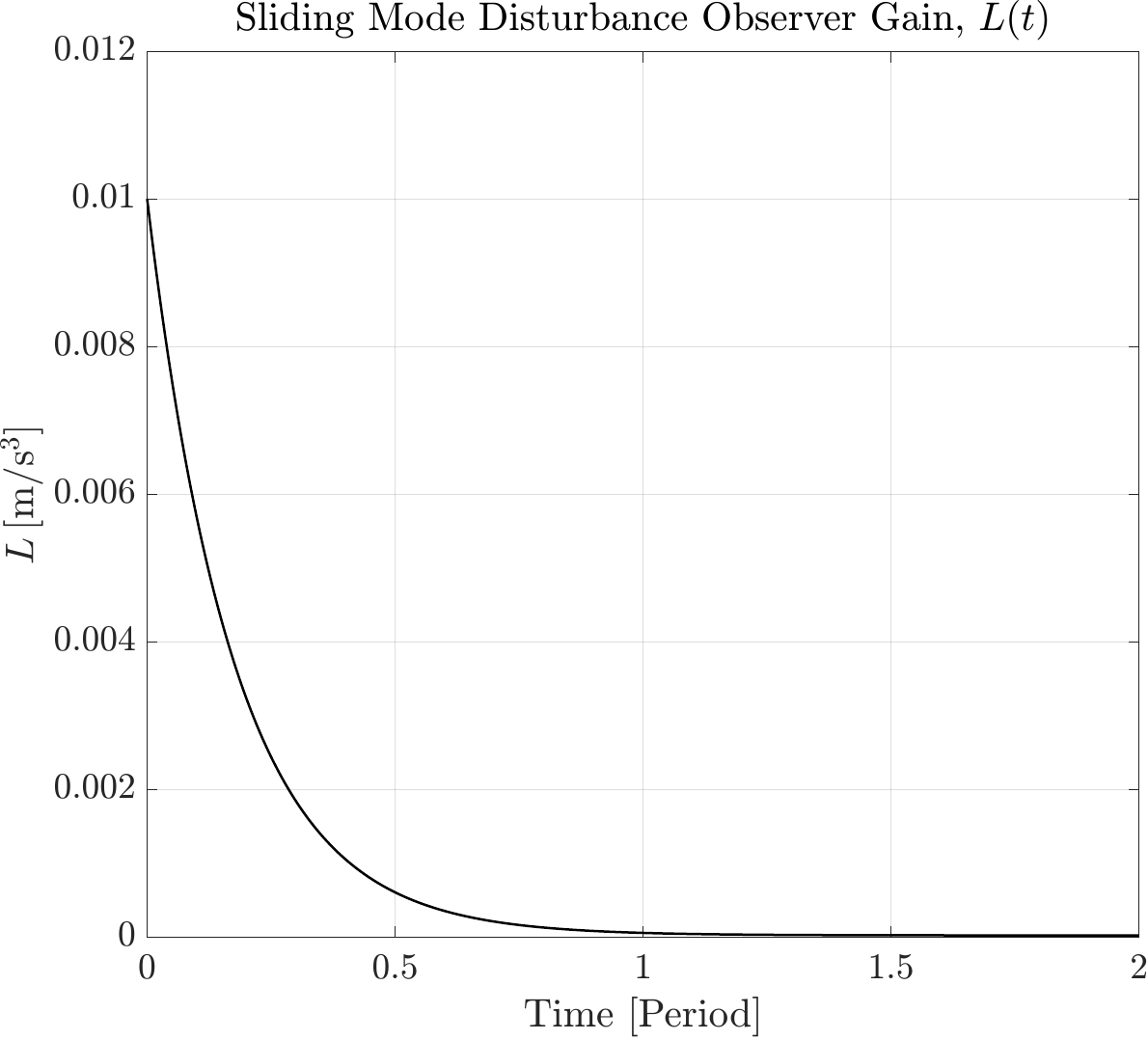}
    \caption{Adaptive SMDO gain $L(t)$.}
    \label{fig:smdo_gain_L}
  \end{subfigure}
  \hfill
  \begin{subfigure}{0.48\linewidth}
    \centering
    \includegraphics[width=\linewidth]{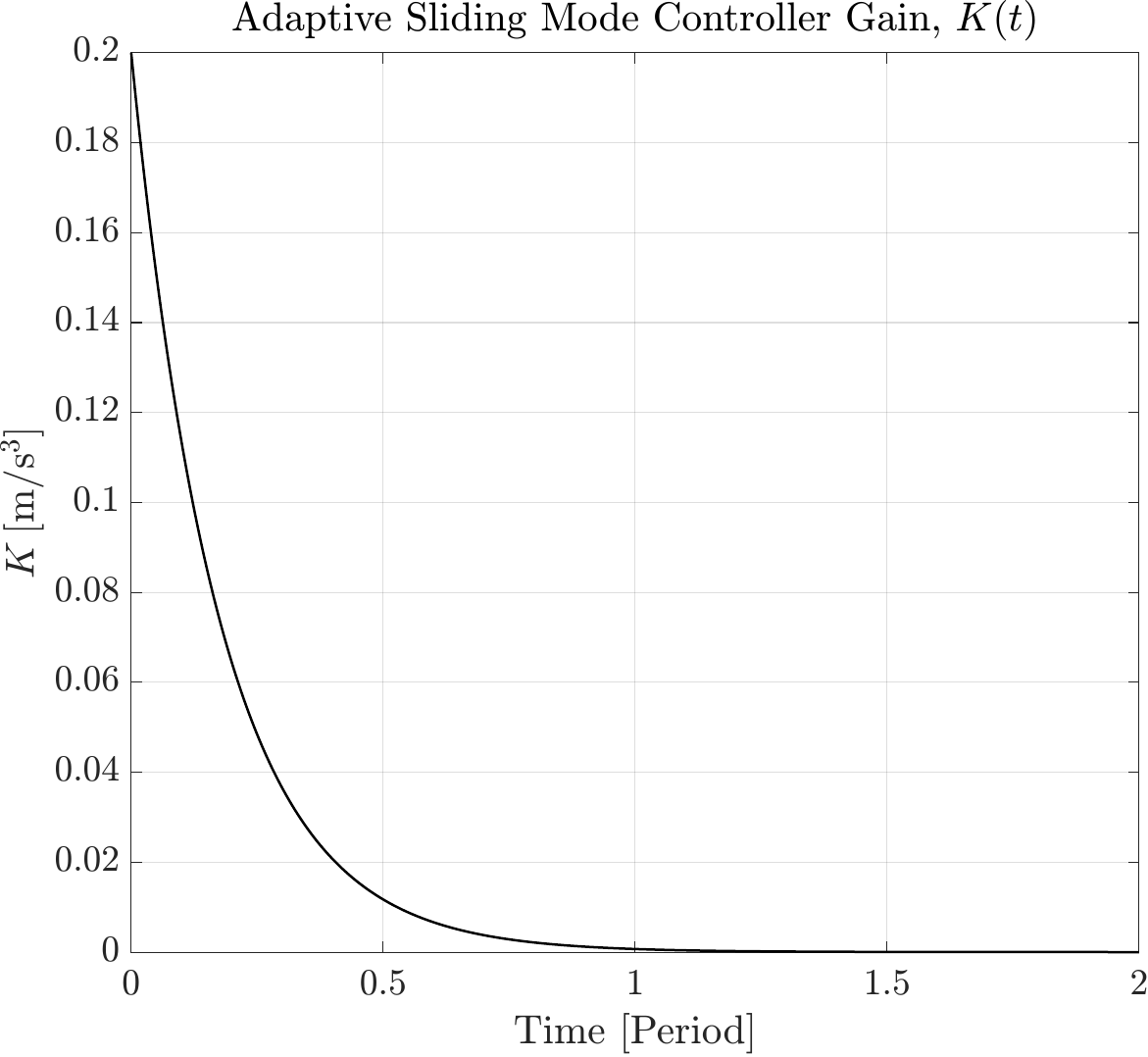}
    \caption{Adaptive ASMC gain $K(t)$.}
    \label{fig:asmc_gain_K}
  \end{subfigure}
  \caption{Adaptive gains of the SMDO and ASMC.}
  \label{fig:adaptive_gains}
\end{figure}

The three control components $\boldsymbol{u}_r$,
$\hat{\boldsymbol{u}}_{\mathrm{smdo}}$, and
$\hat{\boldsymbol{u}}_{\mathrm{asmc}}$, presented individually in
Figs.~\ref{fig:reaching_input_second_order}, \ref{fig:u_smdo},
and~\ref{fig:u_asmc_second_order}, combine to form the total control
input $\boldsymbol{u}_{\mathrm{tot}}$ shown in
Fig.~\ref{fig:total_input_second_order}. The total input remains
bounded throughout the simulation, and its magnitude is consistent
with the finite gain histories in Fig.~\ref{fig:adaptive_gains}.

\begin{figure}[H]
    \centering
    \includegraphics[width=0.45\linewidth]{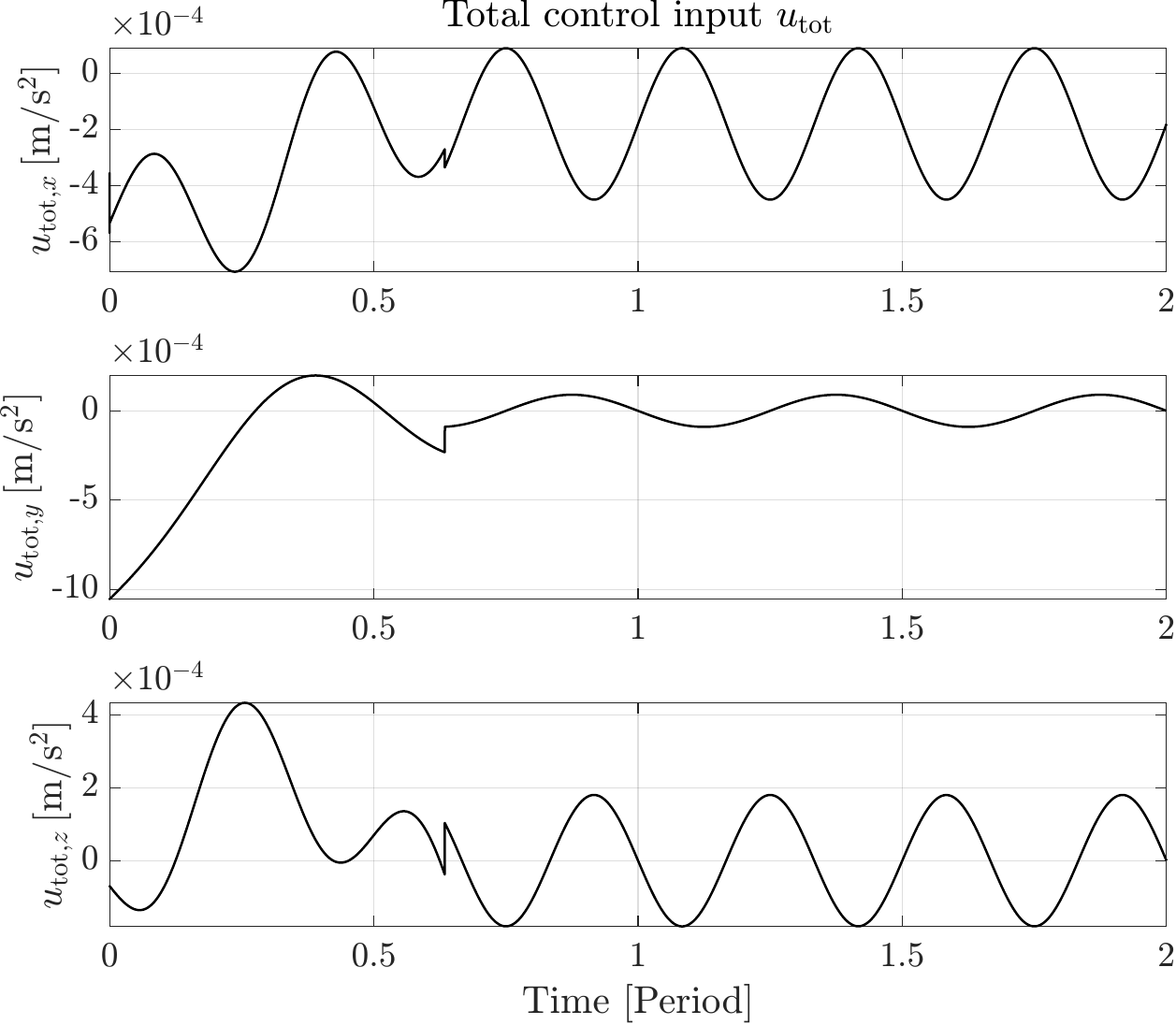}
    \caption{Total control input
    $\boldsymbol{u}_{\mathrm{tot}}
    = \boldsymbol{u}_r
    + \hat{\boldsymbol{u}}_{\mathrm{smdo}}
    + \hat{\boldsymbol{u}}_{\mathrm{asmc}}$.}
    \label{fig:total_input_second_order}
\end{figure}

To quantify the aggregate control expenditure, the integral of squared
control input (ISU), integral of absolute control input (IAU), and
integral of time-weighted absolute control input (ITAU) are evaluated:
\begin{align}
    \mathrm{ISU}\,[\mathrm{m^2/s^3}]
    &=
    \int_{0}^{T_{sim}}
    \left\|\boldsymbol{u}_{\mathrm{tot}}(t)\right\|^2 dt
    =
    0.0021,
    \\
    \mathrm{IAU}\,[\mathrm{m/s}]
    &=
    \int_{0}^{T_{sim}}
    \left\|\boldsymbol{u}_{\mathrm{tot}}(t)\right\| dt
    =
    5.8770,
    \\
    \mathrm{ITAU}\,[\mathrm{m}]
    &=
    \int_{0}^{T_{sim}}
    t\left\|\boldsymbol{u}_{\mathrm{tot}}(t)\right\| dt
    =
    2.6602\times10^4,
\end{align}
where $T_{sim}$ denotes the simulation duration, corresponding to two orbital periods of the target. ISU weights large-amplitude peaks more heavily, IAU accumulates the
overall control magnitude, and ITAU additionally penalizes effort that
persists later in the maneuver.

\graphicspath{{./figures_nav/}}

\subsection{Effect of Navigation Error}
\label{subsec:navigation_error_effect}
A key architectural property of the proposed output-feedback structure is
that navigation uncertainty does not enter the tracking controller
directly. To verify this claim, an additional simulation was conducted in
which additive Gaussian measurement noise was injected into the relative
position and velocity channels. The navigation-contaminated measurements
are expressed as
\begin{equation}
  \boldsymbol{q}_m
  =
  \boldsymbol{q}
  +
  \bar{q}_{\mathrm{nav}}\boldsymbol{r}_{q,\mathrm{nav}}(t),
  \qquad
  \dot{\boldsymbol{q}}_m
  =
  \dot{\boldsymbol{q}}
  +
  \bar{v}_{\mathrm{nav}}\boldsymbol{r}_{v,\mathrm{nav}}(t),
  \label{eq:navigation_error_model}
\end{equation}
where $\boldsymbol{q}_m$ and $\dot{\boldsymbol{q}}_m$ denote the
navigation-contaminated position and velocity measurements, respectively.
The random signals are dimensionless zero-mean Gaussian processes with
unit variance,
\begin{equation}
  \boldsymbol{r}_{q,\mathrm{nav}}(t)
  \sim
  \mathcal{N}(\boldsymbol{0},I),
  \qquad
  \boldsymbol{r}_{v,\mathrm{nav}}(t)
  \sim
  \mathcal{N}(\boldsymbol{0},I),
  \label{eq:navigation_random_signal}
\end{equation}
and the scaling constants were set to
\begin{equation}
  \bar{q}_{\mathrm{nav}}
  =
  0.1\,\mathrm{m},
  \qquad
  \bar{v}_{\mathrm{nav}}
  =
  0.001\,\mathrm{m/s}.
  \label{eq:navigation_error_constants}
\end{equation}

The SMDO and ASMC parameters used in this case are
\begin{align}
\epsilon &= 2,
&
\lambda_1 &= 2,
&
\eta &= 3,
&
L(0) &= 0.1,
&
L^* &= 0.1,
&
\lambda_2 &= 3,
&
\nu_{\mathrm{smdo}} &= \frac{8}{11},
\label{eq:navigation_smdo_parameters}
\end{align}
and
\begin{align}
\delta &= 0.5,
&
C_1 &= 2,
&
\gamma &= 0.01,
&
K(0) &= 2,
&
K^* &= 0.1,
&
C_2 &= 3,
&
\nu_{\mathrm{asmc}} &= \frac{9}{11}.
\label{eq:navigation_asmc_parameters}
\end{align}
Compared with the case without navigation error, the SMDO boundary-layer
parameter $\epsilon$ is increased to account for measurement contamination
caused by navigation errors.

\subsubsection*{SMDO response}

Because the SMDO reconstructs the mismatch between the actual measured
plant and the auxiliary observer model, measurement contamination enters
the observer channel directly. With the parameters above, the SMDO
boundary term evaluates to
\begin{equation}
\left(
\frac{\epsilon}{\lambda_2}
\right)^{1/\nu_{\mathrm{smdo}}}
=
\left(
\frac{2}{3}
\right)^{11/8}
=
0.5726,
\label{eq:nav_smdo_boundary_value}
\end{equation}
and Theorem~\ref{thm:smdo_residual_boundedness} gives the following
ultimate bounds:
\begin{align}
\limsup_{t\to\infty}
\|\hat{\boldsymbol{s}}_2(t)\|
&\leq
2\,\mathrm{m/s^2},
\label{eq:nav_smdo_s2_bound}\\
\limsup_{t\to\infty}
\|\hat{\boldsymbol{s}}_1(t)\|_\infty
&\leq
0.5726\,\mathrm{m/s},
\label{eq:nav_smdo_s1_bound}\\
\limsup_{t\to\infty}
\|\tilde{\boldsymbol{q}}(t)\|_\infty
&\leq
0.2863\,\mathrm{m},
\label{eq:nav_smdo_qtilde_bound}\\
\limsup_{t\to\infty}
\|\dot{\tilde{\boldsymbol{q}}}(t)\|_\infty
&\leq
1.1453\,\mathrm{m/s}.
\label{eq:nav_smdo_qdottilde_bound}
\end{align}
These bounds are intentionally loose because the SMDO operates on
navigation-contaminated measurements
. The visible effect appears in the
SMDO surfaces and state-estimation errors, which exhibit additional
fluctuations, as shown in
Figs.~\ref{fig:nav_smdo_surfaces} and~\ref{fig:nav_state_estimation_errors}.

\begin{figure}[H]
\centering
\begin{subfigure}{0.48\linewidth}
\centering
\includegraphics[width=\linewidth]{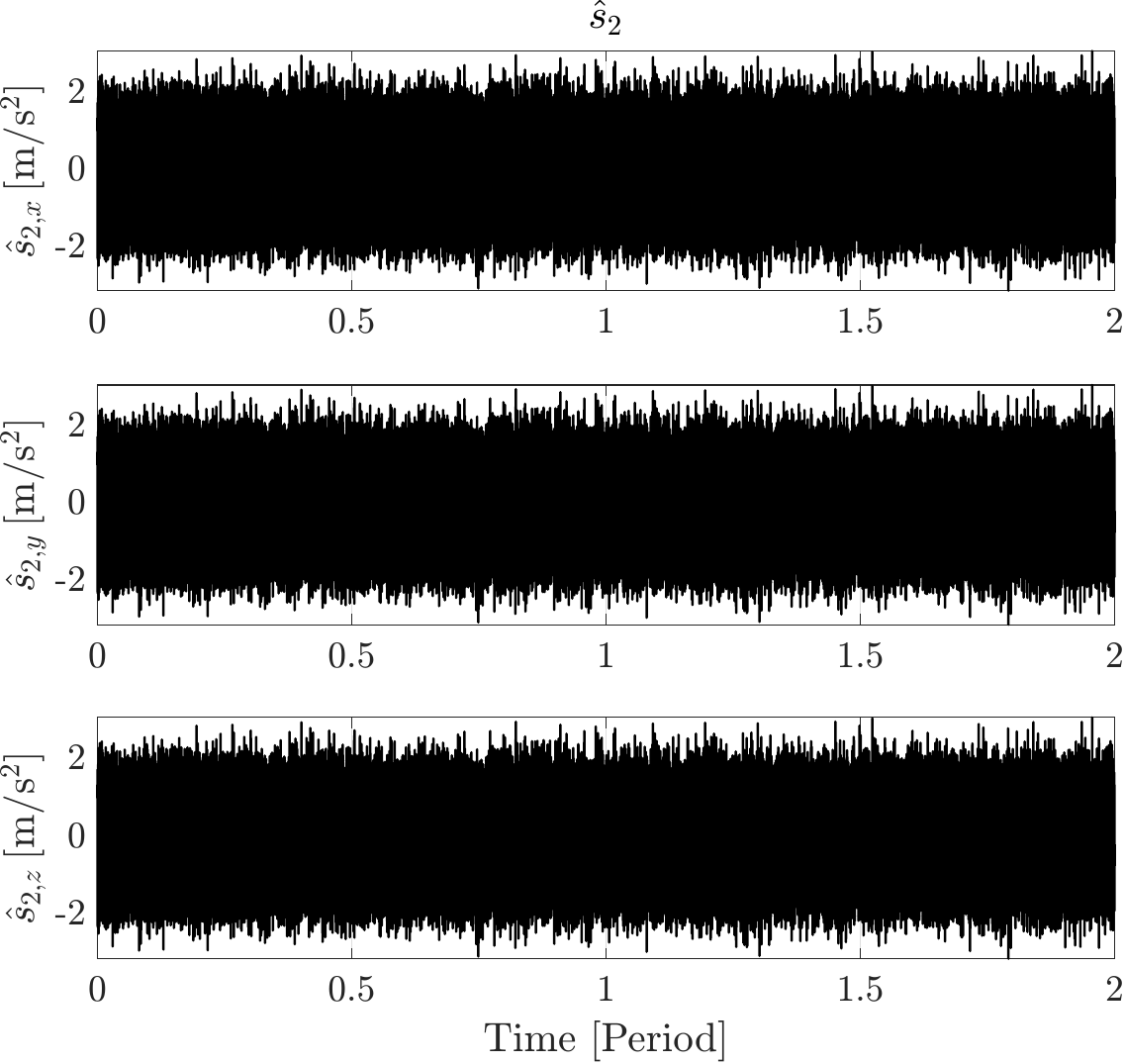}
\caption{Second-order SMDO surface $\hat{\boldsymbol{s}}_2$.}
\label{fig:nav_smdo_s2}
\end{subfigure}
\hfill
\begin{subfigure}{0.48\linewidth}
\centering
\includegraphics[width=\linewidth]{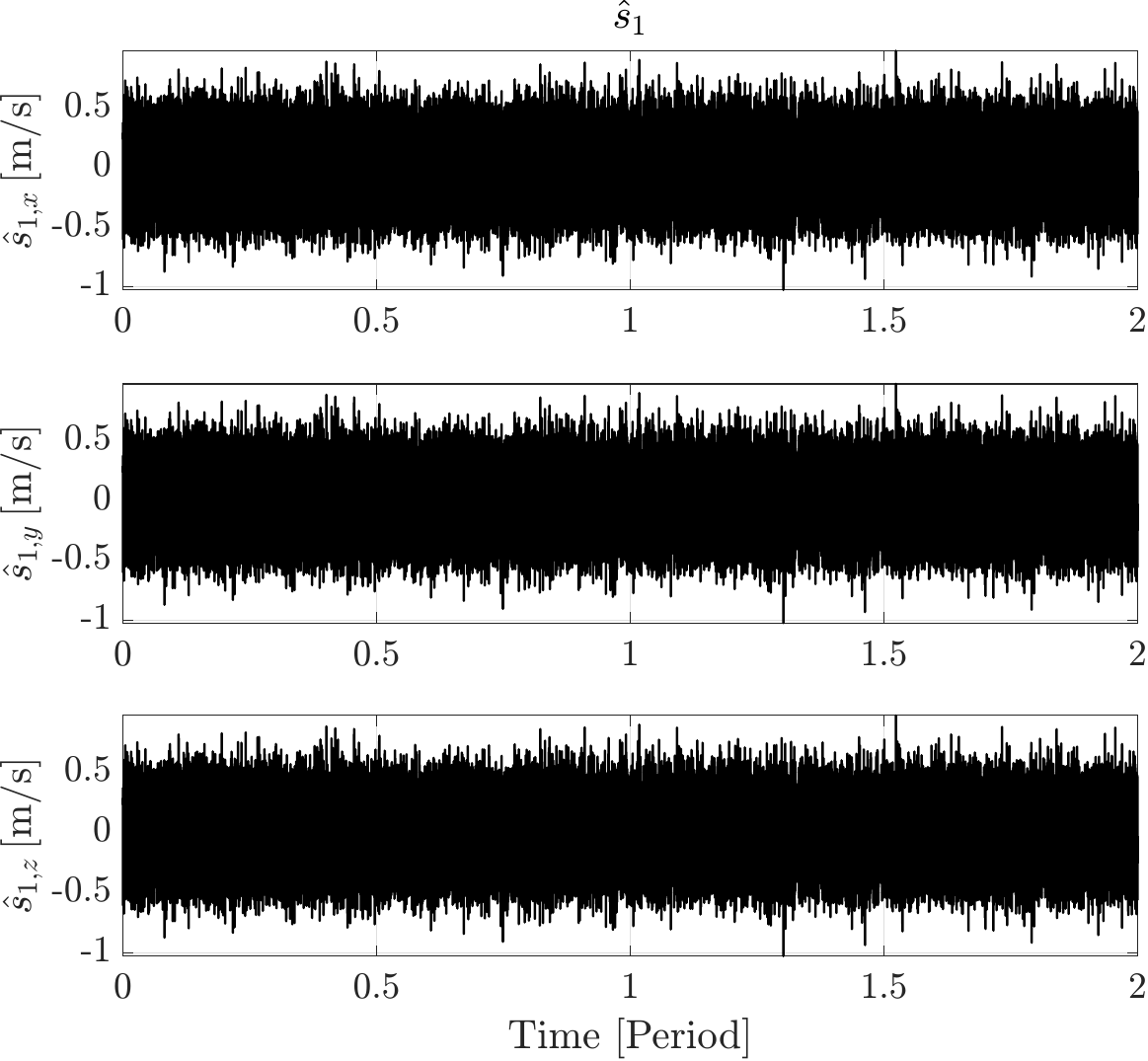}
\caption{First-order SMDO surface $\hat{\boldsymbol{s}}_1$.}
\label{fig:nav_smdo_s1}
\end{subfigure}
\caption{SMDO surfaces under navigation error.}
\label{fig:nav_smdo_surfaces}
\end{figure}

\begin{figure}[H]
  \centering
  \begin{subfigure}{0.48\linewidth}
    \centering
    \includegraphics[width=\linewidth]{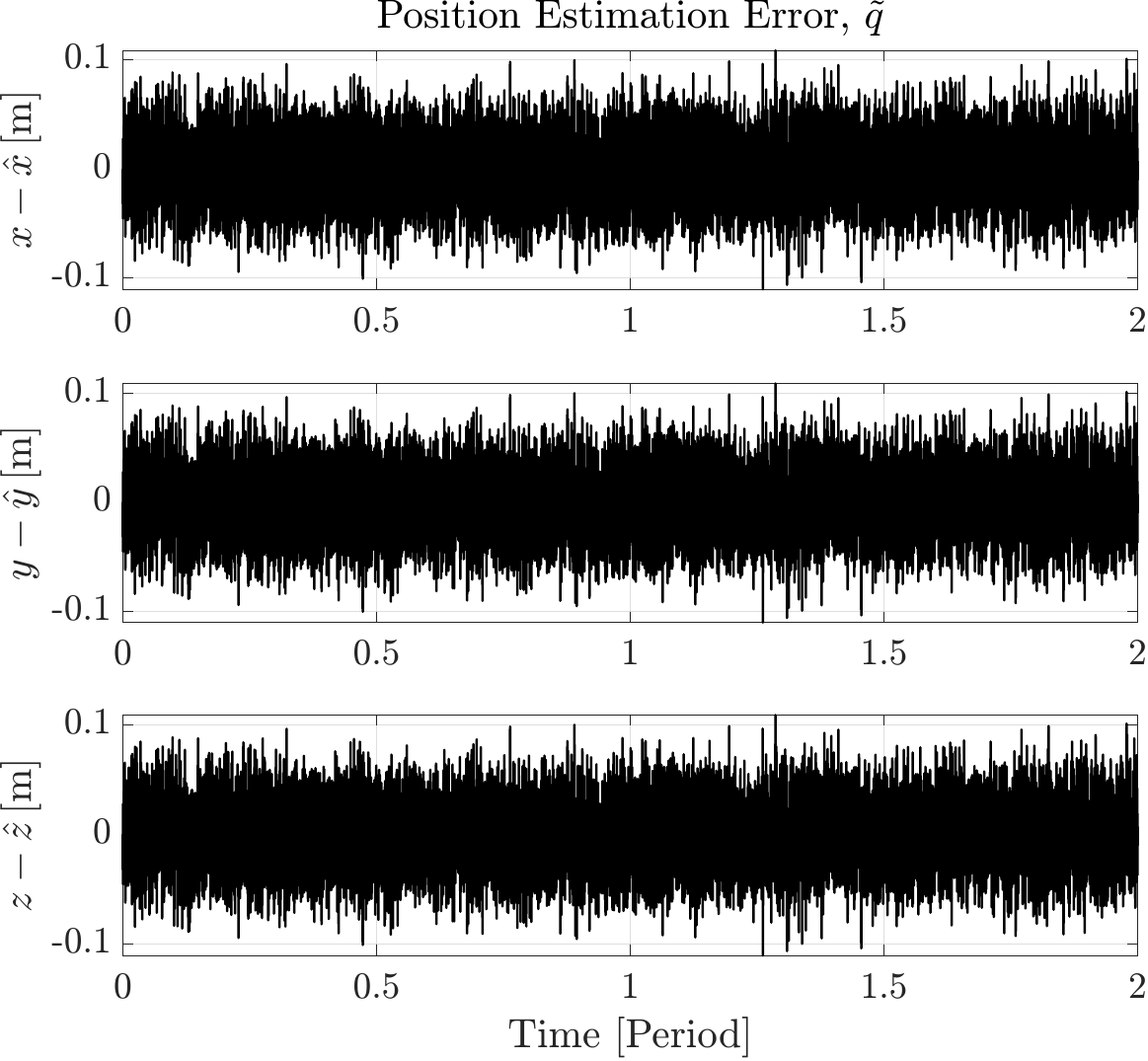}
    \caption{Position-estimation error
    $\tilde{\boldsymbol{q}} = \boldsymbol{q} - \hat{\boldsymbol{q}}$.}
    \label{fig:nav_state_estimation_error_position}
  \end{subfigure}
  \hfill
  \begin{subfigure}{0.48\linewidth}
    \centering
    \includegraphics[width=\linewidth]{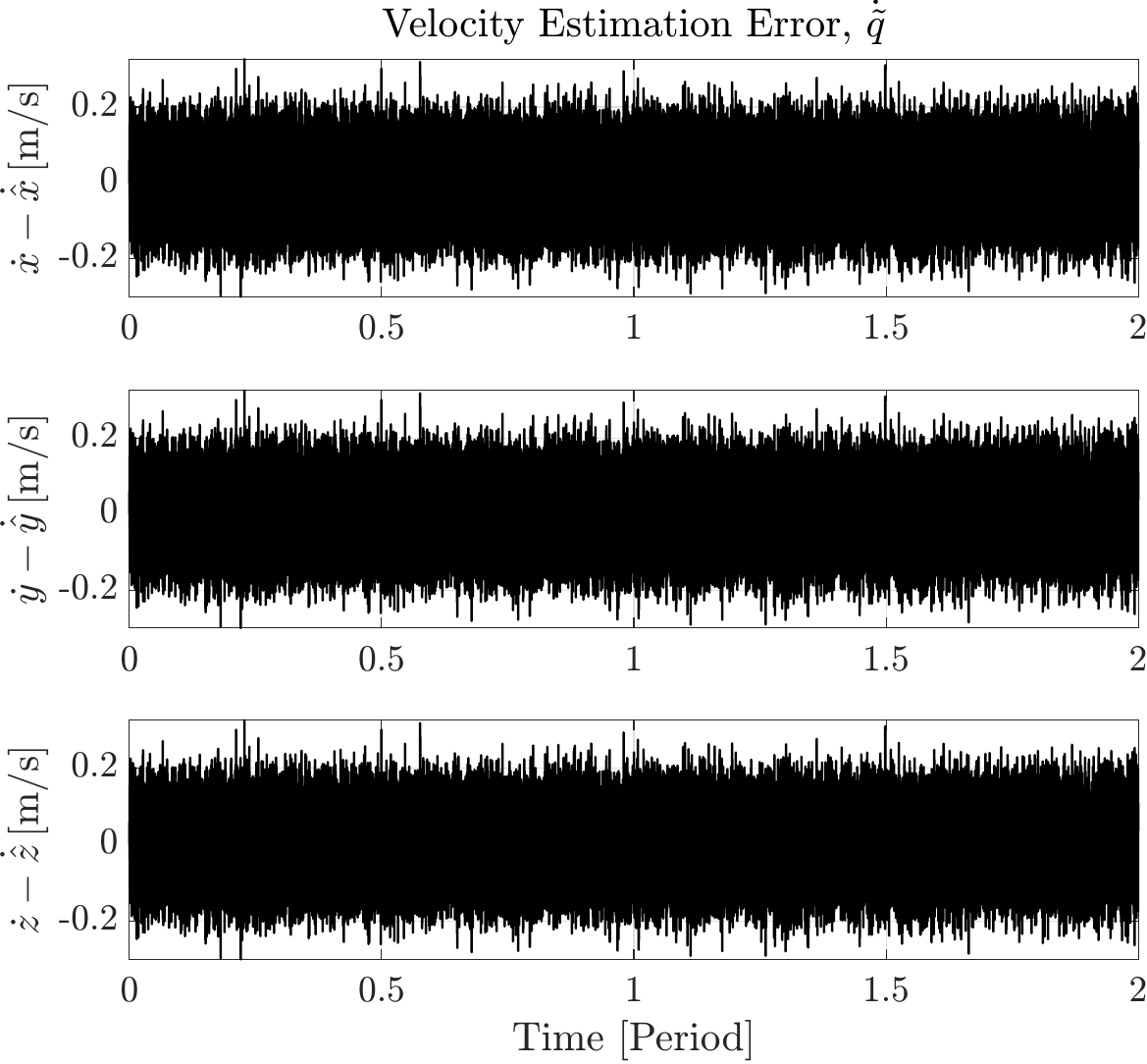}
    \caption{Velocity-estimation error
    $\dot{\tilde{\boldsymbol{q}}}
    = \dot{\boldsymbol{q}} - \dot{\hat{\boldsymbol{q}}}$.}
    \label{fig:nav_state_estimation_error_velocity}
  \end{subfigure}
  \caption{State-estimation errors under navigation error.}
  \label{fig:nav_state_estimation_errors}
\end{figure}

\subsubsection*{ASMC response and tracking performance isolation}

The ASMC operates exclusively on the estimated state
$\hat{\boldsymbol{q}}$ and $\dot{\hat{\boldsymbol{q}}}$ produced by the
SMDO, rather than on the raw navigation-corrupted measurements. As a
result, navigation error does not enter the tracking control loop
directly; the SMDO acts as an intermediary that absorbs the measurement
contamination before it can propagate to the controller. This separation
yields a practically valuable property: even when navigation quality
degrades, the tracking computation remains robust and the ASMC gains
adapt based on smooth estimated signals.

This isolation is also reflected in the adaptive gain histories,
as shown in Fig.~\ref{fig:nav_gains}.
The SMDO gain $L(t)$ exhibits large fluctuations in response to
the navigation-contaminated measurements, whereas the ASMC gain
$K(t)$ remains smooth and well-regulated throughout the simulation.
This contrast further confirms that the measurement contamination
does not propagate into the adaptive tracking layer.

\begin{figure}[H]
\centering
\begin{subfigure}{0.48\linewidth}
\centering
\includegraphics[width=\linewidth]{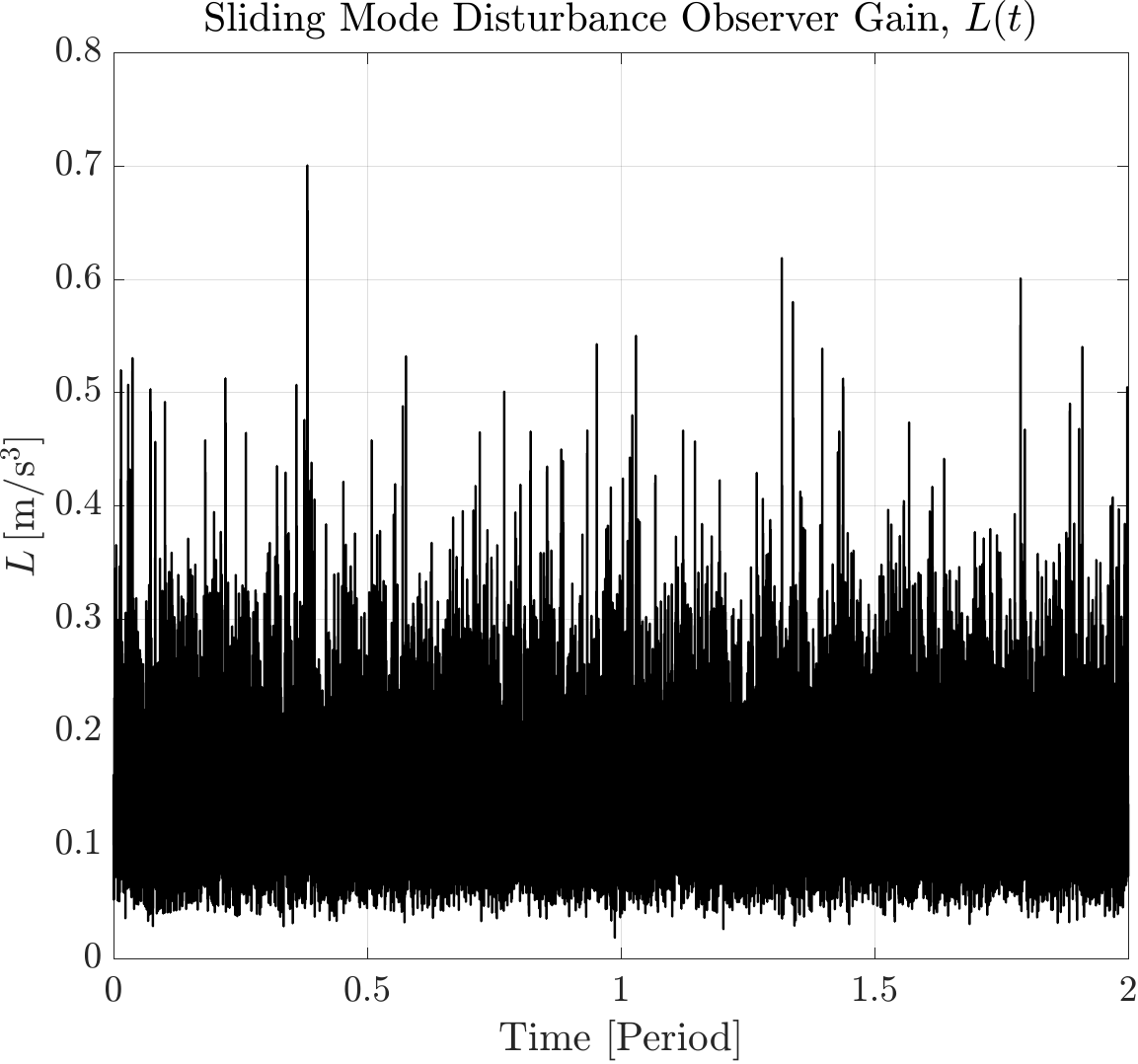}
\caption{SMDO adaptive gain $L(t)$.}
\label{fig:nav_smdo_gain}
\end{subfigure}
\hfill
\begin{subfigure}{0.48\linewidth}
\centering
\includegraphics[width=\linewidth]{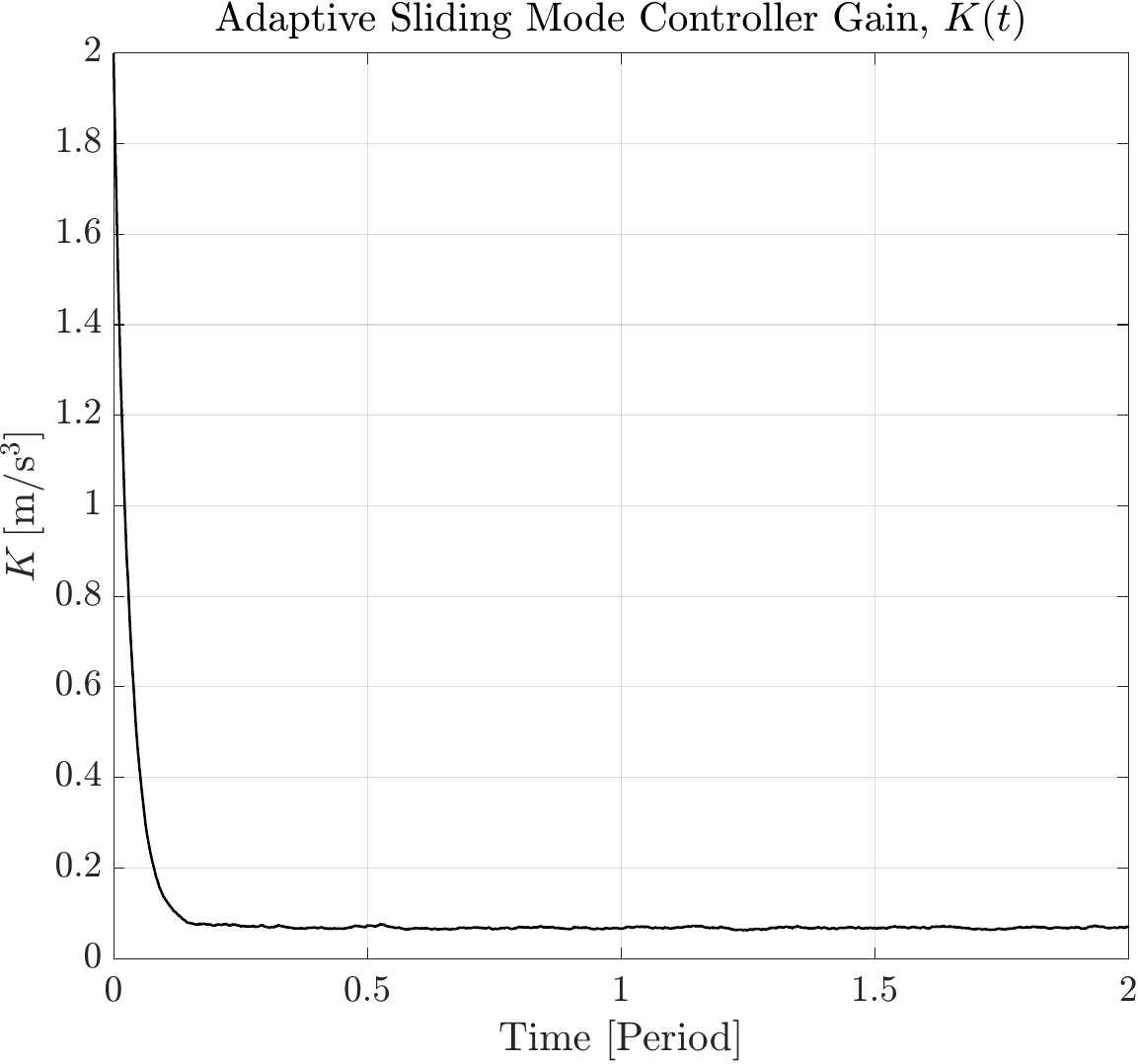}
\caption{ASMC adaptive gain $K(t)$.}
\label{fig:nav_asmc_gain}
\end{subfigure}
\caption{Adaptive gain histories under navigation error.}
\label{fig:nav_gains}
\end{figure}

The ASMC boundary term evaluates to
\begin{equation}
\left(
\frac{\delta}{C_2}
\right)^{1/\nu_{\mathrm{asmc}}}
=
\left(
\frac{0.5}{3}
\right)^{11/9}
=
0.1667,
\label{eq:nav_asmc_boundary_value}
\end{equation}
and the corresponding ASMC ultimate bounds are
\begin{align}
\limsup_{t\to\infty}
\|\hat{\boldsymbol{\sigma}}_2(t)\|
&\leq
0.5\,\mathrm{m/s^2},
\label{eq:nav_asmc_sigma2_bound}\\
\limsup_{t\to\infty}
\|\hat{\boldsymbol{\sigma}}_1(t)\|_\infty
&\leq
0.1667\,\mathrm{m/s},
\label{eq:nav_asmc_sigma1_bound}\\
\limsup_{t\to\infty}
\|\hat{\boldsymbol{e}}(t)\|_\infty
&\leq
0.0560\,\mathrm{m},
\label{eq:nav_asmc_ehat_bound}\\
\limsup_{t\to\infty}
\|\dot{\hat{\boldsymbol{e}}}(t)\|_\infty
&\leq
0.2238\,\mathrm{m/s}.
\label{eq:nav_asmc_ehatdot_bound}
\end{align}

\begin{figure}[H]
\centering
\begin{subfigure}{0.48\linewidth}
\centering
\includegraphics[width=\linewidth]{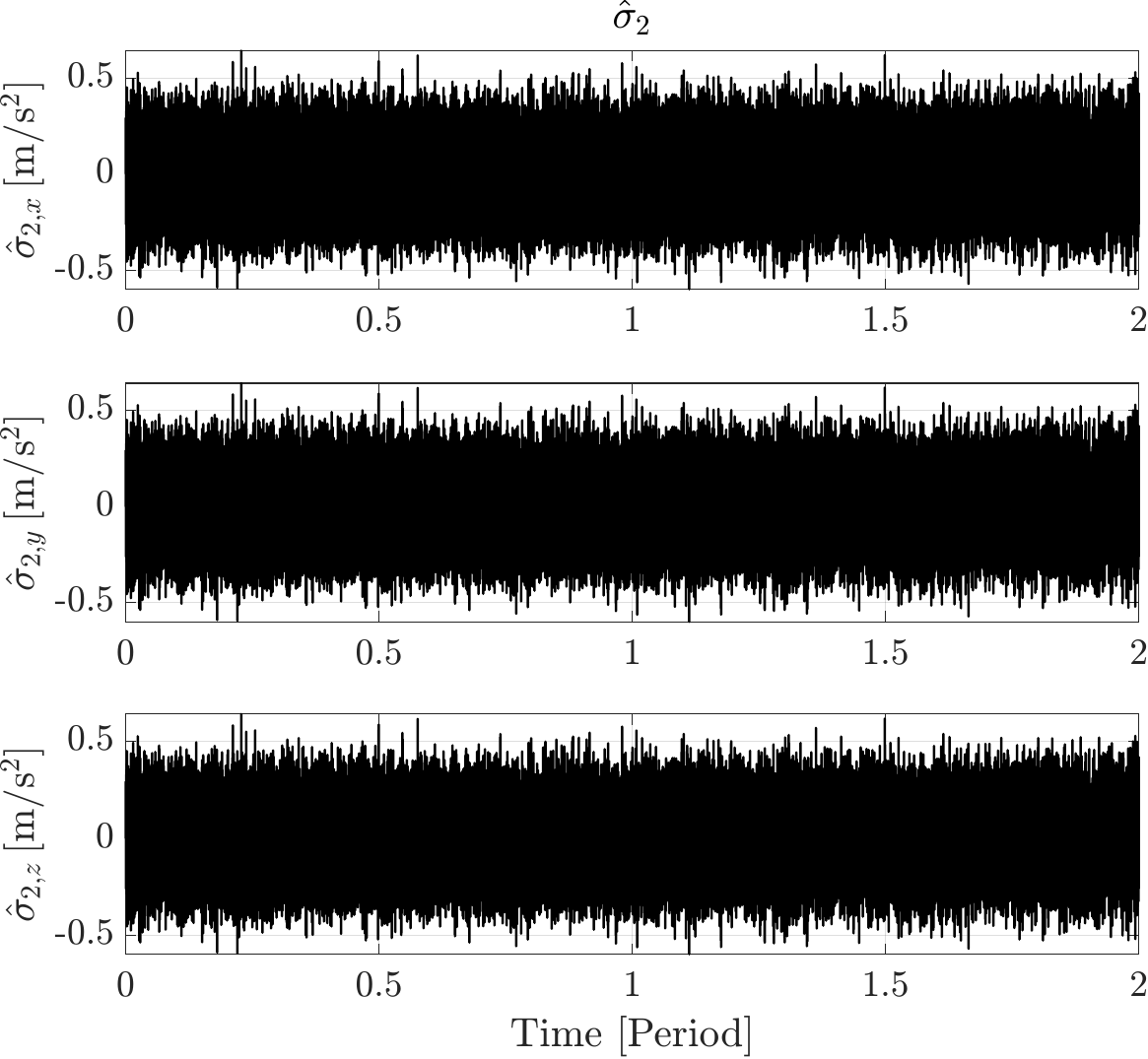}
\caption{Second-order ASMC surface $\hat{\boldsymbol{\sigma}}_2$.}
\label{fig:nav_asmc_sigma2}
\end{subfigure}
\hfill
\begin{subfigure}{0.48\linewidth}
\centering
\includegraphics[width=\linewidth]{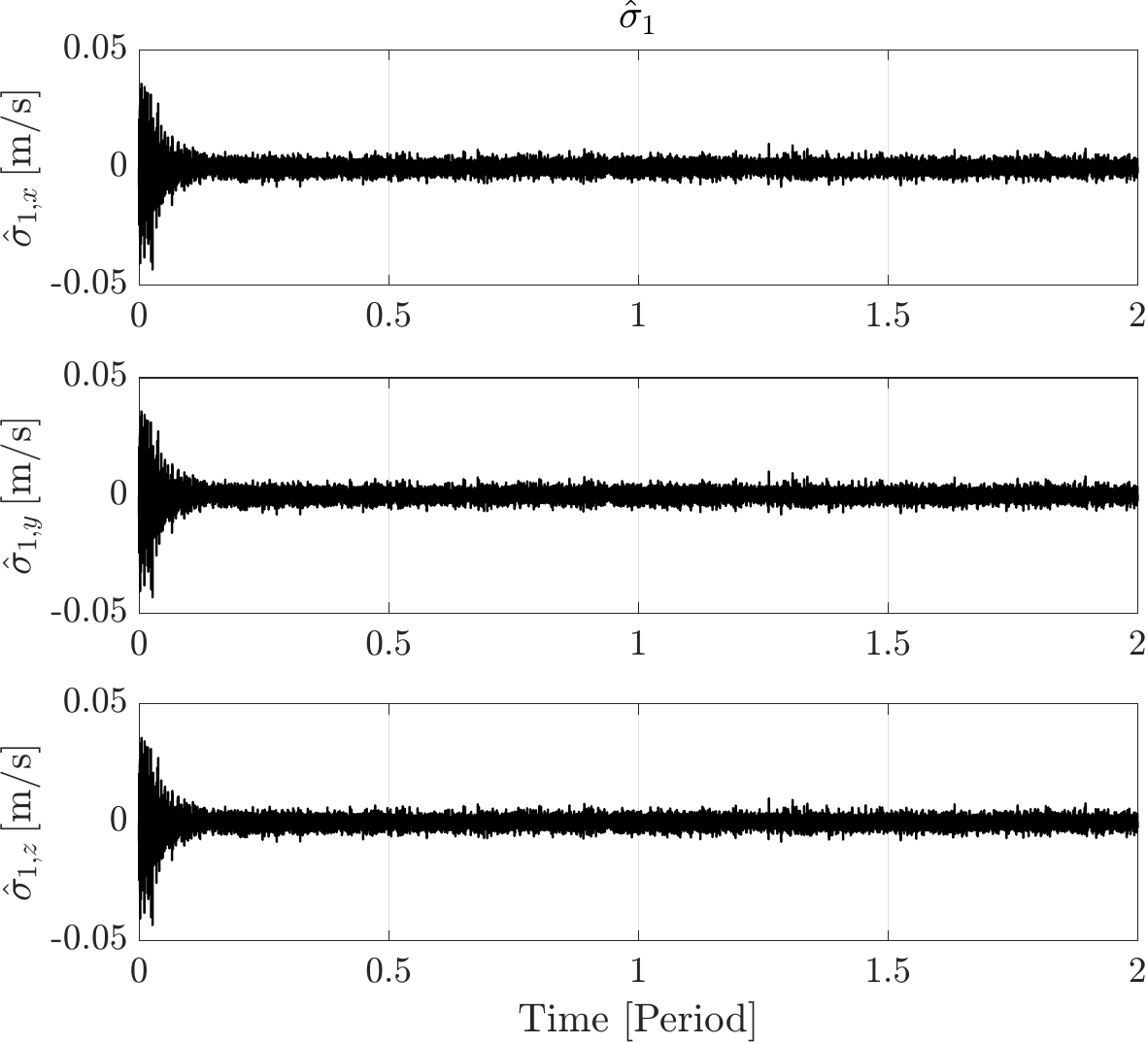}
\caption{First-order ASMC surface $\hat{\boldsymbol{\sigma}}_1$.}
\label{fig:nav_asmc_sigma1}
\end{subfigure}
\caption{ASMC surfaces under navigation error.}
\label{fig:nav_asmc_surfaces}
\end{figure}

\begin{figure}[H]
\centering
\begin{subfigure}{0.48\linewidth}
\centering
\includegraphics[width=\linewidth]{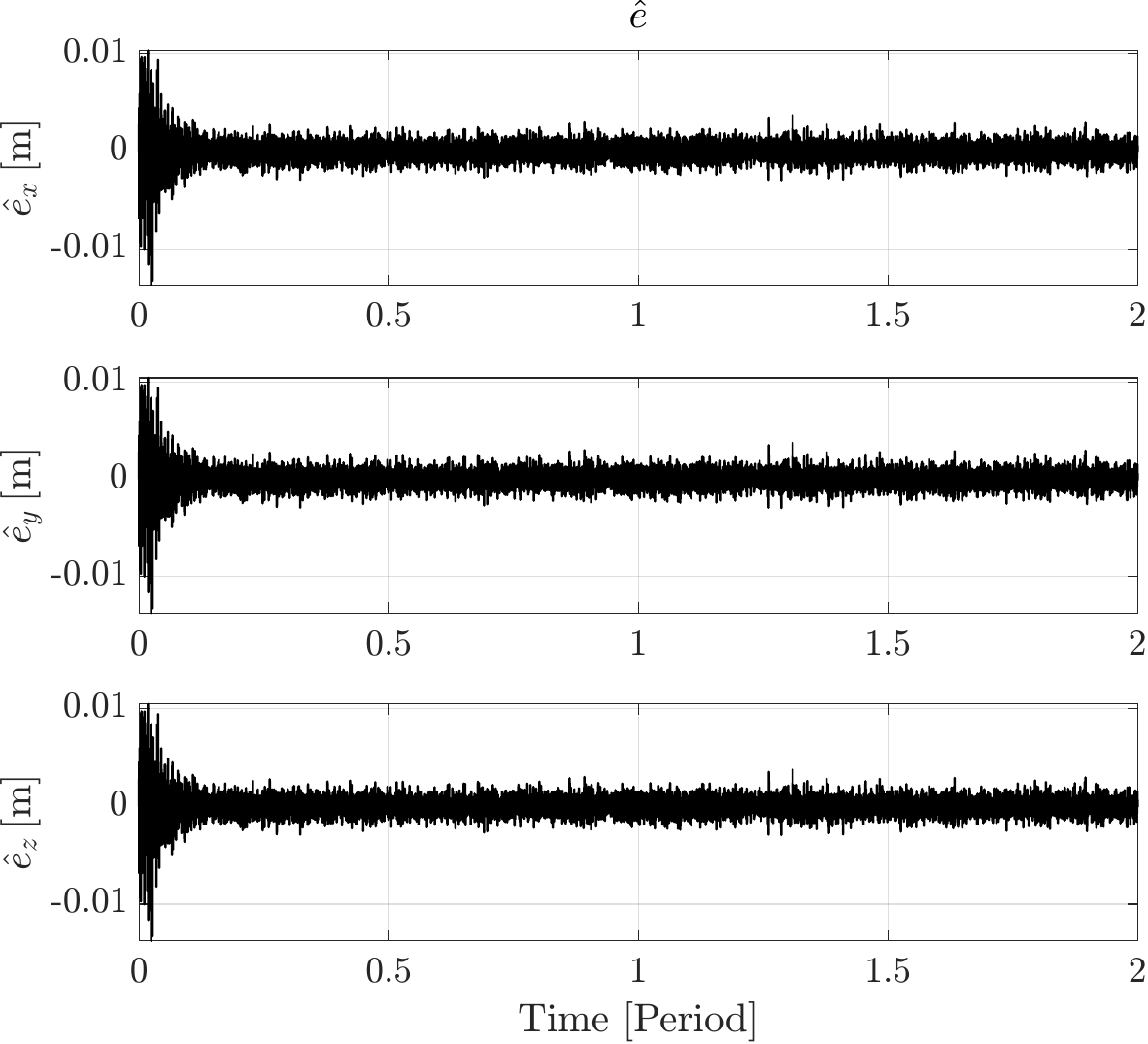}
\caption{Estimated tracking position error $\hat{\boldsymbol{e}}$.}
\label{fig:nav_ehat}
\end{subfigure}
\hfill
\begin{subfigure}{0.48\linewidth}
\centering
\includegraphics[width=\linewidth]{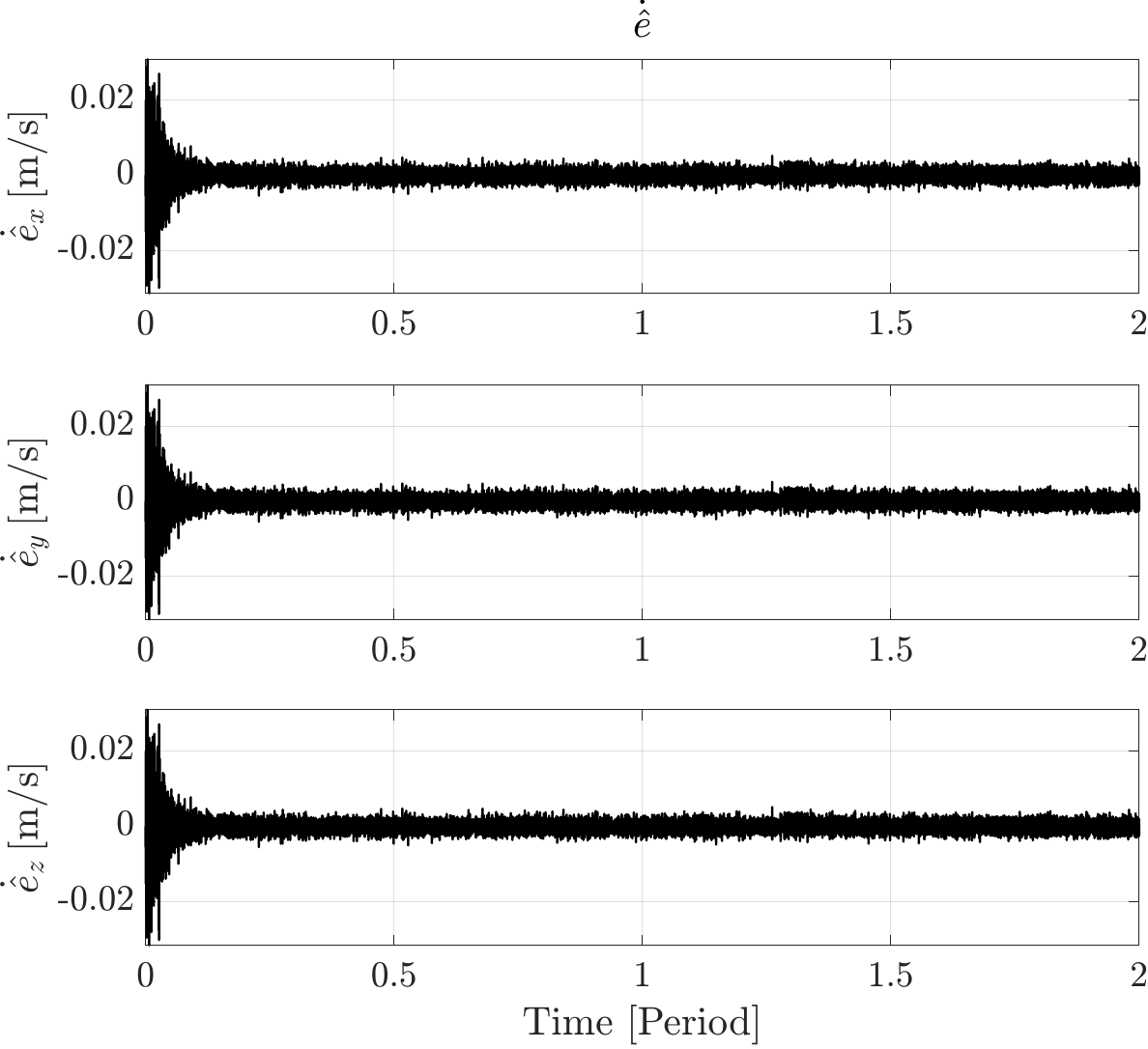}
\caption{Estimated tracking velocity error
$\dot{\hat{\boldsymbol{e}}}$.}
\label{fig:nav_ehatdot}
\end{subfigure}
\caption{Estimated tracking errors under navigation error.}
\label{fig:nav_estimated_tracking_errors}
\end{figure}

\subsubsection*{Actual tracking errors}

The actual tracking-error bounds combine the ASMC estimated-error bounds
with the SMDO state-estimation contribution:
\begin{align}
\limsup_{t\to\infty}
\|\boldsymbol{e}(t)\|_\infty
&\leq
0.3422\,\mathrm{m},
\label{eq:nav_actual_e_bound}\\
\limsup_{t\to\infty}
\|\dot{\boldsymbol{e}}(t)\|_\infty
&\leq
1.3687\,\mathrm{m/s}.
\label{eq:nav_actual_edot_bound}
\end{align}
These values are conservative, as they are obtained by directly summing
the SMDO ultimate bounds with the estimated ASMC tracking bounds. The
simulation confirms a more structured behavior: the navigation error
manifests primarily as noise in the SMDO estimation and compensation
channels, while the ASMC tracking variables remain essentially unaffected,
as shown in Fig.~\ref{fig:actual_tracking_errors_noise}.

\begin{figure}[H]
  \centering
  \begin{subfigure}{0.48\linewidth}
    \centering
    \includegraphics[width=\linewidth]{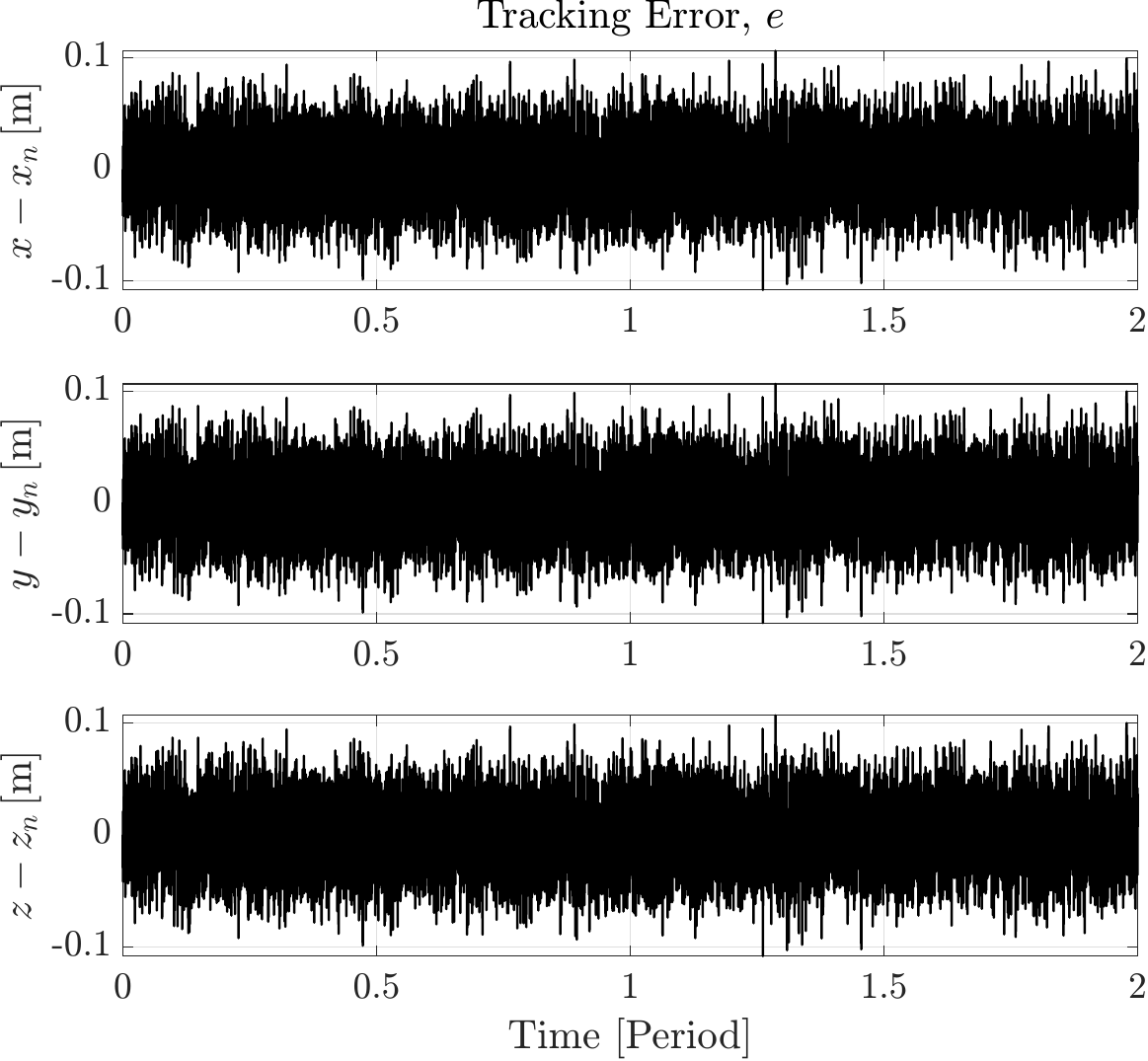}
    \caption{Actual tracking error
    $\boldsymbol{e} = \boldsymbol{q} - \boldsymbol{q}_n$.}
    \label{fig:Tracking_Error_nav}
  \end{subfigure}
  \hfill
  \begin{subfigure}{0.48\linewidth}
    \centering
    \includegraphics[width=\linewidth]{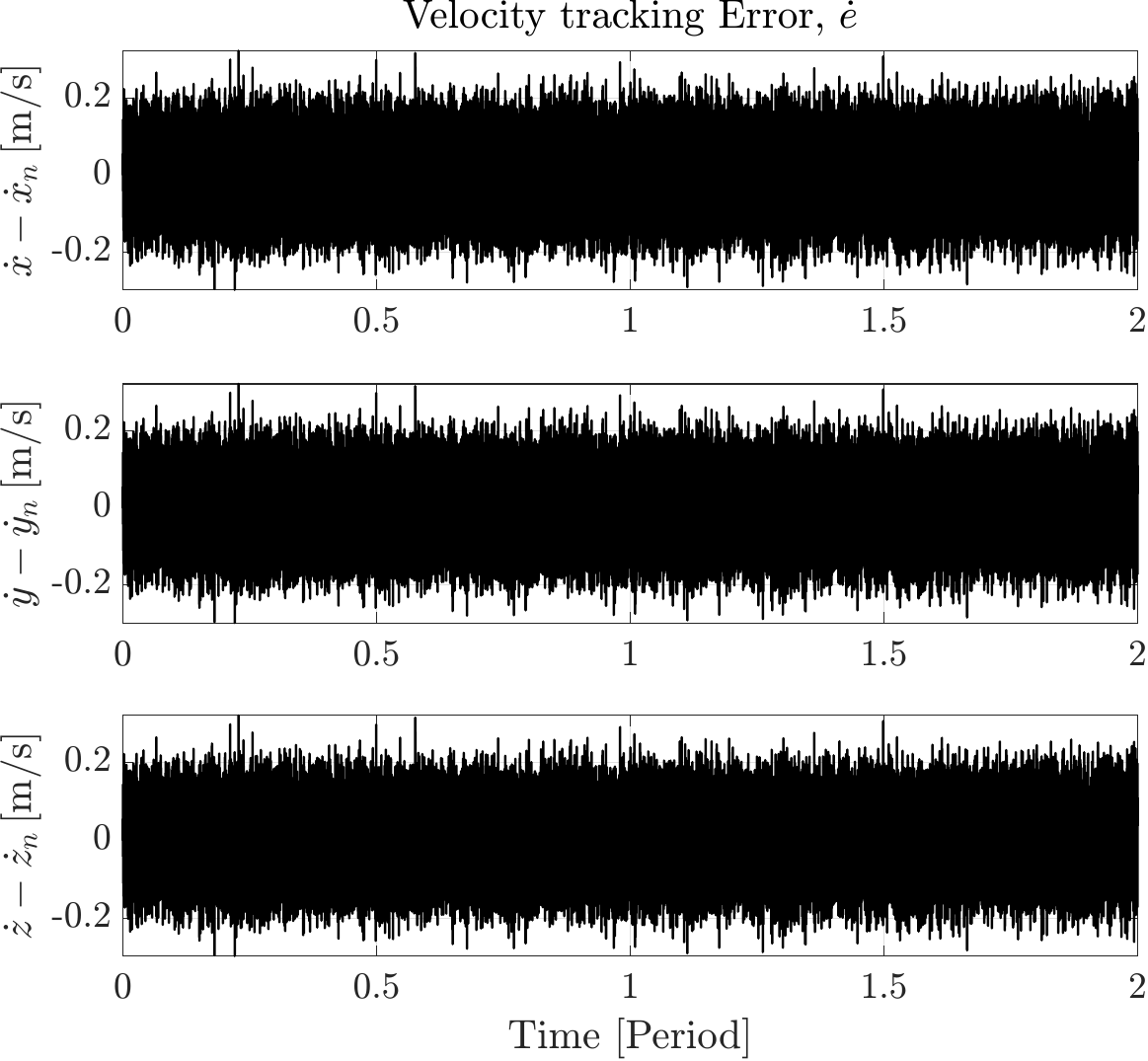}
    \caption{Actual velocity tracking error
    $\dot{\boldsymbol{e}} = \dot{\boldsymbol{q}} - \dot{\boldsymbol{q}}_n$.}
    \label{fig:edot_nav}
  \end{subfigure}
  \caption{Actual tracking errors under navigation error.}
  \label{fig:actual_tracking_errors_noise}
\end{figure}

Overall, this case demonstrates a structural advantage of the proposed
output-feedback architecture: the SMDO confines the effect of navigation
error to the observer layer, preventing its propagation into the tracking
control loop. Consequently, the ASMC achieves tracking performance that is
largely decoupled from navigation quality, a property that is directly
attributable to the estimated-state feedback structure.

\section{Conclusion}
\label{sec:conclusion}
This paper presented a two-phase relative orbit control framework for spacecraft formation reconfiguration by combining analytic minimum-energy PCO insertion with an auxiliary-state second-order SMDO-ASMC feedback architecture. In the reaching phase, the PCO entry phase was treated as an explicit design variable rather than a prescribed terminal condition. By parameterizing the transfer cost with respect to the PCO phase angle, the stationarity condition was reduced to a quartic polynomial, whose real roots provide all candidate entry phases. This enables analytic selection of the minimum-energy PCO insertion point without numerical phase sweeping.

For robust closed-loop operation, the nominal transfer and PCO references were augmented by an estimated-state feedback structure consisting of a second-order sliding mode disturbance observer and a second-order adaptive sliding mode controller. The SMDO reduces the effective lumped disturbance to a bounded residual, while the ASMC regulates the estimated tracking error with respect to the phase-dependent nominal reference. The auxiliary-state formulation induces a triangular observer-controller structure, in which the observer error and estimated tracking error are bounded separately and the actual tracking error is recovered algebraically from their sum. The resulting analysis establishes practical ultimate boundedness, with the final tracking-error bound determined by the SMDO and ASMC boundary-layer parameters.

A surrogate derivative generator was also introduced to implement the second-order sliding surfaces without direct finite differencing or algebraic-loop computation. The parallel first-order surrogate branch generates loop-free acceleration-level signals only for derivative evaluation, and its commands are not applied to the actual closed-loop plant.

Numerical simulations demonstrated that the proposed framework achieves energy-efficient PCO insertion, accurate tracking during both the reaching and PCO phases, bounded adaptive gains, and bounded control effort. Simulations with navigation-error-contaminated measurements further showed that the estimated-state architecture limits the direct influence of measurement errors on the ASMC tracking loop. Future work will extend the proposed method to higher-fidelity orbital dynamics, navigation-filter coupling, and hardware-in-the-loop validation.

\appendix

\section{Derivation of Cost-Optimal Nominal Trajectory}
\label{sec:appendix_nominal_trajectory}

This appendix documents two items that support
Section~\ref{sec:system_modeling_nominal_design}: the constant
skew-symmetry of $C_{\Phi}$, which is asserted without proof in
Section~\ref{subsec:nominal_reaching_design}, and the complete derivation
of the optimal PCO entry phase, including the handling of degenerate cases
not covered in the main text.

\subsection{Properties of \texorpdfstring{$C_{\Phi}$}{C\_Phi}}
\label{subsec:appendix_Cphi}

Substituting the explicit CW state transition matrix
(\cite{vallado2001fundamentals}) into Eq.~\eqref{eq:Cphi_def} shows by
direct computation that $C_{\Phi}$ is independent of $t$ and satisfies
$C_{\Phi}=-C_{\Phi}^{\top}$. Both properties are used in
Section~\ref{subsec:nominal_reaching_design} and the derivation of
Eqs.~\eqref{eq:ur}--\eqref{eq:xi_n} follows \cite{cho2009analytic}.

\subsection{Candidate Selection and Special Cases for the Optimal Entry Phase}
\label{subsec:appendix_cases}

The Weierstrass substitution $\tau = \tan(\phi_0/2)$ applied to
Eq.~\eqref{eq:stationarity} yields the quartic
Eq.~\eqref{eq:phase_quartic_polynomial}, whose finite real roots map back
to phase candidates via
\begin{equation}
\phi_{0,i}=\operatorname{mod}\!\left(2\arctan(\tau_i),\,2\pi\right).
\end{equation}
The point $\phi_0=\pi$ corresponds to $\tau\to\infty$ and is therefore not
represented by any finite root. Substituting $\phi_0=\pi$ directly into
Eq.~\eqref{eq:stationarity} gives
\begin{equation}
A_q\sin 2\pi+2B_q\cos 2\pi+2C_q\cos\pi-2D_q\sin\pi
=2(B_q-C_q),
\end{equation}
so $\phi_0=\pi$ is a stationary candidate if and only if $B_q-C_q=0$.
When all four coefficients $A_q,B_q,C_q,D_q$ vanish simultaneously,
$q_\phi$ is identically zero and the cost is constant over $[0,2\pi)$.
The three resulting cases are summarized in
Table~\ref{tab:phi0_case_summary}.

\begin{table}[H]
\centering
\small
\caption{Case-dependent candidate sets for selecting the optimal PCO entry
phase $\phi_0^\star$.}
\label{tab:phi0_case_summary}
\begin{tabular}{p{0.25\textwidth}p{0.43\textwidth}p{0.22\textwidth}}
\toprule
Case & Candidate set $\mathcal{C}$ & Selection rule \\
\midrule
$B_q-C_q\neq0$ &
$\displaystyle
\left\{
\operatorname{mod}(2\arctan\tau_i,2\pi)
\mid \tau_i\in\mathbb{R},\ q_\phi(\tau_i)=0
\right\}$ &
$\displaystyle \phi_0^\star=\arg\min_{\phi_0\in\mathcal{C}}J(\phi_0)$ \\
\addlinespace
$B_q-C_q=0$,\ $q_\phi\not\equiv0$ &
$\displaystyle
\left\{
\operatorname{mod}(2\arctan\tau_i,2\pi)
\mid \tau_i\in\mathbb{R},\ q_\phi(\tau_i)=0
\right\}\cup\{\pi\}$ &
$\displaystyle \phi_0^\star=\arg\min_{\phi_0\in\mathcal{C}}J(\phi_0)$ \\
\addlinespace
$q_\phi\equiv0$ &
$\mathcal{C}=[0,2\pi)$ &
$J(\phi_0)=\mathrm{const}$; any $\phi_0$ is optimal \\
\bottomrule
\end{tabular}
\end{table}

In every non-degenerate case the polynomial roots are only stationary
candidates, not optimality certificates; the global minimizer is
identified by directly evaluating $J(\phi_0)$ over $\mathcal{C}$ as in
Eq.~\eqref{eq:phi0_star}.

\graphicspath{{./figures_diff/}}

\section{Comparison of Differentiation Methods}
\label{sec:comparison}

\subsection*{Finite Differencing}

The simplest option is the Euler approximation:
\begin{equation}
    \widehat{\dot{u}}(t) = \frac{u(t) - u(T_{\mathrm{prev}})}{t -
    T_{\mathrm{prev}}}
    \bigg|_{t > T_{\mathrm{prev}}},
    \label{eq:euler}
\end{equation}
where $\tau := t - T_{\mathrm{prev}}$ is the sampling interval.
If each measurement carries independent noise with variance $\sigma^2$,
error propagation yields $\mathrm{Var}(\widehat{\dot{u}}) = 2\sigma^2/\tau^2$,
so high-frequency measurement noise is amplified by the differentiation step.
When the estimate of $\dot{\hat{\boldsymbol{\sigma}}}_1$ enters the control law
through $1/\delta$, even modest noise levels can produce severe chattering.

\subsection*{Levant's Differentiator}

Levant (\cite{levant1998robust}) proposed a robust exact differentiator based on second-order sliding modes. Let $f(t)$ be the signal to be differentiated, which is assumed to consist of a base signal with a derivative whose Lipschitz constant is bounded by a known constant $C>0$, plus a bounded measurement noise. The differentiator introduces an auxiliary state $z_0(t)$ that tracks $f(t)$, and produces an estimate $z_1(t)$ of its time derivative $\dot{f}(t)$:
\begin{align}
    \dot{z}_0 &= -\lambda|z_0 - f(t)|^{1/2}\operatorname{sgn}(z_0 - f(t)) + z_1,
    \label{eq:levant_z0} \\
    \dot{z}_1 &= -\alpha\operatorname{sgn}(z_0 - f(t)),
    \label{eq:levant_z1}
\end{align}
where $\lambda>0$ and $\alpha>0$ are design parameters satisfying $\alpha > C$
and $\lambda^2 \geq 4C(\alpha+C)/(\alpha-C)^2$. In the absence of measurement
noise, $z_0\to f(t)$ and $z_1\to\dot{f}(t)$ in finite time; under bounded noise
of magnitude $\varepsilon$, the differentiation error satisfies
$|z_1-\dot{f}|\leq\mu\sqrt{C\varepsilon}$ for some constant $\mu>0$.

In the present context, $f(t)$ is identified with $\hat{\boldsymbol{\sigma}}_1(t)$
and $z_1(t)$ serves as the estimate of $\dot{\hat{\boldsymbol{\sigma}}}_1$
required by the second-order ASMC surface
Eq.~\eqref{eq:second_order_tracking_surface}, applied componentwise.
The parameters $\lambda$ and $\alpha$ must be tuned to the scale of
$\dot{\hat{\boldsymbol{\sigma}}}_1$, whose magnitude is not known a priori;
mistuning degrades convergence or induces oscillation.

\subsection*{Proposed Surrogate Derivative Generator}

The proposed method is given by
Eqs.~\eqref{eq:smdo_surrogate_derivative}--\eqref{eq:asmc_surrogate_derivative},
with the first-order branch described in
Eqs.~\eqref{eq:first-order_branch}--\eqref{eq:L_first_order}.
No additional tuning parameters are required, and measurement noise is not
further amplified.

\subsection{Noise-Free Comparison}

Figures~\ref{fig:finite_diff}--\ref{fig:noisefreetotal} compare tracking error,
disturbance estimation error, and total control input for all three methods
without sensor noise. Under these conditions the proposed method outperforms
both alternatives by roughly one order of magnitude in tracking accuracy.

\begin{figure}[H]
    \centering
    \begin{subfigure}[b]{0.45\textwidth}
        \centering
        \includegraphics[width=\textwidth]{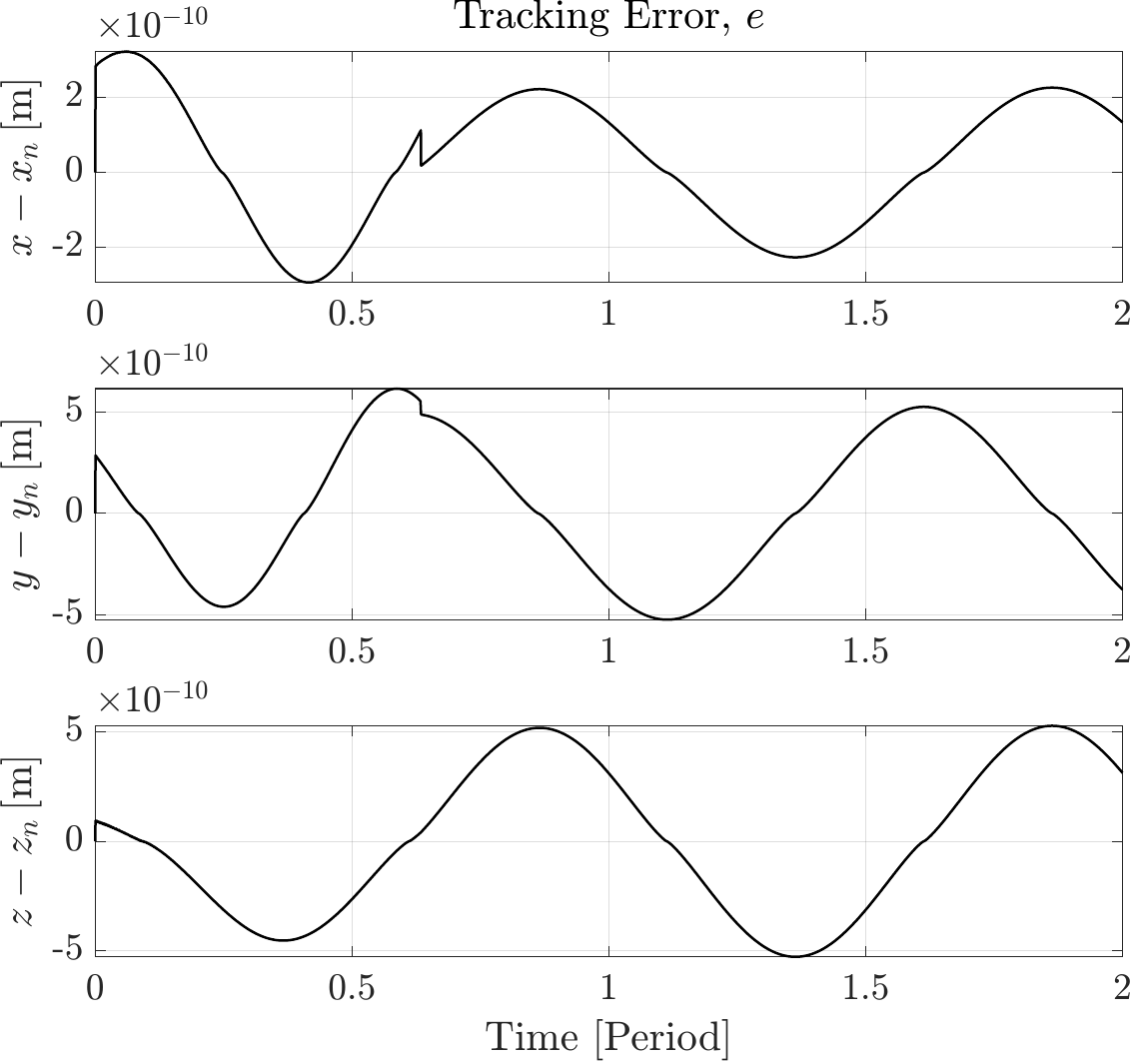}
        \caption{Tracking error}
    \end{subfigure}
    \hfill
    \begin{subfigure}[b]{0.45\textwidth}
        \centering
        \includegraphics[width=\textwidth]{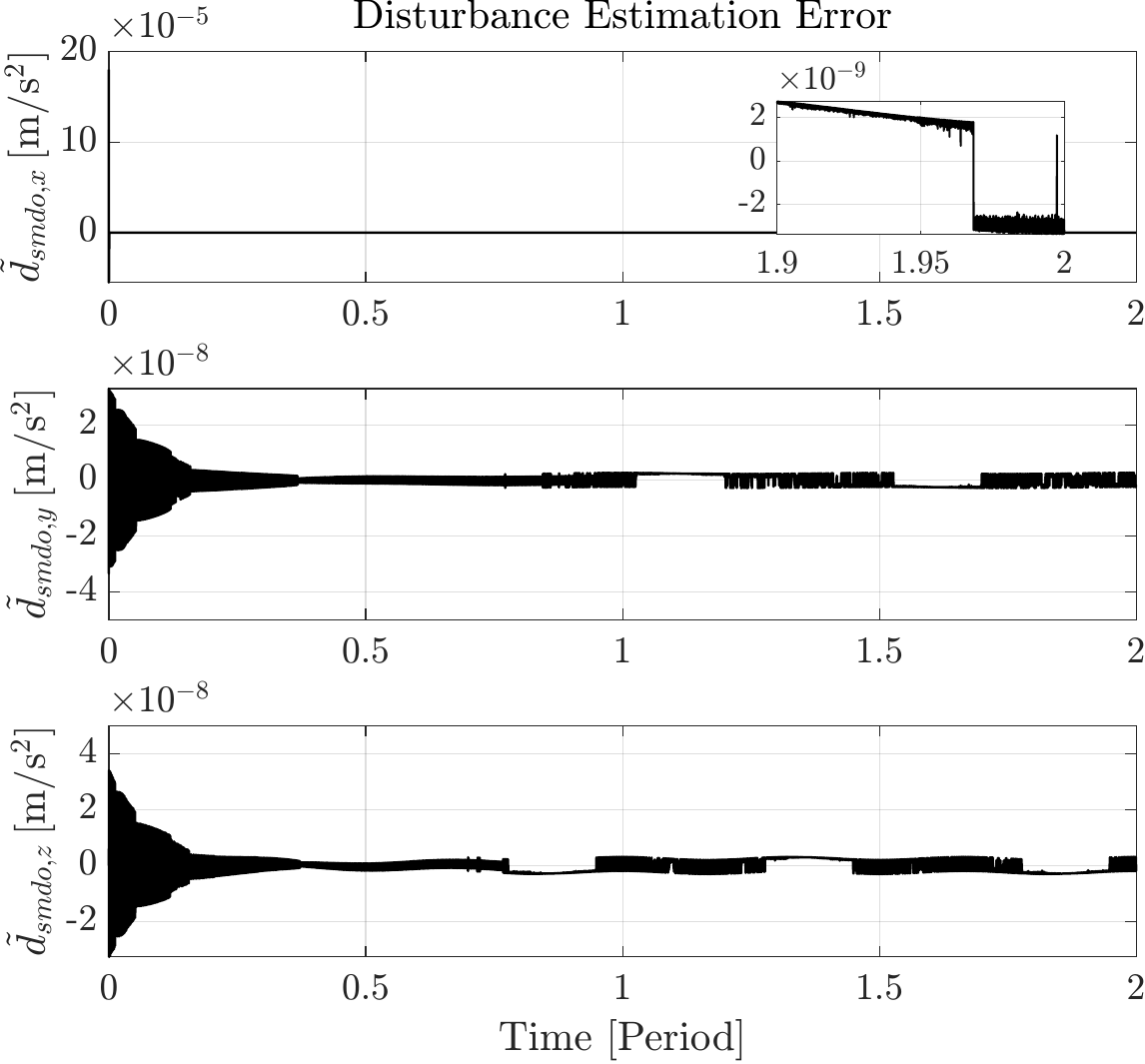}
        \caption{Disturbance estimation error}
    \end{subfigure}
    \caption{Noise-free performance: finite differencing.}
    \label{fig:finite_diff}
\end{figure}

\begin{figure}[H]
    \centering
    \begin{subfigure}[b]{0.45\textwidth}
        \centering
        \includegraphics[width=\textwidth]{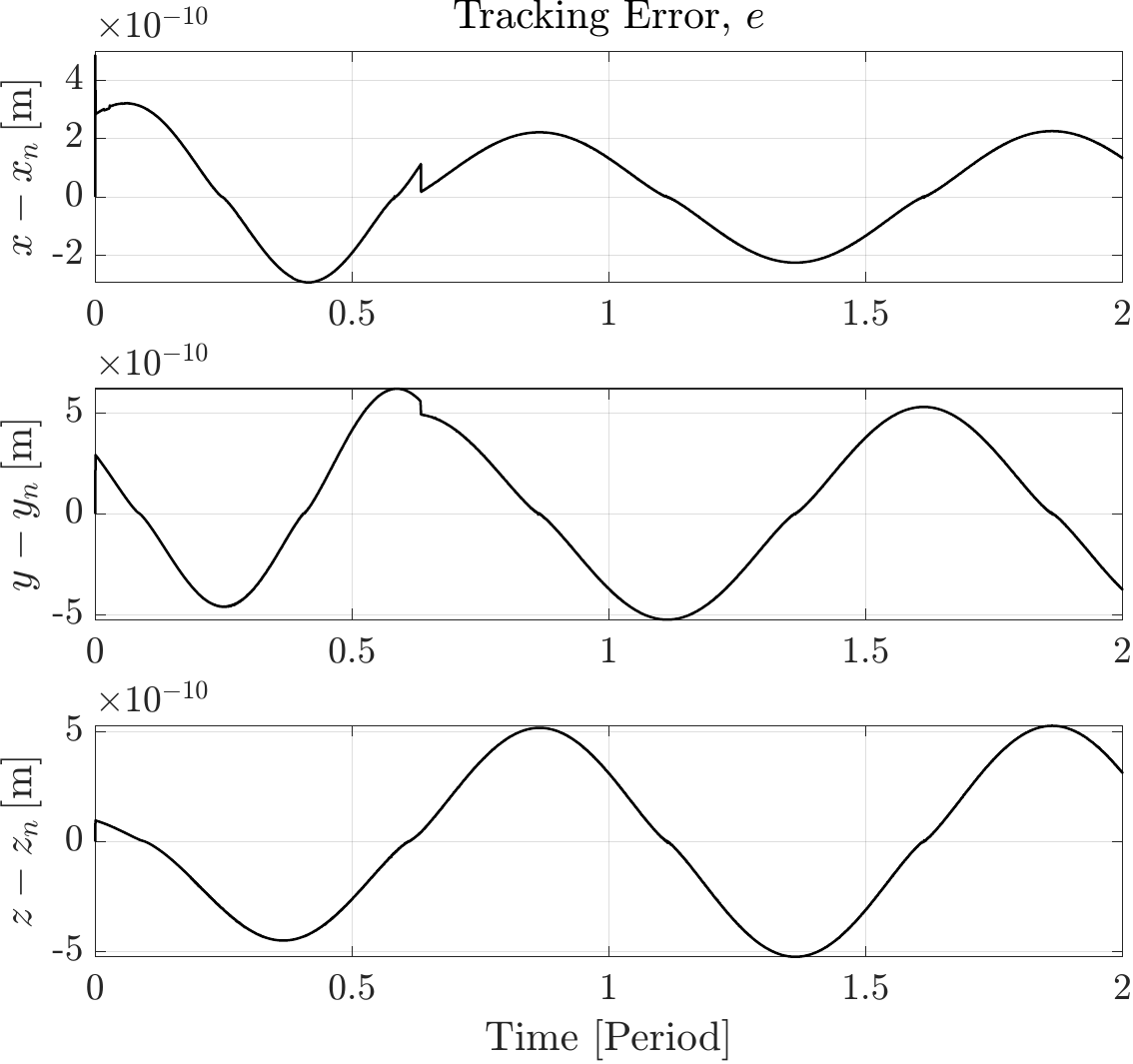}
        \caption{Tracking error}
    \end{subfigure}
    \hfill
    \begin{subfigure}[b]{0.45\textwidth}
        \centering
        \includegraphics[width=\textwidth]{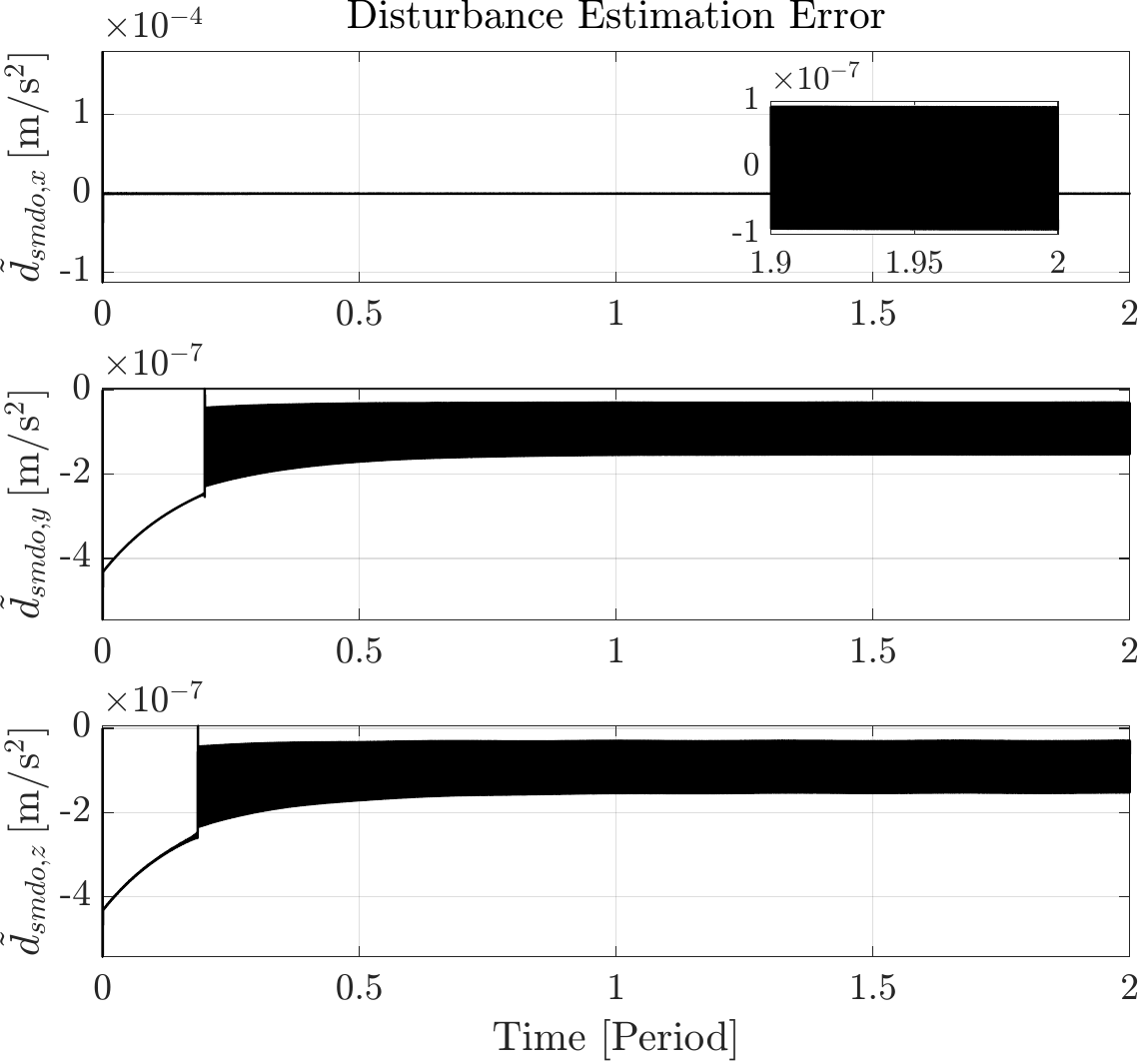}
        \caption{Disturbance estimation error}
    \end{subfigure}
    \caption{Noise-free performance: Levant's differentiator
             ($C = 0.5$, $\alpha = 0.55$, $\lambda = \sqrt{0.5}$).}
    \label{fig:levant_diff}
\end{figure}

\begin{figure}[H]
    \centering
    \begin{subfigure}[b]{0.45\textwidth}
        \centering
        \includegraphics[width=\textwidth]{Tracking_Error.pdf}
        \caption{Tracking error}
    \end{subfigure}
    \hfill
    \begin{subfigure}[b]{0.45\textwidth}
        \centering
        \includegraphics[width=\textwidth]{Disturbance_Estimation_Error.pdf}
        \caption{Disturbance estimation error}
    \end{subfigure}
    \caption{Noise-free performance: proposed surrogate derivative generator.}
    \label{fig:firstref_diff}
\end{figure}

\begin{figure}[H]
    \centering
    \begin{subfigure}[b]{0.3\textwidth}
        \centering
        \includegraphics[width=\textwidth]{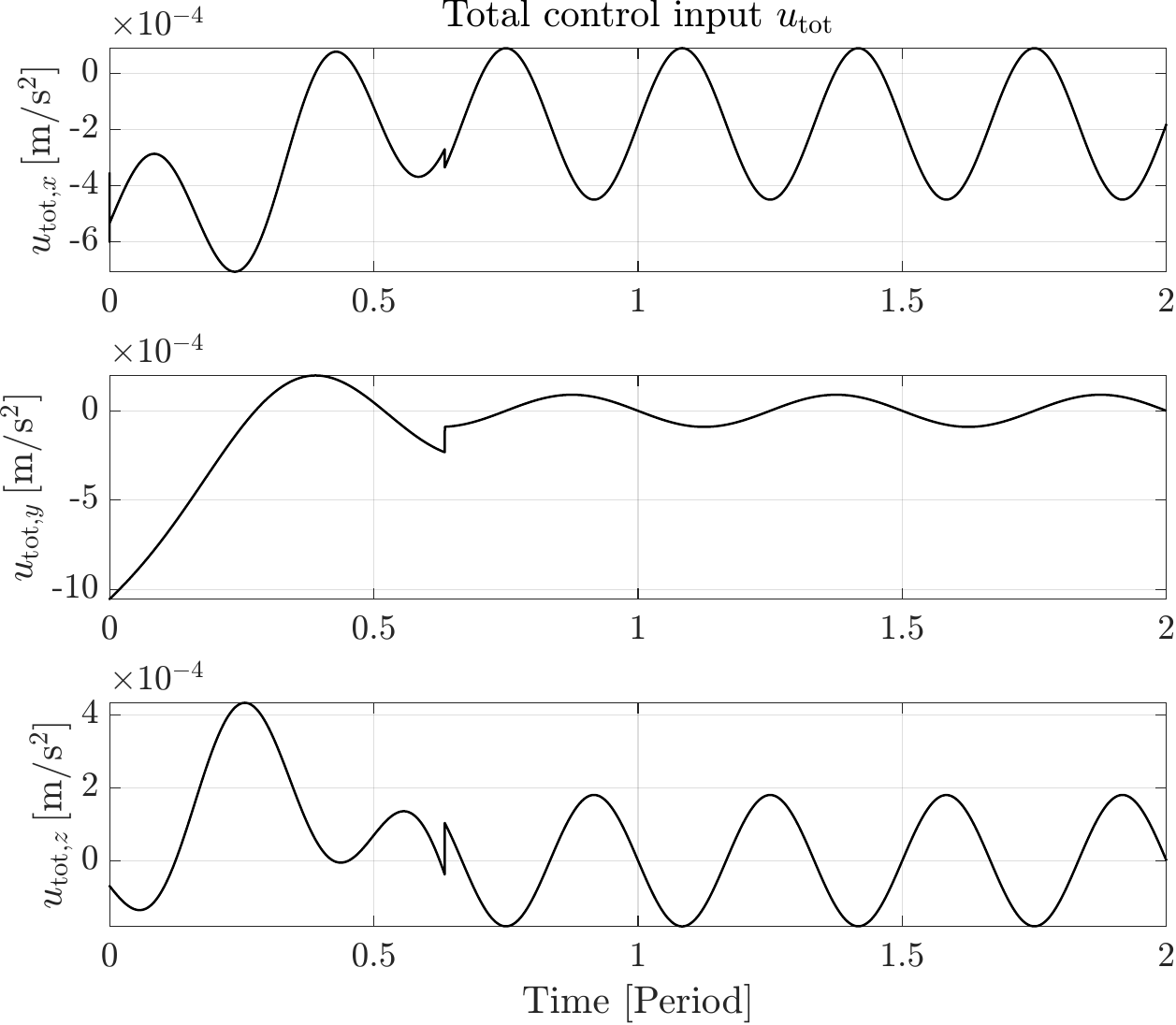}
        \caption{Finite differencing.}
    \end{subfigure}
    \hfill
    \begin{subfigure}[b]{0.3\textwidth}
        \centering
        \includegraphics[width=\textwidth]{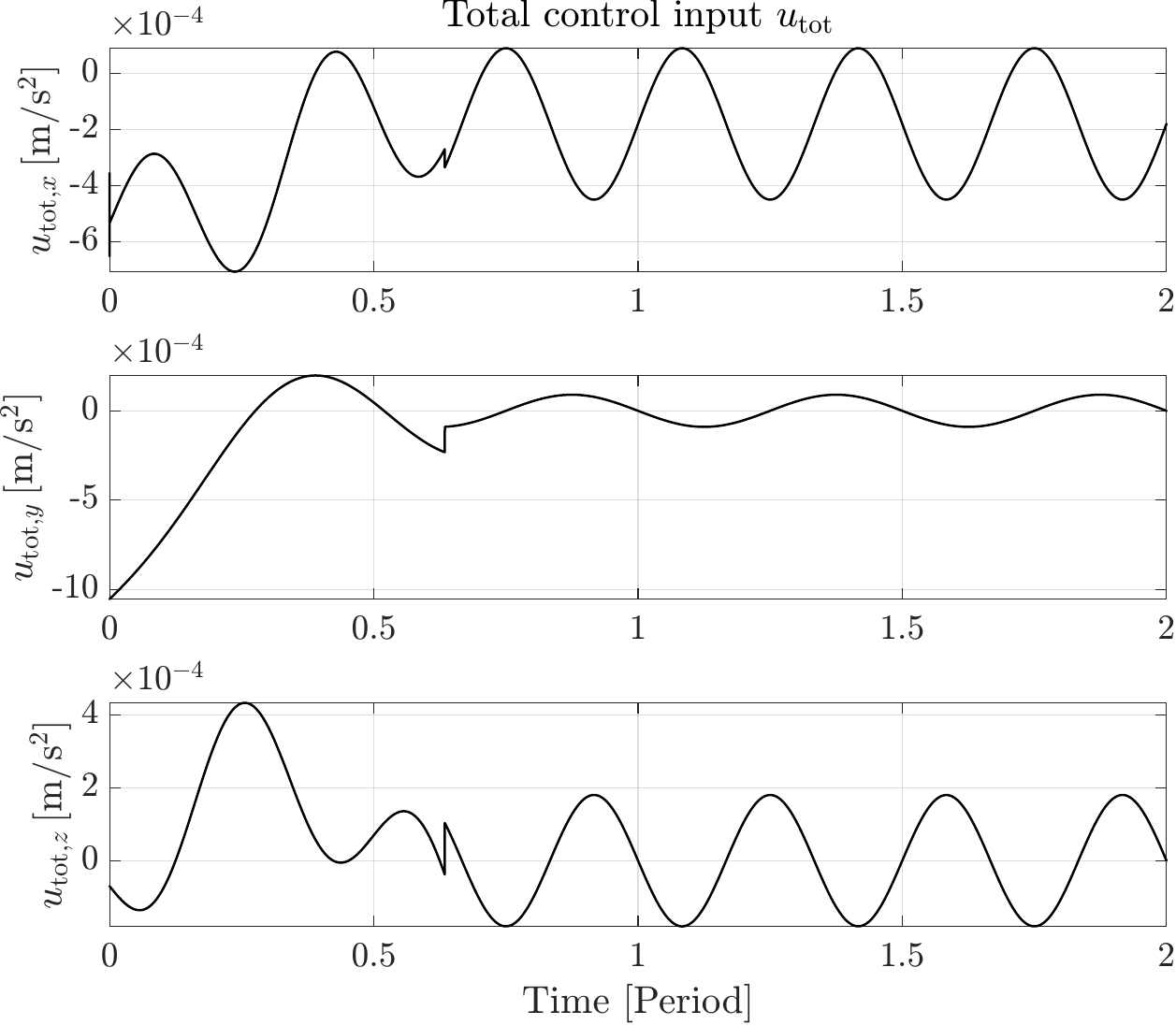}
        \caption{Levant's differentiator.}
    \end{subfigure}
    \hfill
    \begin{subfigure}[b]{0.3\textwidth}
        \centering
        \includegraphics[width=\textwidth]{u_tot.pdf}
        \caption{Proposed method.}
    \end{subfigure}
    \caption{Total control input (noise-free).}
    \label{fig:noisefreetotal}
\end{figure}

\subsection{Noisy Case}

Figures~\ref{fig:fd_noise}--\ref{fig:noisytotal} show tracking error,
disturbance estimation error, and total control input under the navigation
errors of Section~\ref{subsec:navigation_error_effect}.

\begin{figure}[H]
    \centering
    \begin{subfigure}[b]{0.45\textwidth}
        \centering
        \includegraphics[width=\textwidth]{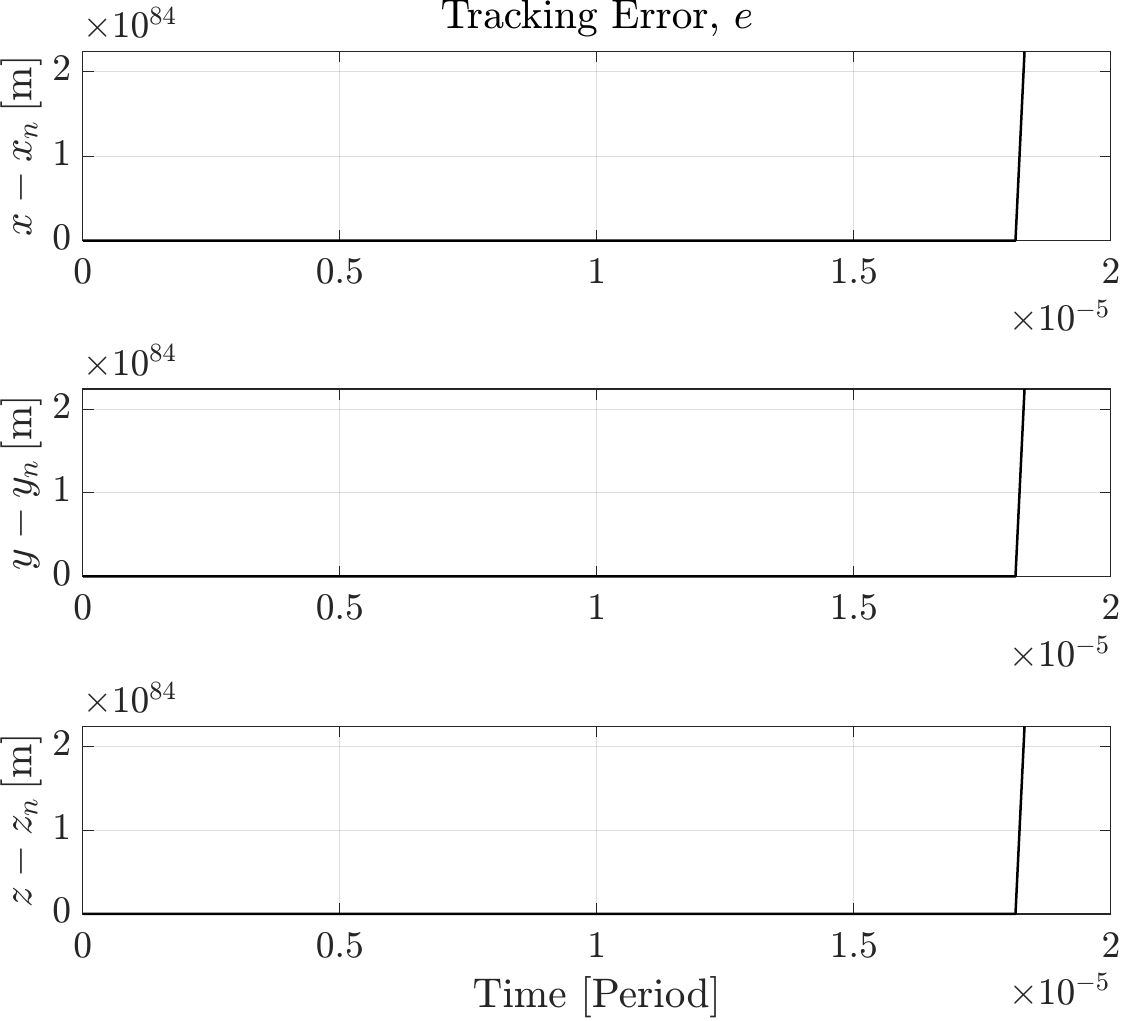}
        \caption{Tracking error}
    \end{subfigure}
    \hfill
    \begin{subfigure}[b]{0.45\textwidth}
        \centering
        \includegraphics[width=\textwidth]{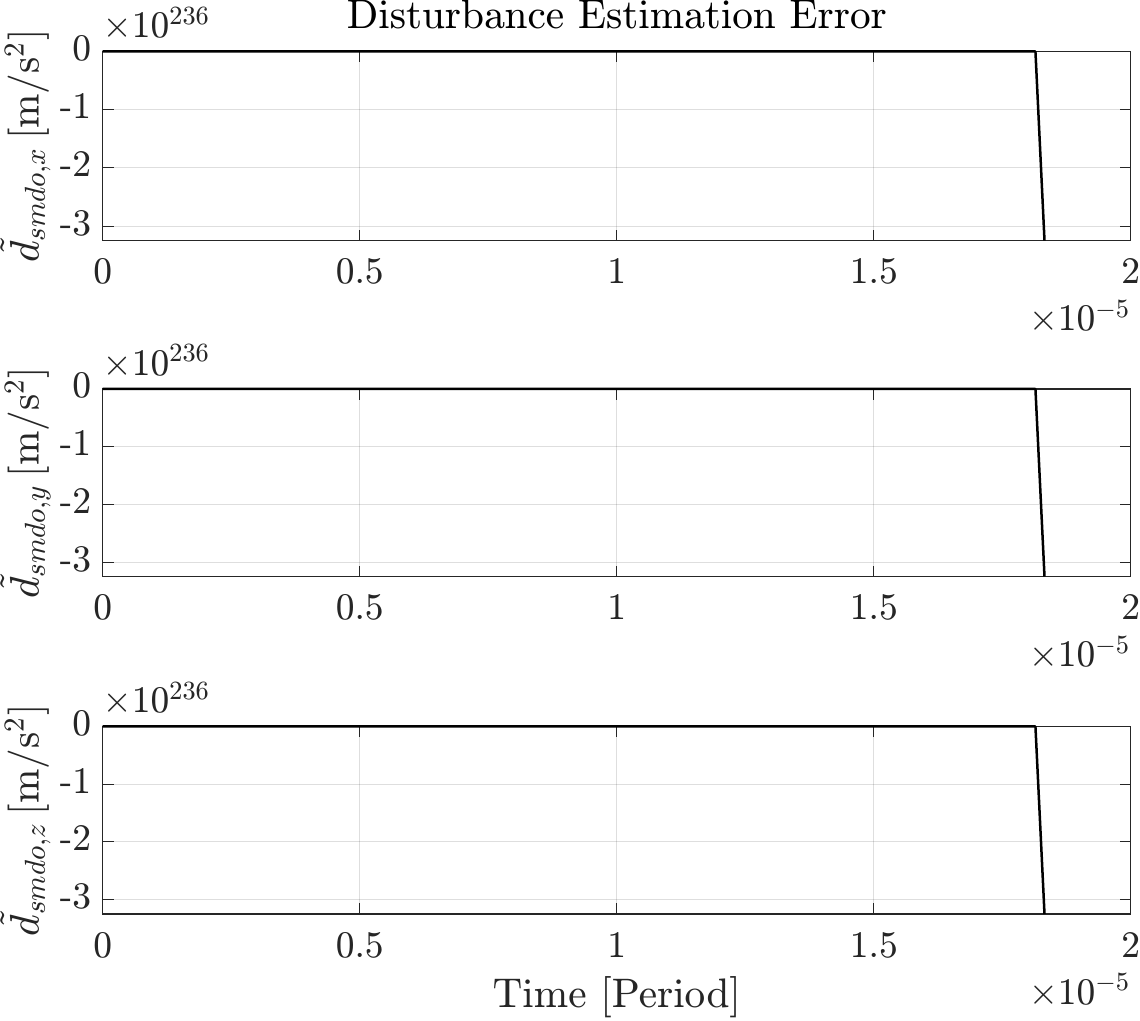}
        \caption{Disturbance estimation error}
    \end{subfigure}
    \caption{Noisy conditions: finite differencing.}
    \label{fig:fd_noise}
\end{figure}

\begin{figure}[H]
    \centering
    \begin{subfigure}[b]{0.45\textwidth}
        \centering
        \includegraphics[width=\textwidth]{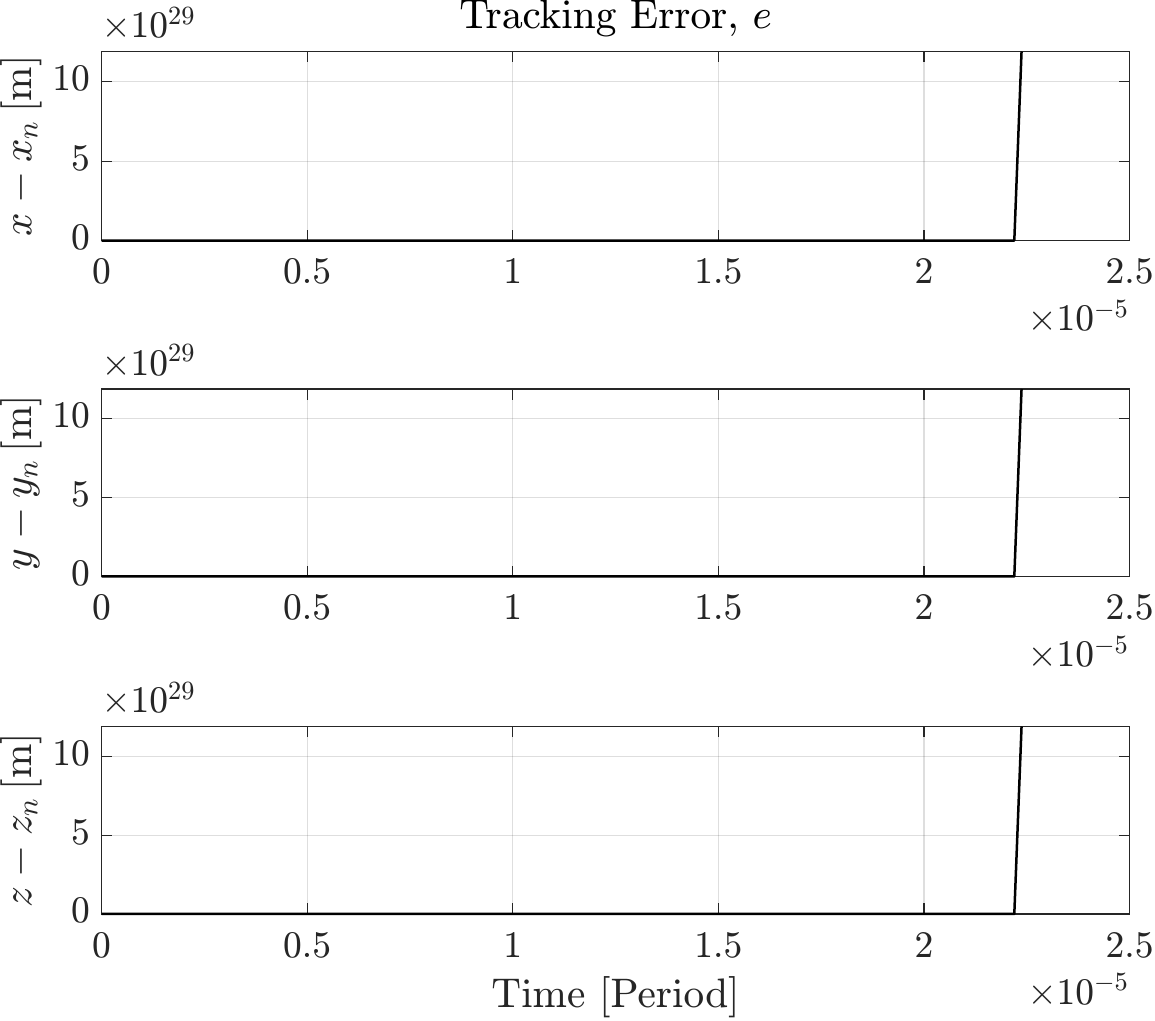}
        \caption{Tracking error}
    \end{subfigure}
    \hfill
    \begin{subfigure}[b]{0.45\textwidth}
        \centering
        \includegraphics[width=\textwidth]{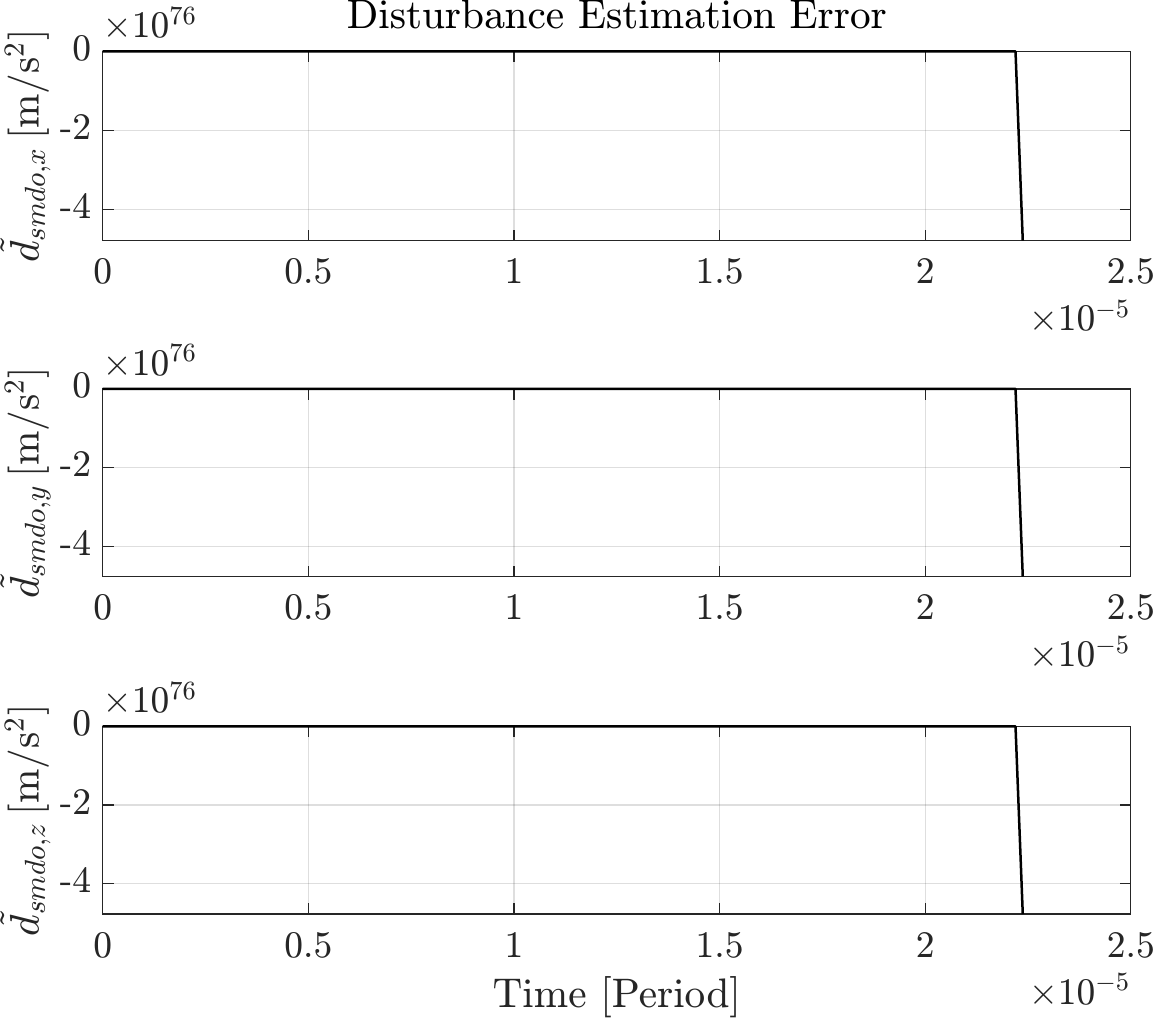}
        \caption{Disturbance estimation error}
    \end{subfigure}
    \caption{Noisy conditions: Levant's differentiator
             ($C = 0.5$, $\alpha = 0.55$, $\lambda = \sqrt{0.5}$).}
    \label{fig:lev_noise}
\end{figure}

\begin{figure}[H]
    \centering
    \begin{subfigure}[b]{0.45\textwidth}
        \centering
        \includegraphics[width=\textwidth]{Tracking_Error_nav.pdf}
        \caption{Tracking error}
    \end{subfigure}
    \hfill
    \begin{subfigure}[b]{0.45\textwidth}
        \centering
        \includegraphics[width=\textwidth]{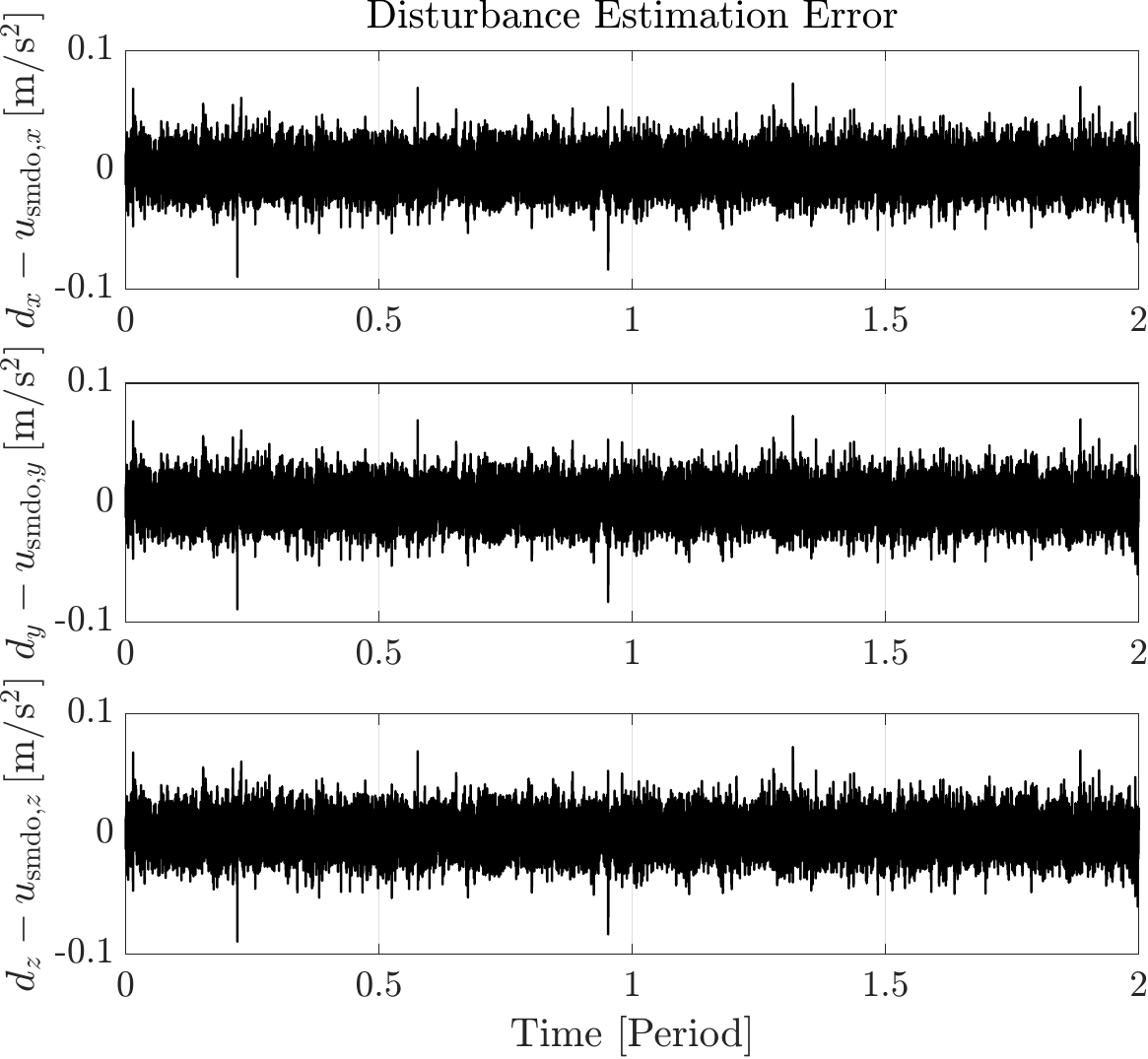}
        \caption{Disturbance estimation error}
    \end{subfigure}
    \caption{Noisy conditions: proposed surrogate derivative generator.}
    \label{fig:prop_noise}
\end{figure}

\begin{figure}[H]
    \centering
    \begin{subfigure}[b]{0.3\textwidth}
        \centering
        \includegraphics[width=\textwidth]{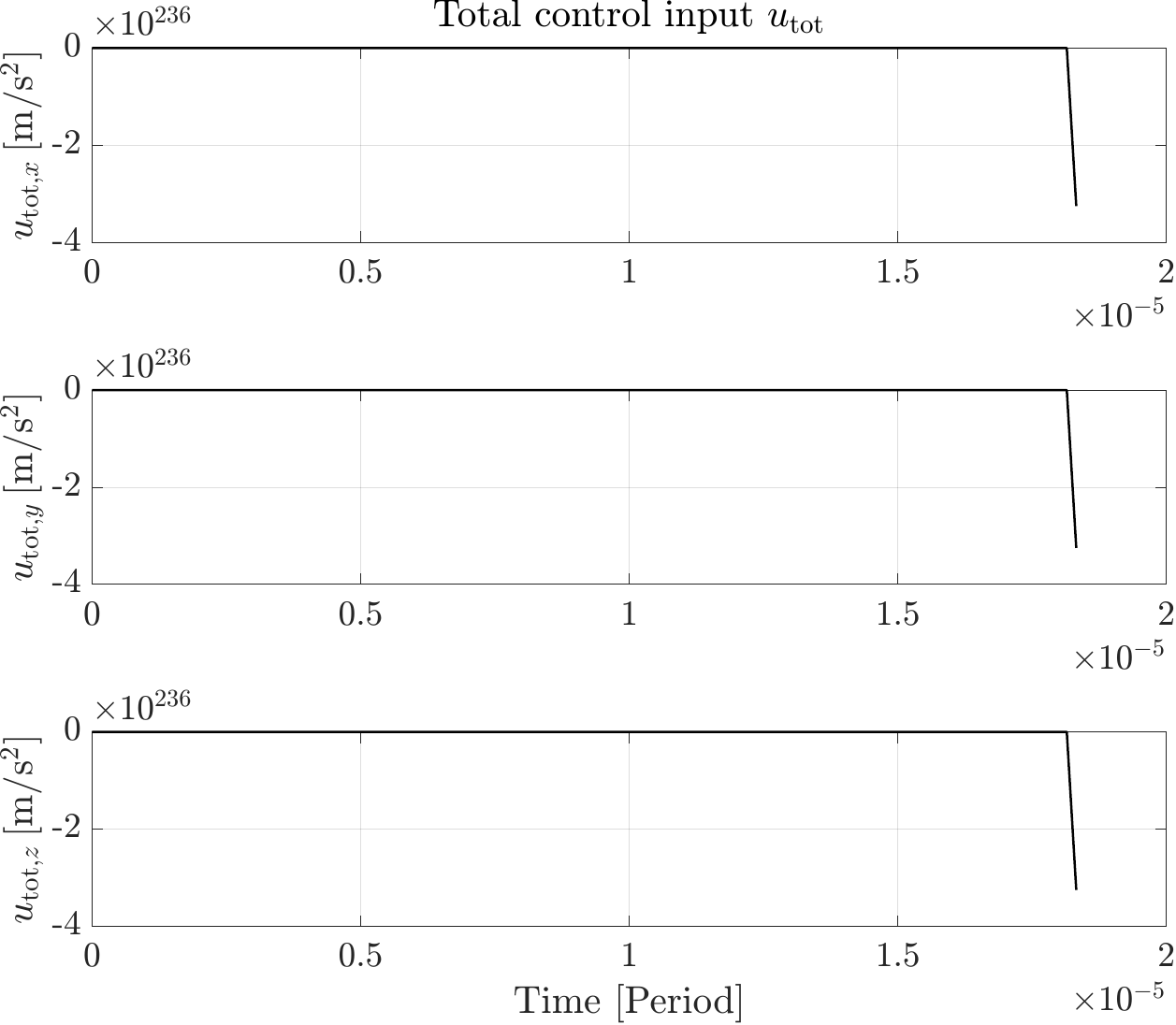}
        \caption{Finite differencing.}
    \end{subfigure}
    \hfill
    \begin{subfigure}[b]{0.3\textwidth}
        \centering
        \includegraphics[width=\textwidth]{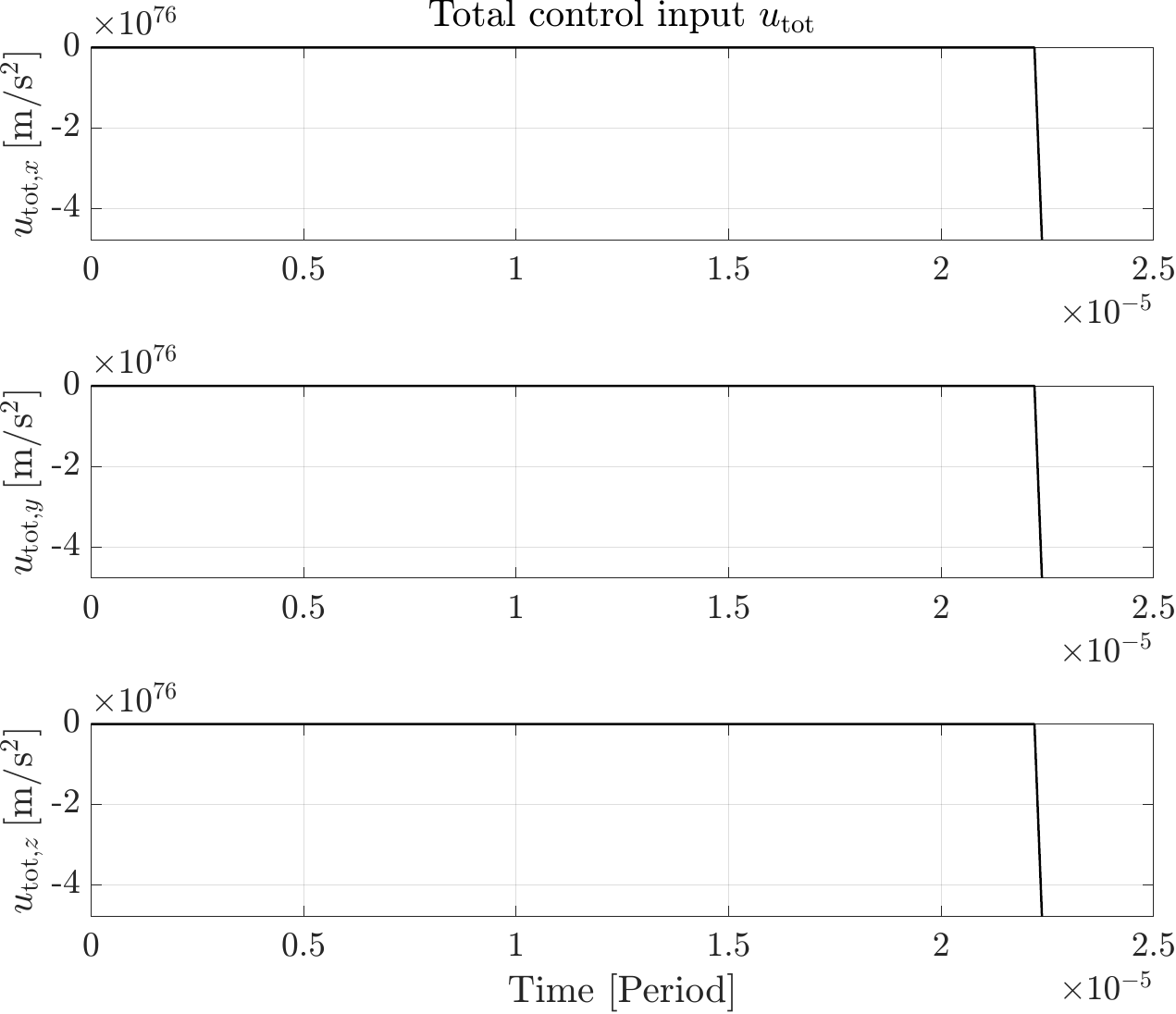}
        \caption{Levant's differentiator.}
    \end{subfigure}
    \hfill
    \begin{subfigure}[b]{0.3\textwidth}
        \centering
        \includegraphics[width=\textwidth]{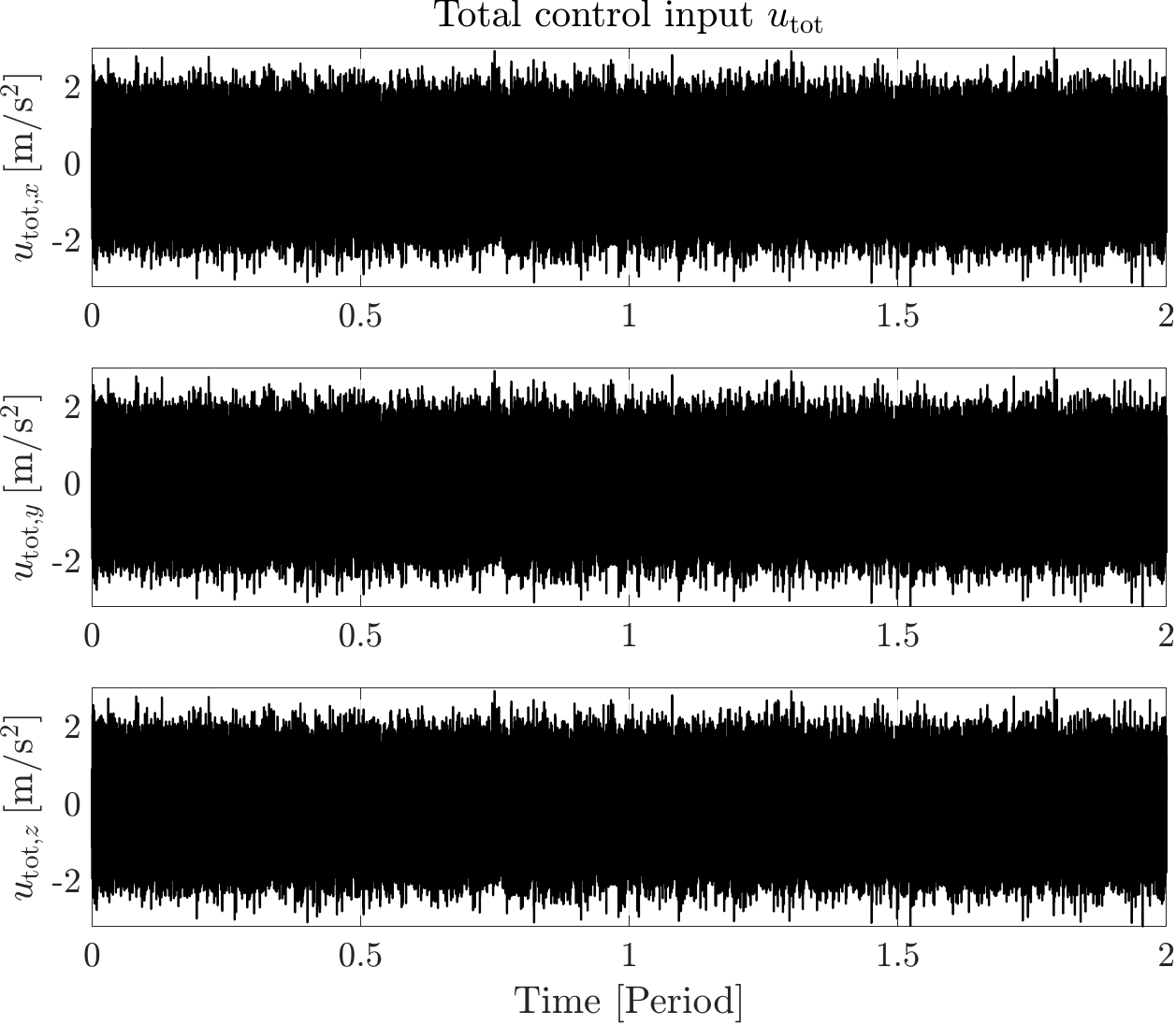}
        \caption{Proposed method.}
    \end{subfigure}
    \caption{Total control input (noisy conditions).}
    \label{fig:noisytotal}
\end{figure}

Under noise-free conditions, all three methods yield comparable tracking
performance; Levant's differentiator, however, exhibits the largest disturbance
estimation error among the three. Under navigation errors, both alternative
methods diverge and fail to produce bounded signals, whereas the proposed
surrogate derivative generator confines the tracking error within the prescribed
boundary layer. This robustness follows from the estimated-state-based
architecture combined with the first-order surrogate branch structure: navigation
error enters the closed-loop system only through the observed state, and the
second-order surfaces are evaluated from the surrogate branch, which operates
on smooth internally generated signals rather than on noisy measurements.

\bibliographystyle{jasr-model5-names}
\biboptions{authoryear}
\bibliography{references}
\end{document}